\newtheorem{theorem}{Theorem}
\newtheorem{lma}{Lemma}
\newtheorem{definition}{Definition}
\newtheorem{assumption}{Assumption}
\newcommand{\Ex}[1]{\mathop{\mathbb{E\/}}{\left[#1\right]}}
\newcommand{\Pb}[1]{\mathop{\mathbb{P\/}}{\left[#1\right]}}
\definecolor{darkgreen}{RGB}{32, 192, 32}
\definecolor{darkyellow}{RGB}{192, 140, 37}
\definecolor{darkred}{RGB}{192, 0, 0}
\newcommand{\cmark}{\textcolor{darkgreen}{\ding{51}}\xspace}
\newcommand{\xmark}{\textcolor{darkred}{\ding{55}}\xspace}
\newcommand\footnoteref[1]{\protected@xdef\@thefnmark{\ref{#1}}\@footnotemark}
\tikzset{
  every overlay node/.style={
    draw=none,fill=white,rounded corners,anchor=south west,
  },
}
\def\tikzoverlay{%
  \tikz[baseline,overlay]\node[every overlay node]
}%
\patchcmd{\@maketitle}{\LARGE \@title}{\vspace{-3em}\fontsize{12}{12}\selectfont\textbf{\@title}}{}{}
\titleformat*{\section}{\fontsize{13}{14}\bfseries}
\titleformat*{\subsection}{\fontsize{11}{11.5}\bfseries}
\begin{document}
\graphicspath{{figures/}}

\title{Practical Bounds on Optimal Caching with Variable Object Sizes}

\author{Daniel S. Berger\thanks{dsberger@cs.cmu.edu, Carnegie Mellon University, 5000 Forbes Ave, Pittsburgh, PA.}}
\author{Nathan Beckmann\thanks{beckmann@cs.cmu.edu}}
\author{Mor Harchol-Balter\thanks{harchol@cs.cmu.edu}}
\affil{Carnegie Mellon University}
\date{}

\maketitle
  \vspace{-3em}

\begin{abstract}
  Many recent caching systems aim to improve miss ratios, but there is no good sense among practitioners of how much further miss ratios can be improved.
In other words, should the systems community continue working on this problem?

Currently, there is no principled answer to this question.
In practice, object sizes often vary by several orders of magnitude, where computing the optimal miss ratio (OPT) is known to be NP-hard.
The few known results on caching with variable object sizes provide very weak bounds and are impractical to compute on traces of realistic length.

We propose a new method to compute upper and lower bounds on OPT.
Our key insight is to represent caching as a min-cost flow problem, hence we call our method the \emph{flow-based offline optimal} (FOO).
We prove that, under simple independence assumptions, FOO’s bounds become tight as the number of objects goes to infinity.
Indeed, FOO's error over 10M requests of production CDN and storage traces is negligible: at most 0.3\%.
FOO thus reveals, for the first time, the limits of caching with variable object sizes.

While FOO is very accurate, it is computationally impractical on traces with hundreds of millions of requests.
We therefore extend FOO to obtain more efficient bounds on OPT, which we call \emph{practical flow-based offline optimal} (PFOO).
We evaluate PFOO on several full production traces and use it to compare OPT to prior online policies.
This analysis shows that current caching systems are in fact still far from optimal, suffering 11--43\% more cache misses than OPT,
whereas the best prior offline bounds suggest that there is essentially no room for improvement.

\end{abstract}

{\footnotesize \noindent The full version of this paper appears in Proceedings of the ACM on Measurement and Analysis of Computing Systems as Article 32 in Volume 2, Issue 2, June 2018.
\url{https://doi.org/10.1145/3224427}.}



\section{Introduction}
\label{sec:intro}

Caches are pervasive in computer systems, and their miss ratio often determines end-to-end application performance. For example, content distribution networks (CDNs) depend on large, geo-distributed caching networks to serve user requests from nearby datacenters, and their response time degrades dramatically when the requested content is not cached nearby. 
Consequently, there has been a renewed focus on improving cache miss ratios, particularly for web services and content delivery~\cite{blankstein2017hyperbolic,berger2017adaptsize,neglia2017cache,cidon2016cliffhanger,maggs2015algorithmic,gast2015transient,li2015gd,einziger2014tinylfu,huang2013analysis}. These systems have demonstrated significant miss ratio improvements over least-recently used (LRU) caching, the de facto policy in most production systems. \emph{But how much further can miss ratios improve?} \emph{Should the systems community continue working on this problem, or have all achievable gains been exhausted?}

\subsection{The problem: Finding the offline optimal}
To answer these questions, one would like to know the best achievable miss ratio, free of constraints---i.e., the offline optimal (OPT).
Unfortunately, very little is known about OPT with variable object sizes.
For objects with equal sizes, computing OPT is simple (i.e., Belady~\cite{belady,mattson}), and it is widely used in the systems community to bound miss ratios.
But object sizes often vary widely in practice, from a few bytes (e.g., metadata~\cite{nishtala2013scaling}) to several gigabytes (e.g., videos~\cite{huang2013analysis,berger2017adaptsize}).
We need a way to compute OPT for variable object sizes,
but unfortunately this is known to be NP-hard~\cite{chorbak}.

There has been little work on bounding OPT with variable object sizes,
and all of this work gives very weak bounds.
On the theory side, prior work gives only three approximation algorithms~\cite{irani1997page,albers1999page,bar2001unified},
and the best approximation is only provably within a factor of 4 of OPT.
Hence, when this algorithm estimates a miss ratio of $0.4$, OPT may lie anywhere between $0.1$ and $0.4$.
This is a big range---in practice, a difference of $0.05$ in miss ratio is significant---, so bounds from prior theory are of limited practical value.
From a practical perspective, there is an even more serious problem with the theoretical bounds: they are simply too expensive to compute.
The best approximation takes 24 hours to process 500\,K requests and scales poorly (\autoref{sec:background}),
while production traces typically contain hundreds of millions of requests.

Since the theoretical bounds are incomputable,
practitioners have been forced to use conservative lower bounds or pessimistic upper bounds on OPT.
The only prior lower bound is an infinitely large cache~\cite{cidon2016cliffhanger,abrams1995,huang2013analysis},
which is very conservative and gives no sense of how OPT changes at different cache sizes.
Belady variants (e.g., Belady-Size in \autoref{sec:bg:practice}) are widely used as an upper bound~\cite{huang2013analysis,li2016popularity,hillmann2016simulation,shukla2016optimal},
despite offering no guarantees of optimality.
While these offline bounds are easy to compute,
we will show that they are in fact far from OPT.
They have thus given practitioners a false sense of complacency,
since existing online algorithms often achieve similar miss ratios to these weak offline upper bounds.

\subsection{Our approach: Flow-based offline optimal}

We propose a new approach to compute bounds on OPT with variable object sizes,
which we call the \emph{flow-based offline optimal (FOO)}.
The key insight behind FOO is to represent caching as a min-cost flow problem.
This formulation yields a lower bound on OPT by allowing non-integer decisions, i.e., letting the cache retain fractions of objects for a proportionally smaller reward.
It also yields an upper bound on OPT by ignoring all non-integer decisions.
Under simple independence assumptions, we prove that the non-integer decisions become negligible as the number of objects goes to infinity,
and thus the bounds are asymptotically tight.

Our proof is based on the observation that an optimal policy will strictly prefer some requests over others, forcing integer decisions.
We show such preferences apply to almost all requests by relating such preferences to the well-known coupon collector problem.

Indeed, FOO's error over 10M requests of five production CDN and web application traces is negligible: at most 0.15\%.
Even on storage traces, which have highly correlated requests that violate our proof's independence assumptions, FOO's error remains below 0.3\%.
FOO thus reveals, for the first time, the limits of caching with variable object sizes.

While FOO is very accurate, it is too computationally expensive to apply directly to production traces containing hundreds of millions of requests.
To extend our analysis to such traces, we develop more efficient upper and lower bounds on OPT, which we call \emph{practical flow-based offline optimal (PFOO)}.
PFOO enables the first analysis of optimal caching on traces with hundreds of millions of requests,
and reveals that there is still significant room to improve current caching systems.


\subsection{Summary of results}

PFOO yields nearly tight bounds, as shown in \autoref{fig:intro}.
This figure plots the miss ratio obtained on a production CDN trace with over 400 million requests for several techniques:
online algorithms (LRU, GDSF~\cite{cherkasova1998improving}, and AdaptSize~\cite{berger2017adaptsize}),
prior offline upper bounds (Belady and Belady-Size),
the only prior offline lower bound (Infinite-Cap),
and our new upper and lower bounds on OPT (PFOO).
The theoretical bounds, including both prior work and FOO, are too slow to run on traces this long.

\begin{figure}[h]
  \begin{minipage}{0.49\linewidth}
    \centering
    \includegraphics[width=\linewidth]{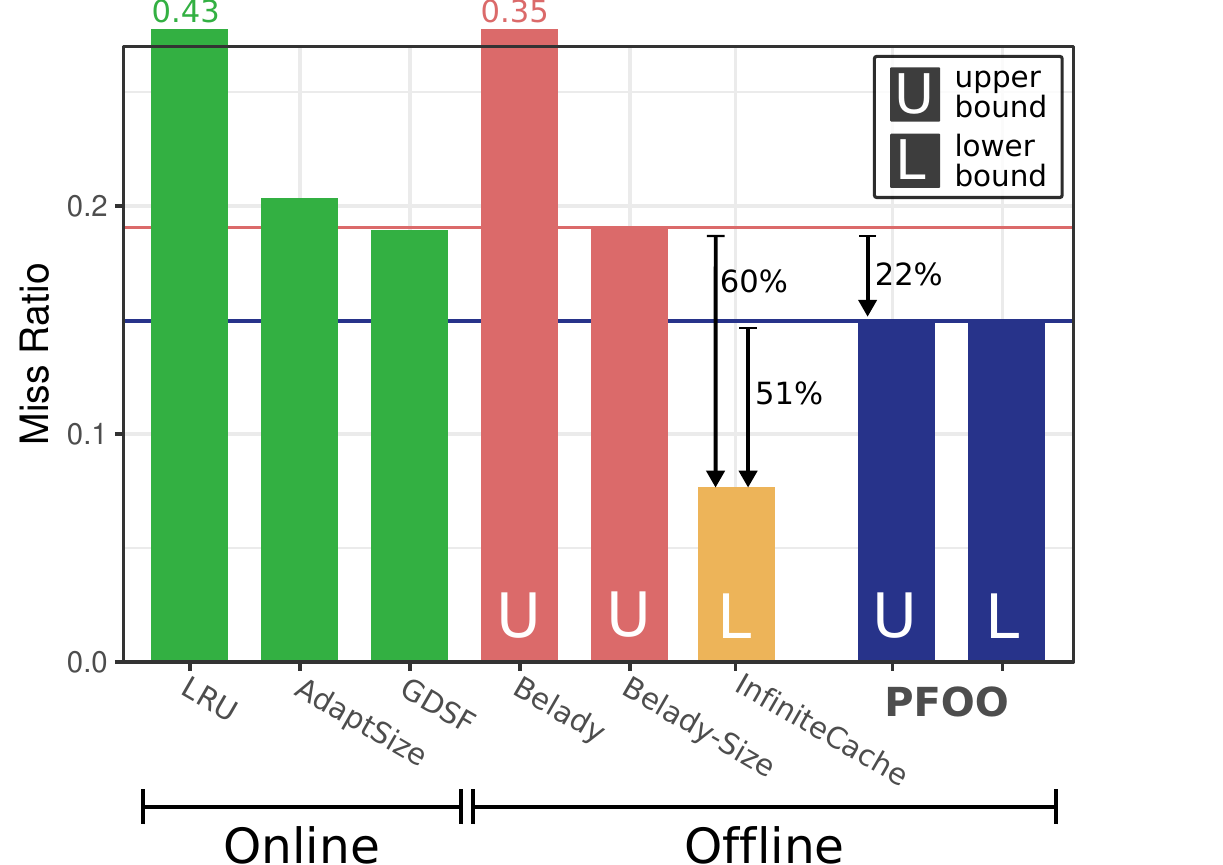}
  \end{minipage}
  \begin{minipage}{0.5\linewidth}
\caption{Miss ratios on a production CDN trace for a 4\,GB cache.
  Prior to our work, the best prior upper bound on OPT is within 1\% of online algorithms on this trace, leading to the false impression that there is no room for improvement.
  The only prior lower bound on OPT (Infinite-Cap) is 60\% lower than the best upper bound on OPT (Belady-Size).
  By contrast, PFOO provides nearly tight upper and lower bounds, which are 22\% below the online algorithms. PFOO thus shows that there is actually significant room for improving current caching algorithms.}
\label{fig:intro}
  \end{minipage}
\end{figure}

\autoref{fig:intro} illustrates the two problems with prior practical bounds and how PFOO improves upon them.
First, prior bounds on OPT are very weak.
The miss ratio of Infinite-Cap is 60\% lower than Belady-Size, whereas the gap between PFOO's upper and lower bounds is less than 1\%.
Second, prior bounds on OPT are misleading.
Comparing GDSF and Belady-Size, which are also within 1\%, one would conclude that online policies are nearly optimal.
In contrast, PFOO gives a much better upper bound and shows that there is in fact a 22\% gap between state-of-the-art online policies and OPT.

We evaluate FOO and PFOO extensively on eight production traces with hundreds of millions of requests across cache capacities from 16\,MB to 64\,GB.
On CDN traces from two large Internet companies, PFOO bounds OPT within 1.9\% relative error at most and 1.4\% on average.
On web application traces from two other large Internet companies, PFOO bounds OPT within 8.6\% relative error at most and 6.1\% on average.
On storage traces from Microsoft~\cite{snia}, where our proof assumptions do not hold, PFOO still bounds OPT within 11\% relative error at most and 5.7\% on average. 

\emph{PFOO thus gives nearly tight bounds on the offline optimal miss ratio for a wide range of real workloads.}
We find that PFOO achieves on average 19\% lower miss ratios than Belady, Belady-Size, and other obvious offline upper bounds, demonstrating the value of our min-cost flow formulation.

\subsection{Contributions}
In summary, this paper contributes the following:
\begin{itemize}[leftmargin=4mm]
\item
  We present \emph{flow-based offline optimal (FOO)}, a novel formulation
of offline caching as a min-cost flow problem (Sections~\ref{sec:foo} and~\ref{sec:overview}). FOO
exposes the underlying problem structure and enables our
remaining contributions.
\item
  We prove that FOO is asymptotically exact under simple independence assumptions,
  giving the first tight, polynomial-time bound on OPT with variable object sizes (\autoref{sec:proof}).
\item
  We present \emph{practical flow-based offline optimal (PFOO)},
  which relaxes FOO to give fast and nearly tight bounds on real traces with
  hundreds of millions of requests (\autoref{sec:practical}).
  PFOO gives the first reasonably tight lower bound on OPT on long traces.
\item
  We perform the first extensive evaluation of OPT with
  variable object sizes on eight production traces (Sections~\ref{sec:methodology} and~\ref{sec:evaluation}).
  In contrast to prior offline bounds,
  PFOO reveals that the gap between online policies and OPT is much larger than previously thought:
  27\% on average, and up to 43\% on web application traces.
\end{itemize}
Our implementations of FOO, PFOO, and of the previously unimplemented approximation algorithms are publicly available\footnote{\url{https://github.com/dasebe/optimalwebcaching}}.

\section{Background and Motivation}
\label{sec:background}

Little is known about how to efficiently compute OPT with variable object sizes.
On the theory side, the best known approximation algorithms give very weak bounds and are too expensive to compute.
On the practical side, system builders use offline heuristics that are much cheaper to compute, but give no guarantee that they are close to OPT.
This section surveys known theoretical results on OPT and the offline heuristics used in practice.

\subsection{Offline bounds are more robust than online bounds}\label{sec:bg:onlinevsopt}

\autoref{fig:intro} showed a 22\% gap between the best online policies and OPT.
One might wonder whether this gap arises from comparing an \emph{offline} policy with an \emph{online} policy,
rather than from any weakness in the online policies.
In other words, offline policies have an advantage because they know the future.
Perhaps this advantage explains the gap.

This raises the question of whether we could instead compute the \emph{online optimal} miss ratio.
Such a bound would be practically useful, since all systems necessarily use online policies.
Our answer is that we would like to, but unfortunately this is impossible.
To bound the online optimal miss ratio, one must make assumptions about what information an online policy can use to make caching decisions.
For example, traditional policies such as LRU and LFU make decisions using the history of past requests.
These policies are conceptually simple enough that prior work has proven online bounds on their performance in a variety of settings~\cite{aho1971principles,bahat2003optimal,o1999optimality,ferragut2016optimizing,ioannidis2016adaptive}.
However, real systems have quirks and odd behaviors that are hard to capture in tractable assumptions.
With enough effort, system designers can exploit these quirks to build online policies that \emph{outperform the online bounds}
and approach OPT's miss ratio for specific workloads.
For example, Hawkeye~\cite{jain2016back} recently demonstrated that a seemingly irrelevant request feature in computer architecture (the program counter) is in fact tightly correlated with OPT's decisions, allowing Hawkeye to mimic OPT on many programs.
Online bounds are thus inherently fragile
because real systems behave in strange ways, and
it is impossible to bound the ingenuity of system designers to discover and exploit these behaviors.




We instead focus on the offline optimal to get robust bounds on cache performance.
Offline bounds are widely used by practitioners,
especially when objects have the same size and OPT is simple to compute~\cite{belady,mattson}.
However, computing OPT with variable object sizes is strongly NP-complete~\cite{chorbak},
meaning that no fully polynomial-time approximation scheme (FPTAS) can exist.%
\footnote{That no FPTAS can exist follows from Corollary 8.6 in~\cite{vazirani2013approximation}, as OPT meets the assumptions of Theorem 8.5.}
We are therefore unlikely to find tight bounds on OPT for arbitrary traces.
Instead, our approach is to develop a technique that can estimate OPT on \emph{any} trace, which we call FOO,
and then show that FOO yields tight bounds on traces \emph{seen in practice}.
To do this, our approach is two-fold.
First, we prove that FOO gives tight bounds
on traces that obey the independent reference model (IRM),
the most common assumptions used in prior analysis of online policies%
~\cite{aho1971principles,king1971analysis,gelenbe1973unified,coffman1973operating,mccabe1965serial,burville1973model,hendricks1972stationary,dan1990,tsukada2012fluid,flajolet1992birthday,fill1996distribution,jelenkovic1999asymptotic,dobrow1995move,rodrigues1995performance,jelenkovic1999performance,jelenkovic2004least,panagakis2008approximate,psounis2004modeling,gallo2012performance,young2000line,o1999optimality,fricker2012versatile,martina13,berger2015maximizing,Berger20142,gast2015transient,starobinski2001probabilistic,jelenkovic2004optimizing}.
The IRM assumption is only needed for the proof; FOO is designed to work on any trace and does not itself rely on any properties of the IRM.
Second, we show empirically that FOO's error is less than 0.3\% on real-world traces, including several that grossly violate the IRM.
Together, we believe these results demonstrate that FOO is both theoretically well-founded and practically useful.

\subsection{OPT with variable object sizes is hard}

Since OPT is simple to compute for equal object sizes~\cite{belady,mattson},
it is surprising that OPT becomes NP-hard when object sizes vary.
Caching may seem similar to Bin-Packing or Knapsack,
which can be approximated well when there are only a few object sizes and costs.
But caching is quite different because the order of requests constrains OPT's choices in ways that are not captured by these problems or their variants.
In fact, the NP-hardness proof in~\cite{chorbak} is quite involved and reduces from Vertex Cover, not Knapsack.
Furthermore, OPT remains strongly NP-complete even with just three object sizes~\cite{folwarczny2017general},
and heuristics that work well on Knapsack perform badly in caching (see ``Freq/Size'' in \autoref{sec:bg:practice} and \autoref{sec:evaluation}).

\subsection{Prior bounds on OPT with variable object sizes are impractical}\label{sec:bg:offlinevarsize}

Prior work gives only three polynomial time bounds on OPT~\cite{albers1999page,bar2001unified,irani1997page}, which vary in time complexity and approximation guarantee.
\autoref{tbl:algos} summarizes these bounds by comparing their asymptotic run-time, how many requests can be calculated in practice (e.g., within 24\,hrs), and their approximation guarantee.

\begin{table}[h]
\centering
\begin{minipage}{.7\linewidth}
  \small
  \centering
  \begin{tabular}{cccc}
    \toprule
    \textbf{Technique} & \textbf{Time} & \textbf{Requests\,/\,24hrs} & \textbf{Approximation} \\
    \midrule
    OPT & NP-hard~\cite{chorbak} & <1\,K & 1\\
    LP rounding~\cite{albers1999page} & $\Omega(N^{5.6})$ & 50\,K & $O\!\left(\log\frac{\max_i\{ s_i\}}{\min_i\{ s_i\}}\right)$ \\
    LocalRatio~\cite{bar2001unified} & $O(N^3)$ & 500\,K & $4$ \\
    OFMA~\cite{irani1997page} & $O(N^2)$ & 28\,M & $O\!\left(\log C\right)$ \\
    \bf FOO\footnote{FOO's approximation guarantee holds under independence assumptions.} & ${O(N^{3/2})}$ & 28\,M & 1 \\
    \bf PFOO\footnote{PFOO does not have an approximation guarantee but its upper and lower bounds are within 6\% on average on production traces.} & ${O(N \log N)}$ & 250\,M & $\approx$1.06 \\
    \bottomrule
  \end{tabular}

\vspace{.5mm}
\footnotesize{Notation: $N$ is the trace length, $C$ is the cache capacity, and $s_i$ is the size of object $i$.}
\vspace{-1mm}

\end{minipage}
  \vspace{0.8em}
  \caption{Comparison of FOO and PFOO to prior bounds on OPT with variable object sizes.
    Computing OPT is NP-hard.
    Prior bounds~\cite{albers1999page,bar2001unified,irani1997page} provide only weak approximation guarantees, whereas FOO's bounds are tight.
    PFOO performs well empirically and can be calculated for hundreds of millions of requests.}
  \label{tbl:algos}
\end{table}

Albers et al.~\cite{albers1999page} propose an LP relaxation of OPT and a rounding scheme.
Unfortunately, the LP requires $N^2$ variables, which leads to a high $\Omega(N^{5.6})$-time complexity~\cite{koufogiannakis2014nearly}.
Not only is this running time high, but the approximation factor is logarithmic in the ratio of largest to smallest object (e.g., around 30 on production traces), making this approach impractical.

Bar et al.~\cite{bar2001unified} propose a general approximation framework (which we call \emph{LocalRatio}), which can be applied to the offline caching problem.
This algorithm gives the best known approximation guarantee, a factor of $4$.
Unfortunately, this is still a weak guarantee,
as we saw in \autoref{sec:intro}. 
Additionally, LocalRatio is a purely theoretical algorithm, with a high running time of $O(N^3)$, and which we believe had not been implemented prior to our work.
Our implementation of LocalRatio can calculate up to 500\,K requests in 24\,hrs, which is only a small fraction of the length of production traces.

Irani proposes the OFMA approximation algorithm~\cite{irani1997page}, which has $O(N^2)$ running time.
This running time is small enough for our implementation of OFMA to run on small traces (Section~\ref{sec:evaluation}).
Unfortunately, OFMA achieves a weak approximation guarantee, logarithmic in the cache capacity $C$,
and in fact OFMA does badly on our traces, giving much weaker bounds than simple Belady-inspired heuristics.

Hence, prior work that considers adversarial assumptions yields only weak approximation guarantees.
We therefore turn to stochastic assumptions to obtain tight bounds on the kinds of traces actually seen in practice.
Under independence assumptions, FOO achieves a tight approximation guarantee on OPT, unlike prior approximation algorithms,
and also has asymptotically better runtime, specifically $O\!\left(N^{3/2}\!\right)$.

We are not aware of any prior stochastic analysis of offline optimal caching.
%
In the context of \emph{online} caching policies,
there is an extensive body of work using stochastic assumptions similar to ours~\cite{king1971analysis,gelenbe1973unified,coffman1973operating,mccabe1965serial,burville1973model,hendricks1972stationary,dan1990,tsukada2012fluid,flajolet1992birthday,fill1996distribution,jelenkovic1999asymptotic,dobrow1995move,rodrigues1995performance,jelenkovic1999performance,jelenkovic2004least,panagakis2008approximate,psounis2004modeling,gallo2012performance,young2000line,o1999optimality,fricker2012versatile,martina13,berger2015maximizing,Berger20142,gast2015transient,starobinski2001probabilistic,jelenkovic2004optimizing,beckmann:hpca16:model},
of which five prove optimality results~\cite{aho1971principles,bahat2003optimal,o1999optimality,ferragut2016optimizing,ioannidis2016adaptive}.
Unfortunately, these results are only for objects with equal sizes,
and these policies perform poorly in our experiments 
because they do not account for object size.


\subsection{Heuristics used in practice to bound OPT give weak bounds}\label{sec:bg:practice}

Since the running times of prior approximation algorithms are too high for production traces,
practitioners have been forced to rely on heuristics that can be calculated more quickly.
However, these heuristics only give upper bounds on OPT, and there is no guarantee on how close to OPT they are.

The simplest offline upper bound is Belady's algorithm,
which evicts the object whose next use lies furthest in the future.%
\footnote{Belady's MIN algorithm actually operates somewhat differently.
  The algorithm commonly called ``Belady'' was invented (and proved to be optimal) by Mattson~\cite{mattson}.
  For a longer discussion, see~\cite{michaud:taco16:opt}.}
Belady is optimal in caching variants with equal object sizes~\cite{belady,mattson,gill2008multi,kallahalla2002pc}.
Even though it has no approximation guarantees for variable object sizes, it is still widely used in the systems community~\cite{huang2013analysis,li2016popularity,hillmann2016simulation,shukla2016optimal}.
However, as we saw in \autoref{fig:intro},
Belady performs very badly with variable object sizes and is easily outperformed by state-of-the-art online policies.

A straightforward size-aware extension of Belady is to evict the object with the highest cost = object size $\times$ next-use distance.
We call this variant \emph{Belady-Size}.
Among practitioners, Belady-Size is widely believed to perform near-optimally, but it has no guarantees.
It falls short on simple examples:
e.g., imagine that \textsf{A} is 4\,MB and is referenced 10 requests hence and never referenced again,
and \textsf{B} is 5\,MB and is referenced 9 and 12 requests hence.
With 5\,MB of cache space, the best choice between these objects is to keep \textsf{B},
getting two hits. 
But \textsf{A} has cost = 4 $\times$ 10 = 40,
and \textsf{B} has cost = 5 $\times$ 9 = 45,
so Belady-Size keeps \textsf{A} and gets only one hit.
In practice, Belady-Size often leads to poor decisions when a large
object is referenced twice in short succession: Belady-Size
will evict many other useful objects to make space for it, sacrificing
many future hits to gain just one.

Alternatively, one could use Knapsack heuristics as size-aware offline upper bounds, such as the density-ordered Knapsack heuristic, which is known to perform well on Knapsack in practice~\cite{du2013handbook}.
We call this heuristic \emph{Freq/Size}, as Freq/Size evicts the object with the lowest utility = frequency / size,
where frequency is the number of requests to the object.
Unfortunately, Freq/Size also falls short on simple examples:
e.g., imagine that \textsf{A} is 2\,MB and is referenced 10 requests hence,
and \textsf{B} is (as before) 5\,MB and is referenced 9 and 12 requests hence.
With 5\,MB of cache space, the best choice between these objects is to keep \textsf{B},
getting two hits.
But \textsf{A} has utility = 1 $\div$ 2 = 0.5,
and \textsf{B} has utility = 2 $\div$ 5 = 0.4,
so Freq/Size keeps \textsf{A} and gets only one hit.
In practice, Freq/Size often leads to poor decisions when an object is
referenced in bursts: Freq/Size will retain such objects long
after the burst has passed, wasting space that could have earned hits
from other objects.

Though these heuristics are easy to compute and intuitive, they give no approximation guarantees.
We will show that they are in fact far from OPT on real traces, and PFOO is a much better bound.

\section{Flow-based Offline Optimal}
\label{sec:foo}

This section gives a conceptual roadmap for our proof of FOO's optimality, which we present formally in Sections~\ref{sec:overview} and~\ref{sec:proof}.
Throughout this section we use a small request trace shown in \autoref{fig:foo:trace} as a running example.
This trace contains four objects,
\textcolor{darkyellow}{\textbf{a}},
\textcolor{darkred}{\textbf{b}},
\textcolor{blue}{\textbf{c}},
and \textcolor{darkgreen}{\textbf{d}}, with sizes 3, 1, 1, and 2, respectively.

\begin{figure}[h]
  \centering
  \begin{tabular}{c c}
    \toprule
    Object
    &
      \fontsize{11pt}{11pt}
      \bf
      \makebox[0.15in][l]{\textcolor{darkyellow}{a}}
      \makebox[0.15in][l]{\textcolor{darkred}{b}}
      \makebox[0.15in][l]{\textcolor{blue}{c}}
      \makebox[0.15in][l]{\textcolor{darkred}{b}}
      \makebox[0.15in][l]{\textcolor{darkgreen}{d}}
      \makebox[0.15in][l]{\textcolor{darkyellow}{a}}
      \makebox[0.15in][l]{\textcolor{blue}{c}}
      \makebox[0.15in][l]{\textcolor{darkgreen}{d}}
      \makebox[0.15in][l]{\textcolor{darkyellow}{a}}
      \makebox[0.15in][l]{\textcolor{darkred}{b}}
      \makebox[0.15in][l]{\textcolor{darkred}{b}}
      \makebox[0.05in][l]{\textcolor{darkyellow}{a}}
    \\
    Size
    &
      \makebox[0.155in][l]{3}
      \makebox[0.155in][l]{1}
      \makebox[0.155in][l]{1}
      \makebox[0.155in][l]{1}
      \makebox[0.155in][l]{2}
      \makebox[0.155in][l]{3}
      \makebox[0.155in][l]{1}
      \makebox[0.155in][l]{2}
      \makebox[0.155in][l]{3}
      \makebox[0.155in][l]{1}
      \makebox[0.155in][l]{1}
      \makebox[0.05in][l]{3}
    \\
    \bottomrule
  \end{tabular}
  \caption{Example trace of requests to objects \textcolor{darkyellow}{a}, \textcolor{darkred}{b}, \textcolor{blue}{c}, and \textcolor{darkgreen}{d}, of sizes 3, 1, 1, and 2, respectively.}
  \label{fig:foo:trace}
\end{figure}

First, we introduce a new integer linear program to represent OPT (\autoref{sec:foo:interval}).
After relaxing integrality constraints, we derive FOO's min-cost flow representation, which can be solved efficiently (\autoref{sec:foo:mcf}).
We then observe how FOO yields tight upper and lower bounds on OPT (\autoref{sec:foo:bounds}).
To prove that FOO's bounds are tight on real-world traces, we relate the gap between FOO's upper and lower bounds to the occurrence of a partial order on intervals,
and then reduce the partial order's occurrence to an instance of the generalized coupon collector problem (\autoref{sec:foo:proof}).

\subsection{Our new interval representation of OPT}\label{sec:foo:interval}

We start by introducing a novel representation of OPT.
Our integer linear program (ILP) minimizes the number of cache misses, while having full knowledge of the request trace.

We exploit a unique property of offline optimal caching: OPT never changes its decision to cache object $k$ in between two requests to $k$ (see \autoref{sec:overview}).
This naturally leads to an interval representation of OPT as shown in \autoref{fig:foo:ilp}.
While the classical representation of OPT uses decision variables to track the state of every object at every time step~\cite{albers1999page}, our ILP only keeps track of interval-level decisions.
Specifically, we use decision variables $x_i$ to indicate whether OPT caches the object requested at time $i$, or not.

\begin{figure}[h]
  \centering
  \resizebox{.65\linewidth}{!}{
  \begin{tabular}{p{8mm} c}
  \toprule
  Object
    &
      \fontsize{11pt}{11pt}{%
  \bf
  \makebox[0.22in][l]{\textcolor{darkyellow}{a}}
  \makebox[0.22in][l]{\textcolor{darkred}{b}}
  \makebox[0.22in][l]{\textcolor{blue}{c}}
  \makebox[0.22in][l]{\textcolor{darkred}{b}}
  \makebox[0.22in][l]{\textcolor{darkgreen}{d}}
  \makebox[0.22in][l]{\textcolor{darkyellow}{a}}
  \makebox[0.22in][l]{\textcolor{blue}{c}}
  \makebox[0.22in][l]{\textcolor{darkgreen}{d}}
  \makebox[0.22in][l]{\textcolor{darkyellow}{a}}
  \makebox[0.22in][l]{\textcolor{darkred}{b}}
  \makebox[0.22in][l]{\textcolor{darkred}{b}}
    \makebox[0.22in][l]{\textcolor{darkyellow}{a}}
    }
    \\
    \parbox[t]{8mm}{\multirow{6}{*}{\rotatebox[origin=c]{90}{
    \begin{minipage}[t]{21mm}
      Interval Decision Variables
    \end{minipage}
    \hspace{1mm}
    }}}
    &
    \\
    \\
    \\
    \\
    \\
    &  \tikzoverlay at (-38.5mm,-1.5mm){\includegraphics[width=0.51\textwidth]{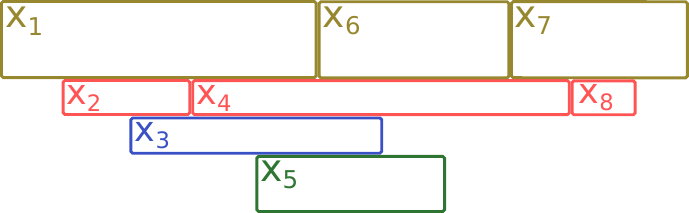}};
    \\
  \bottomrule
  \end{tabular}
  }
  
  \caption{Interval ILP representation of OPT.}
  \label{fig:foo:ilp}
  \vspace{-1em}
\end{figure}

\subsection{FOO's min-cost flow representation}\label{sec:foo:mcf}

This interval representation leads naturally to FOO's flow-based representation, shown in \autoref{fig:foo:mcf}.
We use the min cost flow notation shown in \autoref{tbl:mcf:notation}.
The key idea is to use flow to represent the interval decision variables.
Each request is represented by a node.
Each object's first request is a source of flow equal to the object's size,
and its last request is a sink of flow in the same amount.
This flow must be routed along intervening edges,
and hence min-cost flow must decide whether to cache the object throughout the trace.

\figFooMcf

For cached objects, there is a central path of black edges connecting all requests.
These edges have capacity equal to the cache capacity and cost zero (since cached objects lead to zero misses).
Min-cost flow will thus route as much flow as possible through this central path to avoid costly misses elsewhere~\cite{ahuja1993network}.

To represent cache misses, FOO adds outer edges between subsequent requests to the same object.
For example, there are three edges along the top of \autoref{fig:foo:mcf} connecting the requests to \textcolor{darkyellow}{\bf a}.
These edges have capacity equal to the object's size $s$
and cost inversely proportional to the object's size $1/s$.
Hence, if an object of size $s$ is not cached
(i.e., its flow $s$ is routed along this outer edge),
it will incur a cost of $s \times (1/s) = 1$ miss.

The routing of flow through this graph implies which objects are cached and when.
When no flow is routed along an outer edge,
this implies that the object is cached and the subsequent request is a hit.
All other requests, i.e., those with any flow routed along an outer edge, are misses.
The min-cost flow gives the decisions that minimize total misses.

\subsection{FOO yields upper and lower bounds on OPT}\label{sec:foo:bounds}

FOO can deviate from OPT as there is no guarantee that an object's flow will be entirely routed along its outer edge.
Thus, FOO allows the cache to keep fractions of objects, accounting for only a fractional miss on the next request to that object.
In a real system, each fractional miss would be a full miss.
This error is the price FOO pays for making the offline optimal computable.

To deal with fractional (non-integer) solutions, we consider two variants of FOO.
FOO-L keeps all non-integer solutions and is therefore a lower bound on OPT.
FOO-U considers all non-integer decisions as uncached, ``rounding up'' flow along outer edges, and is therefore an upper bound on OPT.
We will prove this in \autoref{sec:overview}.

\begin{figure}[h]
  \centering
  \begin{tabular}{c c}
  \toprule
  Object
  &
    \fontsize{11pt}{11pt}
  \bf
  \makebox[0.15in][l]{\textcolor{darkyellow}{a}}
  \makebox[0.15in][l]{\textcolor{darkred}{b}}
  \makebox[0.15in][l]{\textcolor{blue}{c}}
  \makebox[0.15in][l]{\textcolor{darkred}{b}}
  \makebox[0.15in][l]{\textcolor{darkgreen}{d}}
  \makebox[0.15in][l]{\textcolor{darkyellow}{a}}
  \makebox[0.15in][l]{\textcolor{blue}{c}}
  \makebox[0.15in][l]{\textcolor{darkgreen}{d}}
  \makebox[0.15in][l]{\textcolor{darkyellow}{a}}
  \makebox[0.15in][l]{\textcolor{darkred}{b}}
  \makebox[0.15in][l]{\textcolor{darkred}{b}}
  \makebox[0.05in][l]{\textcolor{darkyellow}{a}}
  \\
  OPT decision
  &
  \makebox[0.15in][l]{\xmark}
  \makebox[0.15in][l]{\cmark}
  \makebox[0.15in][l]{\cmark}
  \makebox[0.15in][l]{\cmark}
  \makebox[0.16in][l]{\xmark}
  \makebox[0.15in][l]{\xmark}
  \makebox[0.15in][l]{\cmark}
  \makebox[0.15in][l]{\xmark}
  \makebox[0.15in][l]{\xmark}
  \makebox[0.16in][l]{\cmark}
  \makebox[0.15in][l]{\xmark}
  \makebox[0.05in][l]{\xmark}
  \\
  FOO-L decision
  &
  \makebox[0.15in][l]{0}
  \makebox[0.15in][l]{1}
  \makebox[0.15in][l]{1}
  \makebox[0.15in][l]{1}
  \makebox[0.16in][l]{$\frac{1}{2}$}
  \makebox[0.15in][l]{0}
  \makebox[0.15in][l]{1}
  \makebox[0.15in][l]{$\frac{1}{2}$}
  \makebox[0.16in][l]{0}
  \makebox[0.15in][l]{1}
  \makebox[0.155in][l]{0}
  \makebox[0.05in][l]{0}
  \\
  FOO-U decision
  &
  \makebox[0.15in][l]{0}
  \makebox[0.15in][l]{1}
  \makebox[0.15in][l]{1}
  \makebox[0.15in][l]{1}
  \makebox[0.16in][l]{0}
  \makebox[0.15in][l]{0}
  \makebox[0.15in][l]{1}
  \makebox[0.15in][l]{0}
  \makebox[0.16in][l]{0}
  \makebox[0.15in][l]{1}
  \makebox[0.155in][l]{0}
  \makebox[0.05in][l]{0}
  \\
  \bottomrule
  \end{tabular}
  \caption{Caching decisions made by OPT, FOO-L, and FOO-U with a cache capacity of $C = 3$.}
  \label{fig:foo:decisions}
\end{figure}

\autoref{fig:foo:decisions} shows the caching decisions made by OPT, FOO-L, and FOO-U assuming a cache of size 3.
A ``\cmark'' indicates that OPT caches the object until its next request,
and a ``\xmark'' indicates it is not cached.
OPT suffers five misses on this trace by caching object \textcolor{darkred}{\bf b} and
either \textcolor{blue}{\bf c} or \textcolor{darkgreen}{\bf d}.
OPT caches \textcolor{darkred}{\bf b} because it is referenced thrice and is small.
This leaves space to cache the two references to either \textcolor{blue}{\bf c} or \textcolor{darkgreen}{\bf d}, but not both.
(OPT in \autoref{fig:foo:decisions} chooses to cache \textcolor{blue}{\bf c} since it requires less space.)
OPT does not cache \textcolor{darkyellow}{\bf a}
because it takes the full cache, forcing misses on all other requests.

%

The solutions found by FOO-L are very similar to OPT.
FOO-L decides to cache objects \textcolor{darkred}{\bf b} and \textcolor{blue}{\bf c},
matching OPT,
and also caches \emph{half} of \textcolor{darkgreen}{\bf d}.
FOO-L thus underestimates the misses by one, counting \textcolor{darkgreen}{\bf d}'s misses fractionally.
FOO-U gives an upper bound for OPT by counting \textcolor{darkgreen}{\bf d}'s misses fully.
In this example, FOO-U matches OPT exactly.

\subsection{Overview of our proof of FOO's optimality}\label{sec:foo:proof}
We show both theoretically and empirically that FOO-U and FOO-L yield tight bounds.
Specifically, we prove that FOO-L's solutions are almost always integer when there are many objects (as in production traces).
Thus, FOO-U and FOO-L coincide with OPT.

Our proof is based on a natural precedence relation between intervals such that an optimal policy strictly prefers some intervals over others.
For example, in \autoref{fig:foo:trace}, FOO will always prefer ${x_2}$ over $x_1$ and $x_8$ over $x_7$.
This can be seen in the figure, as interval $x_2$ fits entirely within $x_1$, and likewise $x_8$ fits within $x_7$.
In contrast, no such precedence relation exists between $x_6$ and $x_4$ because \textcolor{darkyellow}{\bf a} is larger than \textcolor{darkred}{\bf b}, and so $x_6$ does not fit within $x_4$.
Similarly, no precedence relation exists between $x_2$ and $x_5$ because, although $x_5$ is longer and larger, their intervals do not overlap, and so $x_2$ does not fit within $x_5$.

This precedence relation means that if FOO caches any part of $x_1$, then it must have cached all of $x_2$.
Likewise, if FOO caches any part of $x_7$, then it must have cached all of $x_8$.
The precedence relation thus forces integer solutions in FOO.
Although this relation is sparse in the small trace from \autoref{fig:foo:trace}, as one scales up the number of objects the precedence relation becomes dense.
Our challenge is to prove that this holds on traces seen in practice.

At the highest level, our proof distinguishes between ``typical'' and ``atypical'' objects.
Atypical objects are those that are exceptionally unpopular or exceptionally large; typical objects are everything else.
While the precedence relation may not hold for atypical objects, intervals from atypical objects are rare enough that they can be safely ignored.
We then show that for all the typical objects, the precedence relation is dense.
In fact, one only needs to consider precedence relations among \emph{cached} objects, as all other interval have zero decision variables.
The basic intuition behind our proof is that a popular cached object almost always takes precedence over another object.
Specifically, it will take precedence over one of the exceptionally large objects,
since the only way it could not is if \emph{all} of the exceptionally large objects were requested before it was requested again.
There are enough large objects to make this vanishingly unlikely.

This is an instance of the generalized \emph{coupon collector problem} (CCP).
In the CCP, one collects coupons (with replacement) from an urn with $k$ distinct types of coupons,
stopping once all $k$ types have been collected.
The classical CCP (where coupons are equally likely) is a well-studied problem~\cite{feller2008introduction}.
The generalized CCP, where coupons have non-uniform probabilities, is very challenging and the focus of recent work in probability theory~\cite{doumas2012coupon,moriarty2008generalized,zy2011coupon,anceaume2015new}.

Applying these recent results, we show that it is extremely unlikely that a popular object does not take precedence over any others.
Therefore, there are very few non-integer solutions among popular objects, which make up nearly all hits,
and the gap between FOO-U and FOO-L vanishes as the number of objects grows large.

\section{Formal Definition of FOO}
\label{sec:overview}

This section shows how to construct FOO and that FOO yields upper and lower bounds on OPT.
Section~\ref{sec:overview:notation} introduces our notation.
Section~\ref{sec:overview:newilp} defines our new interval representation of OPT.
Section~\ref{sec:overview:mcf} relaxes the integer constraints and proves that our min-cost flow representation yields upper and lower bounds on OPT.

\subsection{Notation and definitions}\label{sec:overview:notation}

The trace $\sigma$ consists of $N$ requests to $M$ distinct objects.
The $i$-th request $\sigma_i$ contains the corresponding object id, for all $i \in \{1 \dots N\}$.
We use $s_i$ to reference the size of the object $\sigma_i$ referenced in the $i$-th request.
We denote the $i$-th interval (e.g., in \autoref{fig:foo:ilp})
by $[i,\ell_i)$, where $\ell_i$ is the time of the next request to object $\sigma_i$ after time $i$,
  or $\infty$ if $\sigma_i$ is not requested again.

OPT minimizes the number of cache misses, while having full knowledge of the request trace.
OPT is constrained to only use cache capacity $C$ (bytes), and is \emph{not allowed to prefetch} objects as this would lead to trivial solutions (no misses)~\cite{albers1999page}.
Formally,
\begin{assumption}\label{ass:prefetching}
An object $k \in \{1 \dots M\}$ can only enter the cache at times $i \in \{1 \dots N\}$ with $\sigma_i = k$.
\end{assumption}

\subsection{New ILP representation of OPT}\label{sec:overview:newilp}

We start by formally stating our ILP formulation of OPT,
based on intervals as illustrated in \autoref{fig:foo:ilp}.
First, we define the set $I$ of all requests $i$ where $\sigma_i$ is requested again,
i.e., $I = \{ i : \ell_i < \infty \}$.
$I$ is the times when OPT must decide whether to cache an object.
For all $i \in I$, we associate a decision variable $x_i$.
This decision variable denotes whether object $\sigma_i$ is cached during the interval $[i, \ell_i)$.
Our ILP formulation needs only $N-M$ variables, vs.\ $N \times M$ for prior approaches~\cite{albers1999page},
and leads directly to our flow-based approximation.

\begin{definition}[\textbf{Definition of OPT}]\label{def:ilp}
  The interval representation of OPT for a trace of length $N$ with $M$ objects is as follows.
  \begin{align}
    \label{eq:ilp} \text{OPT} = \min & \sum_{i \in I} (1-x_i)\\
    \nonumber \text{subject to:} \quad &\\
    \label{eq:ilpcapacity}\sum_{j:j < i < \ell_j} s_j x_j \le C & \quad\quad\forall i \in I\\
    \label{eq:ilpintegrality} x_i \in \{0,1\} & \quad\quad\forall i \in I
  \end{align}
\end{definition}

To represent the capacity constraint at every time step $i$, our representation needs to find all intervals $[j,\ell_j)$ that intersect with $i$, i.e., where $j<i<\ell_j$.
Eq.~\eqref{eq:ilpcapacity} enforces the capacity constraint
by bounding the size of cached intervals to be less than the cache size $C$.
Eq.~\eqref{eq:ilpintegrality} ensures that decisions are integral,
i.e., that each interval is cached either fully or not at all.
\autoref{sec:proof:lma:opt-equivalence} proves that our interval ILP is equivalent to classic ILP formulations of OPT from prior work~\cite{albers1999page}.

Having formulated OPT with fewer decision variables, we could try to solve the LP relaxation of this specific ILP.
However, the capacity constraint, Eq.~\eqref{eq:ilpcapacity}, still poses a practical problem since finding the intersecting intervals is computationally expensive.
Additionally, the LP formulation does not exploit the underlying problem structure, which we need to bound the number of integer solutions.
We instead reformulate the problem as min-cost flow.

\subsection{FOO's min-cost flow representation of OPT}\label{sec:overview:mcf}

This section presents the relaxed version of OPT as an instance of min-cost flow (MCF) in a graph $G$.
We denote a surplus of flow at a node $i$ with $\beta_i > 0$, and a demand for flow with $\beta_i<0$.
Each edge $(i,j)$ in $G$ has a cost per unit flow $\gamma_{(i,j)}$ and a capacity for flow $\mu_{(i,j)}$ (see right-hand side of \autoref{fig:foo:mcf}).

As discussed in \autoref{sec:foo},
the key idea in our construction of an MCF instance is that each interval introduces an amount of flow equal to the object's size.
The graph $G$ is constructed such that this flow competes for a single sequence of edges (the ``inner edges'') with zero cost.
These ``inner edges'' represent the cache's capacity: if an object is stored in the cache, we incur zero cost (no misses).
As not all objects will fit into the cache, we introduce ``outer edges'', which allow MCF to satisfy the flow constraints.
However, these outer edges come at a cost: when the full flow of an object uses an outer edge we incur cost 1 (i.e., a miss).
Non-integer decision variables arise if part of an object is in the cache (flow along inner edges) and part is out of the cache (flow along outer edges).


Formally, we construct our MCF instance of OPT as follows:

\begin{definition}[\textbf{FOO's representation of OPT}]\label{def:mcfinstance}
  Given a trace with $N$ requests and $M$ objects, the MCF graph $G$ consists of $N$ nodes.
  For each request $i \in \{1 \dots N\}$ there is a node with supply/demand
  \begin{align}
    \beta_i =
    \begin{cases}
      s_i & \text{if } i \text{ is the first request to } \sigma_i \\
      -s_i & \text{if } i \text{ is the last request to } \sigma_i \\
      0 & \text{otherwise.}
    \end{cases}
  \end{align}
  
  An \textbf{inner edge} connects nodes $i$ and $i+1$. Inner edges have capacity $\mu_{(i,i+1)}=C$ and cost $\gamma_{(i,i+1)}=0$, for $i \in \{1 \dots N-1\}$.

  For all $i \in I$, an \textbf{outer edge} connects nodes $i$ and $\ell_i$.
  Outer edges have capacity $\mu_{(i,\ell_i)}=s_i$ and cost $\gamma_{(i,\ell_i)}=1/s_i$.
  We denote the flow through outer edge $(i,\ell_i)$ as $f_i$.

  \vspace{1mm}
\noindent
  \textbf{FOO-L} denotes the cost of an optimal feasible solution to the MCF graph $G$.
  \textbf{FOO-U} denotes the cost if all non-zero flows through outer edges $f_i$ are rounded up to the edge's capacity $s_i$.
\end{definition}

This representation yields a min-cost flow instance with $2N-M-1$ edges, which is solvable in ${O(N^{3/2})}$~\cite{daitch2008faster,becker2014,becker2013combinatorial}.
Note that while this paper focuses on optimizing miss ratio (i.e., the fault model~\cite{albers1999page}, where all misses have the same cost),
\autoref{def:mcfinstance} easily supports non-uniform miss costs by setting outer edge costs to $\gamma_{(i,\ell_i)} = \text{cost}_i / s_i$.
We next show how to derive upper and lower bounds from this min-cost flow representation.

\begin{theorem}[\textbf{FOO bounds OPT}]\label{lma:mcfcorrectness}
  For FOO-L and FOO-U from Definition~\ref{def:mcfinstance}, 

  \vspace{-3mm}
  \begin{equation}
    \fbox{\large$\text{FOO-L} \leq \text{OPT} \leq \text{FOO-U}$}
  \end{equation}
\end{theorem}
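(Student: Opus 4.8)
The plan is to prove the two inequalities separately, in both cases exploiting a single structural identity that ties the network to the interval ILP of \autoref{def:ilp}. For any feasible flow, write $x_i = 1 - f_i/s_i \in [0,1]$ for the fraction of object $\sigma_i$ carried on inner rather than outer edges. Then (i) the total cost of the flow equals $\sum_{i \in I}(1 - x_i)$, since outer edge $(i,\ell_i)$ contributes $\gamma_{(i,\ell_i)} f_i = f_i/s_i = 1 - x_i$; and (ii) the flow crossing the inner edge $(i,i+1)$ equals $\sum_{j:\, j \le i < \ell_j} s_j x_j$, the total volume of objects cached across that inter-request gap. I would establish (ii) by a cut argument: cutting $G$ between nodes $i$ and $i+1$, every edge is forward-oriented, so the inner-edge flow plus the straddling outer-edge flows must equal the net supply $\sum_{k \le i}\beta_k$ accumulated on the left, which telescopes to $\sum_{j:\, j \le i < \ell_j} s_j$ (exactly one active interval per live object). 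Subtracting the straddling outer flows $\sum_{j:\, j \le i < \ell_j} f_j$ yields the identity. Consequently the inner-edge capacities $\mu_{(i,i+1)} = C$ are precisely the cache-capacity constraints, which \autoref{sec:proof:lma:opt-equivalence} shows correspond to feasible caching schedules in \autoref{def:ilp}.

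For FOO-L $\le$ OPT, I would take an optimal integral solution $x^\star \in \{0,1\}^{|I|}$ of \autoref{def:ilp} and build a feasible flow of equal cost, so that FOO-L, being the minimum over \emph{all} feasible flows, can only be smaller. Concretely, route $f_i = s_i(1 - x_i^\star)$ on each outer edge and the complementary $s_i x_i^\star$ along the inner path of interval $[i,\ell_i)$. Flow conservation holds at every node (a short case check on first, last, and intermediate requests of each object), the outer capacities hold because $x_i^\star \in \{0,1\}$, and the inner capacities hold because by identity (ii) the inner flow equals the cached volume, which is $\le C$ since $x^\star$ is a feasible schedule. By (i) this flow costs exactly $\sum_{i\in I}(1 - x_i^\star) = \text{OPT}$.

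For OPT $\le$ FOO-U, I would start from an optimal FOO-L flow with induced fractions $x^{\mathrm L}$ and define the rounded assignment $\tilde x_i = 1$ if $f_i = 0$ and $\tilde x_i = 0$ otherwise; by definition FOO-U $= \sum_{i\in I}(1 - \tilde x_i)$. The key observation is that $\tilde x_i \le x_i^{\mathrm L}$ pointwise: $\tilde x_i = 1$ forces $f_i = 0$ and hence $x_i^{\mathrm L} = 1$, while $\tilde x_i = 0 \le x_i^{\mathrm L}$ always. Thus $\tilde x$ caches a sub-volume of $x^{\mathrm L}$ across every gap, so by identity (ii) every capacity constraint satisfied by $x^{\mathrm L}$ is inherited by $\tilde x$; together with integrality this makes $\tilde x$ feasible for \autoref{def:ilp}. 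Hence $\text{OPT} \le \sum_{i\in I}(1 - \tilde x_i) = \text{FOO-U}$.

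I expect the cut/identity step (ii) to be the main obstacle, since it is the only place where the combinatorial capacity constraint must be matched exactly to the network-flow capacities, and it requires care with the inter-request-gap indexing, i.e.\ which interval counts as ``active'' at the instant of a request. Once that identity is in hand, both inequalities reduce to elementary facts: a relaxation can only lower the optimum, and rounding a fractional schedule down to its fully-cached intervals preserves feasibility.
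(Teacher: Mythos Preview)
Your proposal is correct and follows essentially the same approach as the paper: both rest on the correspondence $x_i = 1 - f_i/s_i$, use it to identify the MCF cost with $\sum_{i\in I}(1-x_i)$, argue that the inner-edge capacities enforce the cache-capacity constraint (your identity~(ii)), and then conclude FOO-L~$\le$~OPT via relaxation and OPT~$\le$~FOO-U via integral feasibility of the rounded solution. Your write-up is in fact more careful than the paper's terse proof---the cut argument for~(ii) and the pointwise monotonicity $\tilde x_i \le x_i^{\mathrm L}$ to verify capacity feasibility after rounding are details the paper leaves implicit.
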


\begin{proof} 
  We observe that $f_i$ as defined in Definition~\ref{def:mcfinstance}, defines the number of bytes ``not stored'' in the cache.
  $f_i$ corresponds to the $i$-th decision variable $x_i$ from Definition~\ref{def:ilp} as $x_i = (1 - f_i / s_i)$.
  
  \vspace{1mm}
\noindent
  ($\text{FOO-L} \leq \text{OPT}$):
  FOO-L is a feasible solution for the LP relaxation of Definition~\ref{def:ilp}, because a total amount of flow $s_i$ needs to flow from node $i$ to node $\ell_i$ (by definition of $\beta_i$).
  At most $\mu_{(i,i+1)} = C$ flows uses an inner edge which enforces constraint Eq.~\eqref{eq:ilpcapacity}.
  FOO-L is an optimal solution because it minimizes the total cost of flow along outer edges.
  Each outer edge's cost is $\gamma_{(i,\ell_i)} = 1/s_i$, so $\gamma_{(i,\ell_i)} f_i = (1 - x_i)$, and thus
  \begin{equation}
    \text{FOO-L} = \min \big\{ \sum_{i \in I} \gamma_{(i,\ell_i)} f_i \big\} = \min \big\{ \sum_{i\in I} (1 - x_i) \big\} \leq \text{OPT}
  \end{equation}

  \vspace{1mm}
\noindent
($\text{OPT} \leq \text{FOO-U}$):
After rounding, each outer edge $(i, \ell_i)$ has flow $f_i \in \{0, s_i\}$, so the corresponding decision variable $x_i \in \{0, 1\}$.
FOO-U thus yields a feasible integer solution, and OPT yields no more misses than any feasible solution.
\end{proof}

\section{FOO is Asymptotically Optimal}
\label{sec:proof}

This section proves that FOO is asymptotically optimal, namely that the gap between FOO-U and FOO-L vanishes as the number of objects grows large.
\autoref{sec:opt:main} formally states this result and our assumptions, and Sections~\ref{sec:opt:precgraph}--\ref{sec:opt:finalproof} present the proof.


\subsection{Main result and assumptions}\label{sec:opt:main} 

Our proof of FOO's optimality relies on two assumptions:
\emph{(i)}~that the trace is created by stochastically independent request processes
and \emph{(ii)}~that the popularity distribution is not concentrated on a finite set of objects as the number of objects grows.

\begin{assumption}[\textbf{Independence}]\label{ass:localirm}
  The request sequence is generated by independently sampling from a popularity distribution $\mathcal{P}^M$.
  Object sizes are sampled from an arbitrary continuous size distribution $\mathcal{S}$, which is independent of $M$ and has a finite $\max_i s_i$.
\end{assumption}
We assume that object sizes are unique to break ties when making caching decisions.
If the object sizes are not unique, one can simply add small amounts of noise to make them so.
We assume a maximum object size to show the existence of a scaling regime, i.e., that the number of cached objects grows large as the cache grows large.
For the same reason, we exclude trivial cases where a finite set of objects dominates the request sequence even as the total universe of objects grows large:

\begin{assumption}[\textbf{Diverging popularity distribution}]\label{ass:inf}
  For any number $M>0$ of objects, the popularity distribution $\mathcal{P}^M$ is defined via an infinite sequence $\psi_k$.
  At any time $1 \leq i \leq N$,
  \begin{align}
    \Pb{\text{object }k\text{ is requested} \;\vert\; M\text{ objects overall}} = \frac{\psi_k}{\sum_{k=1}^M \psi_k}
  \end{align}
  The sequence $\psi_k$ must be positive and diverging such that cache size $C \rightarrow \infty$ is required to achieve a constant miss ratio as $M \rightarrow \infty$.
\end{assumption}
Our assumptions on $\mathcal{P}^M$ allow for many common distributions, such as uniform popularities ($\psi_k$ = 1) or heavy-tailed Zipfian probabilities ($\psi_k = 1/k^\alpha$ for $\alpha \leq 1$, as is common in practice~\cite{breslau1999web,huang2013analysis,sitaraman2014overlay,maggs2015algorithmic,berger2017adaptsize}).
Moreover, with some change to notation, our proofs can be extended to require only that $\psi_k$ remains constant over short timeframes.
With these assumptions in place, we are now ready to state our main result on FOO's asymptotic optimality.

\begin{theorem}[\textbf{FOO is Asymptotically Optimal}]\label{thm:exactness}
  Under Assumptions~\ref{ass:localirm} and~\ref{ass:inf}, for any error $\varepsilon$ and violation probability $\kappa$, there exists an $M^*$ such that for any trace with $M>M^*$ objects
  \begin{align}
    \Pb{\text{FOO-U} - \text{FOO-L} \geq  \varepsilon \; N } \leq \kappa
  \end{align}
    where the trace length $N\geq M \log^2 M$ and the cache capacity $C$ is scaled with $M$ such that FOO-L's miss ratio remains constant.
\end{theorem}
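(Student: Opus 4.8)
The plan is to reduce the theorem to a purely combinatorial count of \emph{fractional intervals} and then control that count probabilistically through the coupon-collector connection sketched in \autoref{sec:foo:proof}. First I would make the gap explicit. Writing $x_i = 1 - f_i/s_i$ as in the proof of \autoref{lma:mcfcorrectness}, an interval contributes to $\text{FOO-U}-\text{FOO-L}$ only when $0 < x_i < 1$, and then its contribution is exactly $x_i < 1$. Hence $\text{FOO-U}-\text{FOO-L} = \sum_{i : 0 < x_i < 1} x_i \le |F|$, where $F = \{\, i \in I : 0 < x_i < 1 \,\}$ is the set of fractional intervals in a chosen optimal FOO-L solution. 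So it suffices to prove $\Pb{|F| \ge \varepsilon N} \le \kappa$ for large $M$, and by Markov's inequality it is enough to show $\Ex{|F|} \le \kappa\varepsilon N$ for $M>M^*$.

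Next I would invoke the structural precedence lemma (the content of the precedence-graph section, cf.\ \autoref{fig:precproof}): whenever $i \prec j$, one may reroute flow in an optimal solution so that $x_j > 0$ forces $x_i = 1$. Consequently every $i \in F$ precedes \emph{no} cached interval. I would then split objects into \emph{typical} and \emph{atypical}, calling an object atypical if it is exceptionally unpopular (popularity below a threshold) or exceptionally large (size above a threshold). Using Assumption~\ref{ass:inf}, which guarantees the popularity mass is not concentrated on finitely many objects, together with Assumption~\ref{ass:localirm}, which fixes a continuous bounded size distribution, I would show the expected number of requests to atypical objects is at most $\tfrac12\kappa\varepsilon N$. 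These intervals can then be discarded, leaving only fractional \emph{typical cached} intervals to bound by $\tfrac12\kappa\varepsilon N$.

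For a typical cached request $i$ with interval $[i,\ell_i)$ I would bound $\Pb{i \in F}$. By the precedence lemma, $i \in F$ forces $i$ to precede no cached interval; in particular it forces every sufficiently large cached object to be requested somewhere inside $[i,\ell_i)$, since otherwise such an object would have an interval spanning $[i,\ell_i)$ and would dominate $i$. This ``collect all large-object coupons inside the interval'' event is precisely a generalized coupon-collector event over the $k = k(M) \to \infty$ large objects, where the number of draws is the interval length $\ell_i - i$. Because $i$ is popular, $\ell_i - i$ concentrates well below the collection threshold $\Theta(k\log k)$ — this is why the trace length is taken to be $N \ge M\log^2 M$ and $C$ is scaled so that $k \to \infty$ — and the recent generalized-CCP estimates give $\Pb{i \in F}\to 0$ uniformly over typical $i$. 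Summing by linearity, $\Ex{|F \cap \text{typical}|} = \sum_{i \text{ typical}} \Pb{i \in F} \le N \cdot \max_i \Pb{i \in F}$, which drops below $\tfrac12\kappa\varepsilon N$ once $M>M^*$; combined with the atypical bound and Markov's inequality this yields the claim.

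The step I expect to be the main obstacle is the coupon-collector reduction. I must simultaneously (i) show that the interval length of a \emph{popular} object stays genuinely sub-threshold even after conditioning, while the number of large \emph{cached} coupons still diverges in the scaling regime where $C\to\infty$; (ii) control the dependence between $\ell_i - i$ and the collection event, since both are functions of the same request stream, though the independence in Assumption~\ref{ass:localirm} should make this tractable; and (iii) ensure that the dominating large interval is actually \emph{cached}, so that the forcing lemma applies — which is exactly why the whole argument is carried out among cached intervals. Making the two thresholds (popular/unpopular and small/large) consistent with the joint scaling of $C$, $N$, and $M$ is the delicate bookkeeping that ties these pieces together.
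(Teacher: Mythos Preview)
Your proposal is correct and follows essentially the same route as the paper: bound $\text{FOO-U}-\text{FOO-L}$ by the count $|F|$ of fractional intervals, apply Markov's inequality to reduce to $\Ex{|F|}=\sum_i\Pb{0<x_i<1}$, split each summand via a typical/atypical dichotomy (large size or low popularity), bound the atypical mass directly by choice of thresholds, and handle the typical case through the precedence lemma plus the coupon-collector comparison (\autoref{lma:stochastic:bound}, \autoref{lma:couponinterpretation}, \autoref{lma:fractionprobability}). Your list of obstacles (i)--(iii) matches exactly what the paper isolates in \autoref{lma:hi:infinite} and \autoref{lma:ccpbound}; in particular, point (iii) is handled in the paper by defining $B_i$ from the outset as a subset of the cached set $H_i$, so the ``dominating interval is cached'' issue is absorbed into the definition rather than argued separately.
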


\autoref{thm:exactness} states that, as $M \rightarrow \infty$, FOO's \emph{miss ratio error} is almost surely less than $\varepsilon$
for any $\varepsilon > 0$.
Since FOO-L and FOO-U bound OPT (\autoref{lma:mcfcorrectness}),
  $\text{FOO-L} = \text{OPT} = \text{FOO-U}$.

The rest of this section is dedicated to the proof \autoref{thm:exactness}.
The key idea in our proof is to bound the number of non-integer solutions in FOO-L via a precedence relation that forces FOO-L to strictly prefer some decision variables over others, which forces them to be integer.
\autoref{sec:opt:precgraph} introduces this precedence relation.
\autoref{sec:opt:coupons} maps this relation to a representation that can be stochastically analyzed (as a variant of the coupon collector problem).
\autoref{sec:opt:stochastic} then shows that almost all decision variables are part of a precedence relation and thus integer, and \autoref{sec:opt:finalproof} brings all these parts together in the proof of \autoref{thm:exactness}.

\subsection{Bounding the number of non-integer solutions using a precedence graph}\label{sec:opt:precgraph}

This section introduces the precedence relation $\prec$ between caching intervals.
The intuition behind $\prec$ is that if an interval $i$ is nested entirely within interval $j$,
then min-cost flow must prefer $i$ over $j$.
We first formally define $\prec$, and then state the property about optimal policies in \autoref{thm:precrelation}.

\begin{definition}[\textbf{Precedence relation}]\label{def:precrelation}
  For two caching intervals $[i,\ell_i)$ and $[j,\ell_j)$, let the relation $\prec$ be such that $i \prec j$ (``$i$ takes precedence over $j$'') if
\begin{enumerate}
\item $j<i$,
\item $\ell_j > \ell_i$, and
\item $s_i < s_j$.
\end{enumerate}
\end{definition}

The key property of $\prec$ is that it forces integer decision variables.

\begin{theorem}[\textbf{Precedence forces integer decisions}]\label{thm:precrelation}
  If $i \prec j$, then $x_j > 0$ in FOO-L's min-cost flow solution implies $x_i = 1$.
\end{theorem}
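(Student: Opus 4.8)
The plan is to argue by contradiction with a standard min-cost-flow exchange argument: if the implication failed at an optimal solution, I would exhibit a feasible rerouting of flow that strictly lowers the cost, contradicting the optimality of FOO-L established in \autoref{lma:mcfcorrectness}. Recall from that proof that $x_i = 1 - f_i/s_i$, so the statement ``$x_j > 0$ and $x_i < 1$'' is exactly ``$f_j < s_j$ and $f_i > 0$''; I would assume both hold and derive a contradiction.

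First I would set up the rerouting. Let $\delta = \min\{f_i,\, s_j - f_j\} > 0$, which is positive by the two assumptions. I push $\delta$ units of object-$i$ flow off its outer edge $(i,\ell_i)$ onto the inner path $i \to i+1 \to \cdots \to \ell_i$ (decreasing $f_i$ by $\delta$), while simultaneously pushing $\delta$ units of object-$j$ flow off the inner path $j \to \cdots \to \ell_j$ onto its outer edge $(j,\ell_j)$ (increasing $f_j$ by $\delta$). Equivalently, this is a single circulation in the residual graph: the reverse outer-$i$ edge $\ell_i \to i$ of cost $-1/s_i$, reverse inner edges carrying flow back from $i$ to $j$ at cost $0$, the forward outer-$j$ edge $j \to \ell_j$ of cost $+1/s_j$, and reverse inner edges from $\ell_j$ back to $\ell_i$ at cost $0$. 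This is precisely the rerouting sketched in \autoref{fig:precproof}.

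Next I would verify feasibility, which is where all three conditions of \autoref{def:precrelation} enter. Conditions $j<i$ and $\ell_j>\ell_i$ give the nesting $[i,\ell_i) \subseteq [j,\ell_j)$, so the $\delta$ added to the inner edges on the overlap $[i,\ell_i)$ is exactly cancelled by the $\delta$ that object $j$ vacates there; the net change on each inner edge is therefore $0$ on the overlap and $-\delta$ on $[j,i)$ and $[\ell_i,\ell_j)$, so the capacity bound $C$ is never exceeded. Flow conservation holds at every node because the operation is a circulation, and the required residual edges all exist: $f_i>0$ supplies the reverse outer-$i$ edge, $f_j<s_j$ supplies the forward outer-$j$ edge, and object $j$'s own inner flow $s_j - f_j > 0$, present throughout $[j,\ell_j)$, supplies the reverse inner edges on $[j,i)$ and $[\ell_i,\ell_j)$.

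Finally I would compute the cost change. Only the two outer edges carry cost, so the net change is $-\delta/s_i + \delta/s_j = \delta\,(s_i - s_j)/(s_i s_j)$, which is strictly negative exactly because condition $s_i < s_j$ holds. This contradicts the optimality of FOO-L, so $f_i > 0$ and $f_j < s_j$ cannot both occur; hence $x_j > 0$ forces $f_i = 0$, i.e.\ $x_i = 1$. I expect the feasibility check, not the cost computation, to be the main obstacle: one must confirm that the rerouting simultaneously respects flow conservation and the inner-edge capacity $C$, and it is precisely the nesting from conditions 1 and 2 that guarantees the freed capacity from $j$ covers the capacity needed for $i$ on the overlap, while condition 3 is what makes the exchange profitable.
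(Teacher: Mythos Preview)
Your proof is correct and takes essentially the same approach as the paper: both argue by contradiction, exhibiting the same negative-cost cycle in the residual graph (forward outer edge $(j,\ell_j)$, reverse inner edges to $\ell_i$, reverse outer edge $(\ell_i,i)$, reverse inner edges back to $j$), and conclude via the standard no-negative-cycle optimality criterion for min-cost flow. Your version is somewhat more explicit than the paper's in that you name the circulation amount $\delta=\min\{f_i,\,s_j-f_j\}$ and spell out the inner-edge feasibility check, whereas the paper simply notes that $x_j>0$ guarantees the needed reverse inner edges and then cites the negative-cycle theorem directly.
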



In other words, if interval $i$ is strictly preferable to interval $j$, then FOO-L will take all of $i$ before taking any of $j$.
The proof of this result relies on the notion of a residual MCF graph~\cite[p.304~ff]{ahuja1993network}, where for any edge $(i,j) \in G$ with positive flow, we add a backwards edge $(j,i)$ with cost $\gamma_{j,i} = - \gamma_{i,j}$.

\begin{proof}
  By contradiction.
  Let $G'$ be the residual MCF graph induced by a given MCF solution.
  \autoref{fig:precproof} sketches the MCF graph in the neighborhood of $j,\dots,i,\dots,\ell_i,\dots,\ell_j$.
  
\figprecproof

Assume that $x_j>0$ and that $x_i < 1$, as otherwise the statement is trivially true.
Because $x_j>0$ there exist backwards inner edges all the way between $\ell_j$ and $j$.
Because $x_i < 1$, the MCF solution must include some flow on the outer edge $(i,\ell_i)$,
  and there exists a backwards outer edge $(\ell_i,i) \in G'$ with cost $\gamma_{\ell_i,i} = -1/s_i$.
%

  
  We can use the backwards edges to create a cycle in $G'$, starting at $j$, then following edge $(j,\ell_j)$, backwards inner edges to $\ell_i$, the backwards outer edge $(\ell_i,i)$, and finally backwards inner edges to return to $j$.
  \autoref{fig:precproof} highlights this clockwise path in darker colored edges.

  This cycle has cost $= 1/s_j -1/s_i$, which is negative because $s_i < s_j$ (by definition of $\prec$ and since $i \prec j$).
  As negative-cost cycles cannot exist in a MCF solution~\cite[Theorem 9.1, p.307]{ahuja1993network}, this leads to a contradiction.
\end{proof}

To quantify how many integer solutions there are, we need
to know the general structure of the precedence graph.
Intuitively, for large production traces with many overlapping intervals,
the graph will be very dense.
We have empirically verified this for our production traces.

Unfortunately, the combinatorial nature of caching traces made it difficult
for us to characterize the general structure of the precedence graph under stochastic assumptions.
For example, we considered classical results on random graphs~\cite{karrer2009random} and the concentration of measure in random partial orders~\cite{bollobas1992height}.
None of these paths yielded sufficiently tight bounds on FOO.
Instead, we bound the number of non-integer solutions via the generalized coupon collector problem.

\subsection{Relating the precedence graph to the coupon collector problem}\label{sec:opt:coupons}


We now translate the problem of intervals without child in the precedence graph (i.e., intervals $i$ for which there exists no $i \prec j$) into a tractable stochastic representation.

We first describe the intuition for equal object sizes, and then consider variable object sizes.

\begin{definition}[\textbf{Cached objects}]
  Let $H_i$ denote the set of cached intervals that overlap time $i$, excluding $i$.
  \begin{align}
    H_i = \big\{j\neq i : \;\;x_j>0 \text{ and } i\in [j,\ell_j)   \big\} \quad \quad \text{and} \quad \quad h_i = \vert H_i \vert
  \end{align}
\end{definition}

\begin{figure}[b]
  \hspace{1mm}
  \begin{minipage}{0.24\linewidth}
    \centering
    \includegraphics[width=.9\linewidth]{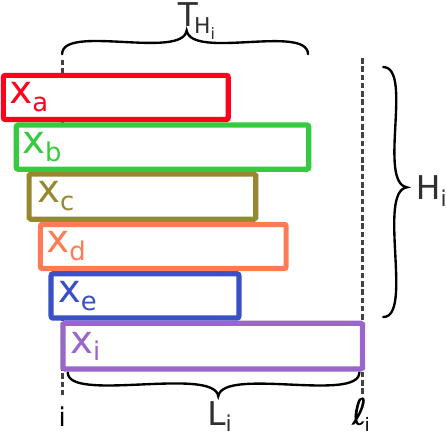}
  \end{minipage}
  \hfill
  \begin{minipage}{0.73\linewidth}
    \caption{Simplified notation for the coupon collector representation of offline caching with equal object sizes.
      We translate the precedence relation from \autoref{thm:precrelation} into the relation between two random variables.
      $L_i$ denotes the length of interval $i$.
      $T_{H_i}$ is the coupon collector time, where we wait until all objects that are cached at the beginning of $L_i$ ($H_i$ denotes these objects) are requested at least once.}
    \label{fig:coupon:nochild:equalsize}
  \end{minipage}
\end{figure}

We observe that interval $[i,\ell_i)$ is without child if and only if \emph{all other objects} $x_{j} \in H_i$ are \emph{requested at least once} in $[i,\ell_i)$.
\autoref{fig:coupon:nochild:equalsize} shows an example where $H_i$ consists of five objects (intervals $x_a,\dots,x_e$).
As all five objects are requested before $\ell_i$, all five intervals end before $x_i$ ends, and so $[i, \ell_i)$ cannot fit in any of them.
To formalize this observation, we introduce the following random variables, also illustrated in \autoref{fig:coupon:nochild:equalsize}.
$L_i$ is the length of the $i$-th interval, i.e., $L_i = \ell_i - i$.
$T_{H_i}$ is the time after $i$ when all intervals in $H_i$ end.
We observe that $T_{H_i}$ is the stopping time in a \emph{coupon-collector problem} (CCP) where we associate a coupon type with every object in $H_i$.
With equal object sizes, the event $\{i \text{ has a child} \}$ is equivalent to the event $\{T_{H_i} > L_i\}$.

We now extend our intuition to the case of variable object sizes.
We now need to consider that objects in $H_i$ can be smaller than $s_i$ and thus may not be $i$'s children for a new reason: the precedence relation (\autoref{def:precrelation}) requires $i$'s children to have size larger than or equal to $s_i$.
\autoref{fig:coupon:nochild:varsize} shows an example where $L_i$ is without child because \emph{(i)}~$x_b$, which ends after $\ell_i$, is smaller than $s_i$, and \emph{(ii)}~all larger objects ($x_a, x_c, x_d$) are requested before $\ell_i$.
The important conclusion is that, by ignoring the smaller objects, we can reduce the problem back to the CCP.

\begin{figure}[h]
  \hspace{1mm}
  \begin{minipage}{0.24\linewidth}
    \centering
    \includegraphics[width=.9\linewidth]{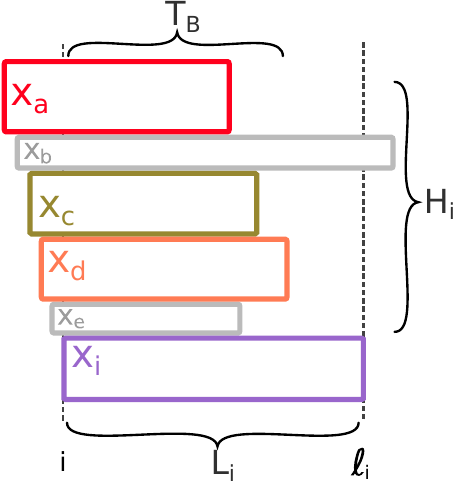}
  \end{minipage}
  \hfill
  \begin{minipage}{0.73\linewidth}
    \caption{Full notation for the coupon collector representation of offline caching.
      With variable object sizes, we need to ignore all objects with a smaller size than $s_i$ (greyed out intervals $x_b$ and $x_e$).
      We then define the coupon collector time $T_{B}$ among a subset $B \subset H_i$ of cached objects with a size larger than or equal to $s_i$.
      Using this notation, the event $T_{B} > L_i$ implies that $x_i$ has a child, which forces $x_i$ to be integer by \autoref{thm:precrelation}.}
    \label{fig:coupon:nochild:varsize}
  \end{minipage}
\end{figure}

To formalize our observation about the relation to the CCP, we introduce the following random variables, also illustrated in \autoref{fig:coupon:nochild:varsize}.
We define $B$, which is a subset of the cached objects $H_i$ with a size equal to or larger than $s_i$, and the coupon collector time $T_{B}$ for $B$-objects.
These definitions are useful as the event $\{T_{B} > L_i\}$ implies that $i$ has a child and thus $x_i$ is integer, as we now show.

\begin{theorem}[\textbf{Stochastic bound on non-integer variables}]\label{lma:stochastic:bound}
  For decision variable $x_i$, $i \in \{1 \dots N\}$, assume that $B \subseteq H_i$ is a subset of cached objects where $s_j \geq s_i$ for all $j \in B$.
  Further, let the random variable $T_{B}$ denote the time until all intervals in $B$ end, i.e.,
    $T_{B} = \max_{j \in B}\ell_j-i$.

  If $B$ is non-empty, then the probability that $x_i$ is non-integer is upper bounded by the probability interval $i$ ends after all intervals in $B$, i.e.,
  \begin{align}
    \Pb{0 < x_i < 1} \leq \Pb{L_i > T_{B}}~~~~.
  \end{align}
\end{theorem}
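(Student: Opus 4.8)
The plan is to prove the contrapositive: I will show that whenever $T_{B} > L_i$, the decision variable $x_i$ is forced to equal $1$, hence is integer. This gives the event inclusion $\{0 < x_i < 1\} \subseteq \{T_{B} \leq L_i\}$, from which the probability bound follows immediately. The whole argument is essentially a translation of the precedence definition into the coupon-collector random variables, combined with \autoref{thm:precrelation}.

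First I would unpack the inequality. Since $T_{B} = \max_{j \in B}\ell_j - i$ and $L_i = \ell_i - i$, the event $\{T_{B} > L_i\}$ is exactly $\{\max_{j \in B}\ell_j > \ell_i\}$; because $B$ is non-empty by hypothesis, the maximum is attained at some index $j \in B$, whose interval $[j,\ell_j)$ ends strictly after $[i,\ell_i)$. Fixing this $j$, I then verify that $i \prec j$ in the sense of \autoref{def:precrelation}: condition (1) $j < i$ holds because $j \in B \subseteq H_i$ gives $i \in [j,\ell_j)$ with $j \neq i$, forcing $j < i$; condition (2) $\ell_j > \ell_i$ is precisely how $j$ was selected; and condition (3) $s_i < s_j$ follows from the hypothesis $s_j \geq s_i$ together with the assumption that object sizes are unique, since every $j \in H_i$ satisfies $\sigma_j \neq \sigma_i$ and therefore $s_j \neq s_i$.

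With $i \prec j$ established and $x_j > 0$ (as $j \in H_i$), \autoref{thm:precrelation} applies and yields $x_i = 1$. Thus $\{T_{B} > L_i\} \subseteq \{x_i = 1\}$, which is disjoint from $\{0 < x_i < 1\}$; taking complements gives $\{0 < x_i < 1\} \subseteq \{L_i \geq T_{B}\}$, so $\Pb{0 < x_i < 1} \leq \Pb{L_i \geq T_{B}}$. To recover the strict form $\Pb{L_i > T_{B}}$ stated in the theorem, I would observe that $\{L_i = T_{B}\}$ can never occur: equality would force $\ell_j = \ell_i$ for the maximizing $j \in H_i$, but two distinct objects ($\sigma_j \neq \sigma_i$) cannot be requested at the same time, so $\ell_j \neq \ell_i$. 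Hence $\Pb{L_i \geq T_{B}} = \Pb{L_i > T_{B}}$, completing the bound.

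The main obstacle is the careful handling of the boundary cases rather than any deep estimate: I must confirm that membership $j \in H_i$ genuinely rules out $\sigma_j = \sigma_i$ (so that the size inequality in condition (3) is strict and \autoref{def:precrelation} applies cleanly), and that the degenerate tie $L_i = T_{B}$ is deterministically impossible, which is what lets the non-strict set inclusion be reported with the strict inequality in the statement. Everything else reduces to reading off the three conditions of $\prec$ from the coupon-collector variables and invoking the already-established \autoref{thm:precrelation}.
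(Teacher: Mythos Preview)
Your proof is correct and follows essentially the same approach as the paper: assume $T_B > L_i$, pick the maximizing $j \in B$, verify the three conditions of \autoref{def:precrelation}, apply \autoref{thm:precrelation} using $x_j > 0$ from $j \in H_i$, and then pass to probabilities while ruling out the tie $L_i = T_B$. If anything you are more careful than the paper in justifying the strict inequality $s_i < s_j$ (via uniqueness of sizes and $\sigma_j \neq \sigma_i$) and in spelling out why the tie is impossible.
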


The proof works backwards by assuming that $T_{B} > L_i$.
We then show that this implies that there exists an interval $j$ with $i \prec j$ and then apply \autoref{thm:precrelation} to conclude that $x_i$ is integer.
Finally, we use this implication to bound the probability.

\begin{proof}
  Consider an arbitrary $i \in \{1 \dots N\}$ with $T_{B} > L_i$.
  Let $[j,\ell_j)$ denote the interval in $B$ that is last requested, i.e., $\ell_j = \max_{k \in B}\ell_k$ ($j$ exists because $B$ is non-empty).
  To show that $i \prec j$, we check the three conditions of \autoref{def:precrelation}.
  \begin{enumerate}
  \item $j<i$, because $j \in H_i$ (i.e., $j$ is cached at time $i$);
  \item $\ell_j > \ell_i$, because $\ell_j = \max_{k \in B}\ell_k = i + T_{B} > i + L_i = \ell_i$; and
  \item $s_i < s_j$, because $j \in B$ (i.e., $s_j$ is bigger than $s_i$ by assumption).
  \end{enumerate}
  Having shown that $i \prec j$, we can apply \autoref{thm:precrelation}, so that $x_j>0$ implies $x_i=1$.
  Because $j \in H_i$, $j$ is cached $x_j > 0$ and thus $x_i=1$.
  Finally, we observe that $L_i \neq T_{B}$ and conclude the theorem's statement by translating the above implications, $x_i = 1 \Leftarrow i \prec j \Leftarrow T_B > L_i$, into probability.
  \begin{alignat}{2}
    \Pb{0 < x_i < 1} = 1 - \Pb{x_i \in \{0,1\}} &\leq 1 - \Pb{i \prec j} &&\leq 1 - \Pb{T_{B} > L_i} = \Pb{L_i > T_{B}}~~~.
  \end{alignat}
\end{proof}

\autoref{lma:stochastic:bound} simplifies the analysis of non-integer $x_i$ to the relation of two random variables, $L_i$ and $T_{B}$.
While $L_i$ is geometrically distributed, $T_{B}$'s distribution is more involved.

We map $T_{B}$ to the stopping time $T_{b,p}$ of a generalized CCP with $b = \vert B \vert$ different coupon types.
The coupon probabilities $p$ follow from the object popularity distribution $\mathcal{P}^M$ by conditioning on objects in $B$.
As the object popularities $p$ are not equal in general, characterizing the stopping time $T_{b,p}$ is much more challenging than in the classical CCP, where the coupon probabilities are assumed to be equal.
We solve this problem by observing that collecting $b$ coupons under equal probabilities stops faster than under $p$.
This fact may appear obvious, but it was only recently shown by Anceaume et al.~\cite[Theorem 4, p. 415]{anceaume2015new} (the proof is non-trivial).
Thus, we can use a classical CCP to bound the generalized CCP's stopping time and $T_{B}$.

\begin{lma}[\textbf{Connection to classical coupon connector problem}]\label{lma:couponinterpretation}
  For any object popularity distribution $\mathcal{P}^M$, and for $q=(1/b,\dots,1/b)$, using the notation from \autoref{lma:stochastic:bound}
  \begin{align}
    \Pb{T_{B} < l} \leq \Pb{T_{b,q} < l}\quad\quad\quad \text{for any} \qquad\quad l\geq 0~~~~.
  \end{align}
\end{lma}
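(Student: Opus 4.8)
The plan is to prove the stated CDF inequality by building a two-link stochastic-dominance chain, $\Pb{T_{B} < l} \leq \Pb{T_{b,p} < l} \leq \Pb{T_{b,q} < l}$, where (in the notation of \autoref{lma:stochastic:bound}) $T_{b,p}$ is the stopping time of a \emph{generalized} coupon collector with $b = \vert B\vert$ types whose probabilities are the $B$-popularities renormalized to sum to one, $p_j = \psi_j / \sum_{k \in B}\psi_k$. The first link reduces the trace to a coupon-collector process; the second is the genuinely hard comparison between non-uniform and uniform coupon probabilities, which I would import wholesale from Anceaume et al.~\cite[Theorem 4]{anceaume2015new}.

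For the first link, I would condition on the set $B$ and analyze the request stream after time $i$. Under the independence assumption (Assumption~\ref{ass:localirm}) with the request law of Assumption~\ref{ass:inf}, these requests are i.i.d.\ draws from $\mathcal{P}^M$. I classify each draw as \emph{relevant} when it hits an object of $B$ and a \emph{distractor} otherwise; conditioned on being relevant, a draw is object $j \in B$ with probability exactly $p_j$, independently across relevant draws. Hence the subsequence of relevant draws is an honest generalized coupon collector with $b$ types and probabilities $p$, and the count of relevant draws needed to collect all of $B$ is distributed as $T_{b,p}$. Because $T_{B} = \max_{j\in B}\ell_j - i$ counts \emph{every} request up to completion, distractors included, the pathwise inequality $T_{B} \geq T_{b,p}$ holds, so $\{T_{B} < l\} \subseteq \{T_{b,p} < l\}$ and $\Pb{T_{B} < l} \leq \Pb{T_{b,p} < l}$ for all $l \geq 0$.

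For the second link, I would invoke Anceaume et al.~\cite[Theorem 4]{anceaume2015new}: among all probability vectors on $b$ coupon types, the uniform vector $q = (1/b,\dots,1/b)$ minimizes the collection time in the first-order stochastic sense, so $\Pb{T_{b,p} < l} \leq \Pb{T_{b,q} < l}$ for every $l$. Concatenating the two links gives the lemma.

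The main obstacle is the second link, the non-uniform-versus-uniform comparison; this is exactly the subtle majorization fact of Anceaume et al., which is non-trivial to establish and which I would use as a black box rather than reprove. Everything on our own side is comparatively routine: the first link only requires stripping the distractor draws to expose the embedded generalized CCP, and the resulting domination is pathwise. The one point to handle with care is that $B$ is read off from FOO-L's min-cost flow solution and therefore depends on the full trace, including requests after time $i$; treating $B$ as given (as the lemma's statement does) isolates the clean comparison proved here, while the measurability of $B$ relative to the post-$i$ request stream is dealt with where the lemma is applied in the final proof.
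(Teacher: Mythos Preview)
Your proposal is correct and follows essentially the same two-link argument as the paper: first strip out the non-$B$ ``distractor'' requests to dominate $T_B$ by the generalized CCP time $T_{b,p}$, then invoke Anceaume et al.\ to pass from $T_{b,p}$ to the uniform $T_{b,q}$. Your write-up is in fact more careful than the paper's---you spell out the i.i.d.\ conditioning that makes the relevant-draw subsequence an honest CCP, and you flag the measurability subtlety that $B$ depends on the whole trace, which the paper glosses over.
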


The proof of this Lemma simply extends the result by Anceaume et al.\ to account for requests to objects not in $B$.
We state the full proof in \autoref{sec:proof:lma:couponinterpretation}.

\subsection{Typical objects almost always lead to integer decision variables}\label{sec:opt:stochastic}

We now use the connection to the coupon collector problem to show that almost all of FOO's decision variables are integer.
Specifically, we exclude a small number of very large and unpopular objects, and show that the remaining objects are almost always part of a precedence relation, which forces the corresponding decision variables to become integer.
In \autoref{sec:opt:finalproof}, we show that the excluded fraction is diminishingly small.

We start with a definition of the sets of large objects and the set of popular objects.

\begin{definition}[\textbf{Large objects and popular objects }]\label{def:popular}
  Let $N^*$ be the time after which the cache needs to evict at least one object.
  For time $i \in \{N^* \dots N\}$, we define the sets of large objects, $B_i$, and the set of popular objects, $F_i$.
  
  The set $B_i \subseteq H_i$ consists of the requests to the fraction $\delta$ largest objects of $H_i$ ($0 < \delta < 1$). We also define $b_i = \vert B_i \vert$, and we write ``$s_i < B_i$'' if $s_i < s_j$ for all $j \in B_i$ and ``$s_i \not < B_i$'' otherwise.
  
  The set $F_i$ consists of those objects $k$ with a request probability $\rho_k$ which lies above the following threshold.
  \begin{align}
    F_i = \left\{k: \rho_k \geq \frac{1}{b_i \log \log b_i}\right\}
  \end{align}
\end{definition}

Using the above definitions, we prove that ``typical'' objects (i.e., popular objects that are not too large) rarely lead to non-integer decision variables as the number of objects $M$ grows large.

\begin{theorem}[\textbf{Typical objects are rarely non-integer}]\label{lma:fractionprobability}
 For $i \in \{N^* \dots N\}$, $B_i$, and $F_i$ from Definition~\ref{def:popular}, 
  \begin{align}
\Pb{0<x_i<1 \;\big\vert\; s_i < B_i, \sigma_i \in F_i} \rightarrow 0\text{ as }M \rightarrow \infty~~~.
  \end{align}
\end{theorem}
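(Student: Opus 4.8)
The plan is to reduce the claim to a comparison of two marginal tails---the interval length $L_i$ and a uniform coupon-collector time---and then to exploit the gap between the popularity threshold defining $F_i$ (of order $b_i\log\log b_i$) and the coupon-collector time of $b_i$ coupons (of order $b_i\log b_i$). First I would apply \autoref{lma:stochastic:bound} with $B=B_i$. The conditioning event $s_i<B_i$ guarantees $s_j>s_i$ for every $j\in B_i$, and $B_i\subseteq H_i$ by \autoref{def:popular}, so $B_i$ is an admissible choice of $B$; moreover $B_i$ is nonempty once $M$ is large, since $b_i$ is a constant fraction $\delta$ of $h_i$ and $h_i\rightarrow\infty$ as the cache scales with $M$ (the regime of \autoref{thm:exactness} under \autoref{ass:inf}). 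This gives, conditionally on the event in the theorem,
\begin{equation}
  \Pb{0<x_i<1 \;\big\vert\; s_i<B_i,\,\sigma_i\in F_i} \leq \Pb{L_i > T_{B_i} \;\big\vert\; s_i<B_i,\,\sigma_i\in F_i}.
\end{equation}

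Next I would sever the two variables with a single threshold $t\geq 0$. Since $\{L_i>T_{B_i}\}\subseteq\{L_i>t\}\cup\{T_{B_i}\leq t\}$, a union bound followed by \autoref{lma:couponinterpretation} (which dominates $T_{B_i}$ stochastically from below by the uniform coupon time $T_{b_i,q}$ with $q=(1/b_i,\dots,1/b_i)$) yields
\begin{equation}
  \Pb{L_i>T_{B_i}} \leq \Pb{L_i>t} + \Pb{T_{B_i}\leq t} \leq \Pb{L_i>t} + \Pb{T_{b_i,q}\leq t}.
\end{equation}
The value of this step is that it decouples $L_i$ and $T_{B_i}$, so only their \emph{marginal} tails matter and I never need their joint law.

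I would then bound the two marginals for a threshold placed in the wide gap between $b_i\log\log b_i$ and $b_i\log b_i$. Conditioning on $\sigma_i\in F_i$ forces $\rho_{\sigma_i}\geq 1/(b_i\log\log b_i)$, and $L_i$ is geometric (\autoref{ass:localirm}), so $\Pb{L_i>t}\leq (1-\rho_{\sigma_i})^t \leq e^{-t/(b_i\log\log b_i)}$. For the uniform coupon time, letting $X_t$ count the coupons still missing after $t$ draws, $\Ex{X_t}=b_i(1-1/b_i)^t\approx b_i\,e^{-t/b_i}$; because $(1-2/b_i)\leq(1-1/b_i)^2$ the missing-coupon indicators are non-positively correlated, so $\Var(X_t)\leq\Ex{X_t}$ and hence $\Pb{T_{b_i,q}\leq t}=\Pb{X_t=0}\leq 1/\Ex{X_t}$. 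Choosing $t=b_i(\log b_i-\sqrt{\log b_i})$ makes $\Ex{X_t}\approx e^{\sqrt{\log b_i}}\rightarrow\infty$, killing the coupon term, while $t/(b_i\log\log b_i)\approx\log b_i/\log\log b_i\rightarrow\infty$, killing the geometric term. Since $b_i\rightarrow\infty$ with $M$, both bounds vanish, which proves the claim.

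The main obstacle is not any single estimate but making the conditioning rigorous, since $B_i$, $b_i$, and $\sigma_i$ are themselves random and depend on the MCF solution. I would handle this by conditioning on the full configuration at time $i$---the identities, sizes, and popularities of the objects of $H_i$ and of $\sigma_i$---and then verifying that $L_i$ and $T_{B_i}$ are determined by the independent forward requests, that $\sigma_i\notin H_i$ so that the forward geometric law of $L_i$ is unchanged by the conditioning, and that the size condition $s_i<B_i$ (sizes being independent of popularities under \autoref{ass:localirm}) does not disturb the request process. The marginal bounds above hold for every such configuration with $b_i$ large and $\rho_{\sigma_i}$ above the $F_i$ threshold, so averaging over configurations preserves them; confirming $b_i\rightarrow\infty$ in the scaling regime and the harmless $\leq$-versus-$<$ boundary in \autoref{lma:couponinterpretation} is then routine.
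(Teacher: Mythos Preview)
Your proposal is correct and follows the same overall architecture as the paper: reduce to $\Pb{L_i>T_{B_i}}$ via \autoref{lma:stochastic:bound}, pass to the uniform coupon time via \autoref{lma:couponinterpretation}, and separate $L_i$ from $T_{B_i}$ by a threshold sitting in the gap between $b_i\log\log b_i$ and $b_i\log b_i$. The differences are technical. The paper conditions on $L_i=l$ and sums over $l$, splitting at $\Lambda=\tfrac12\,b_i\log b_i$, whereas you use a single union-bound threshold $t=b_i(\log b_i-\sqrt{\log b_i})$; these are equivalent decoupling devices. More substantively, for the lower tail of the uniform coupon time the paper invokes a dedicated Chernoff bound (\autoref{lma:ccpbound}, yielding $\Pb{T_{b,q}\le b\log b-cb}<e^{-c}$), while you use the second-moment route $\Pb{X_t=0}\le\Var(X_t)/\Ex{X_t}^2\le 1/\Ex{X_t}$ via negative correlation of the missing-coupon indicators. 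Your argument is more elementary and self-contained, at the price of slower decay ($e^{-\sqrt{\log b_i}}$ for the coupon term versus the paper's $1/\sqrt{b_i}$); either suffices here. Finally, your remark that the real subtlety lies in the conditioning is well taken---the paper largely elides this---and your appeal to $h_i\to\infty$ is exactly what the paper isolates and proves separately as \autoref{lma:hi:infinite}.
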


The intuition is that, as the number of cached objects grows large,
it is vanishingly unlikely that \emph{all} objects in $B_i$ will be requested before a \emph{single} object is requested again.
That is, though there are not many large objects in $B_i$, there are enough that,
following \autoref{lma:stochastic:bound}, $x_i$ is vanishingly unlikely to be non-integer.
The proof of \autoref{lma:fractionprobability} uses elementary probability but relies on several very technical proofs.
We sketch the proof here, and state the formal proofs in the Appendices~\ref{sec:proof:lma:fractionprobability} to~\ref{sec:proof:lemma:ccpbound}.


\begin{proof}[Proof sketch]
  Following \autoref{lma:stochastic:bound}, it suffices to consider the event $\{L_i > T_{B_i}\}$.

  \begin{itemize}
  \item $\left(\Pb{L_i > T_{B_i}} \rightarrow 0 \text{ as } h_i \rightarrow \infty\right)$:
  We first condition on $L_i=l$, so that $L_i$ and $T_{B_i}$ become stochastically independent
  and we can bound $\Pb{L_i > T_{B_i}}$ by bounding either $\Pb{L_i > l}$ or $\Pb{l > T_{B_i}}$.
  Specifically, we split $l$ carefully into ``small $l$'' and ``large $l$'',
  and then show that $L_i$ is concentrated at small $l$, and $T_{B_i}$ is concentrated at large $l$.
  Hence, $\Pb{L_i > T_{B_i}}$ is negligible.

  \begin{itemize}
  \item (Small $l$:)
  For small $l$, we show that it is unlikely for all objects in $B_i$ to have been requested after $l$ requests.
  We upper bound the distribution of $T_{B_i}$ with $T_{b_i,u}$ (\autoref{lma:couponinterpretation}).
  We then show that the distribution of $T_{b_i,u}$ decays exponentially at values below its expectation (\autoref{lma:ccpbound}).
  Hence, for $l$ far below the expectation of $T_{b_i,u}$, the probability vanishes $\Pb{T_{b_i,u} < l} \rightarrow 0$,
  so long as $b_i = \delta h_i$ grows large, which it does because $h_i$ grows large.

\item (Large $l$:)
  For large $l$, $\Pb{L_i > l} \rightarrow 0$ because we only consider popular objects $\sigma_i \in F_i$ by assumption,
  and it is highly unlikely that a popular object is not requested after many requests.
\end{itemize}

\item $\left(h_i \rightarrow \infty\right)$:
  What remains to be shown is that the number of cached objects $h_i$ actually grows large.
  Since the cache size $C \rightarrow \infty$ as $M \rightarrow \infty$ by Assumption~\ref{ass:inf},
  this may seem obvious. Nevertheless, it must be demonstrated (\autoref{lma:hi:infinite}).
  The basic intuition is that $h_i$ is almost never much less than $h^*=C/\max_k s_k$,
  the fewest number of objects that could fit in the cache,
  and $h^*\rightarrow \infty$ as $C \rightarrow \infty$.

  \quad To see why $h_i$ is almost never much less than $h^*$,
  consider
  the probability that $h_i < x$, where $x$ is constant with respect to $M$.
  For any $x$, take large enough $M$ such that $x < h^*$.
  
  \quad Now, in order for $h_i < x$, almost all requests must go to distinct objects.
  Any object that is requested twice between $u$ (the last time where $h_u\geq h^*$) and $v$ (the next time where $h_v \geq h^*$) produces an interval (see \autoref{fig:hi:fluid}).
  This interval is guaranteed to fit in the cache, since $h_i < x < h^*$ means there is space for an object of any size.
  As $h^*$ and $M$ grow further,
  the amount of cache resources that must lay unused for $h_i < x$ grows further and further,
  and the probability that no interval fits within these resources becomes negligible.
\begin{figure}[h]
  \centering
  \includegraphics[width=.65\linewidth]{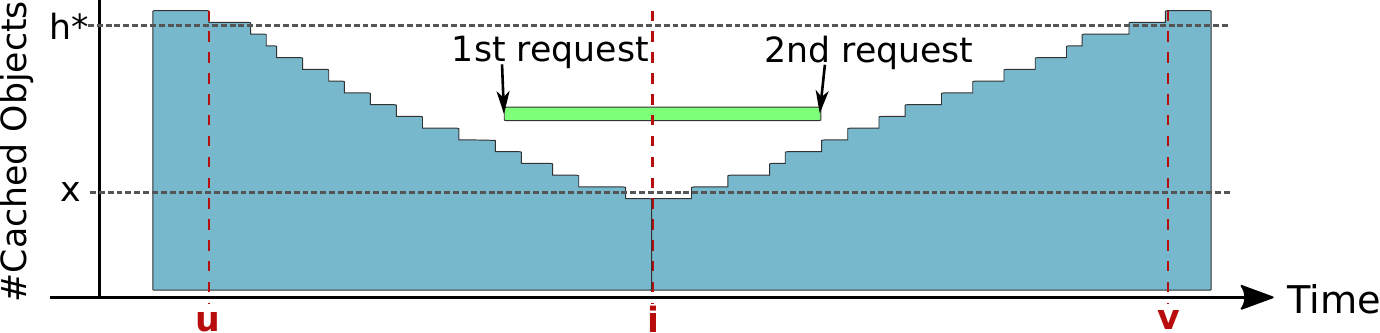}
\caption{The number of cached objects at time $i$, $h_i$, is unlikely to be far below $h^*=C/\max_k s_k$, the fewest number of objects that fit in the cache.
  In order for $h_i < h^*$ to happen, no other interval must fit in the white triangular space centered at $i$ (otherwise FOO-L would cache the interval).}
\label{fig:hi:fluid}
\end{figure}
\end{itemize}
\end{proof}


\subsection{Bringing it all together: Proof of \autoref{thm:exactness}}\label{sec:opt:finalproof}

This section combines our results so far and shows how to obtain \autoref{thm:exactness}:
\emph{There exists $M^*$ such that for any $M>M^*$ and for any error $\varepsilon$ and violation probability $\kappa$,} 
  \begin{align}
\label{eq:proof:thm:exactness}    \Pb{\text{FOO-U} - \text{FOO-L} \geq  \varepsilon \; N } \leq \kappa
  \end{align}

\begin{proof}[Proof of \autoref{thm:exactness}]
  We start by bounding the cost of non-integer solutions by the number of non-integer solutions, $\Omega$.
\begin{align}
\label{eq:f:definition}  \sum_{\{i:\;0<x_i<1\}} x_i \leq \sum_{i \in I} \mathbbm{1}_{\{i:\;0<x_i<1\}} = \Omega
\end{align}
It follows that $(\text{FOO-U}-\text{FOO-L})\leq\Omega$.
\begin{align}
  \label{eq:theorem2:foo-vs-b}
\Pb{\text{FOO-U} - \text{FOO-L} \geq \varepsilon \; N } &\leq \Pb{\Omega\geq \varepsilon \; N }
\intertext{We apply the Markov inequality.}
  \label{eq:vio:definition} &\leq \frac{\Ex{\Omega}}{N\;\varepsilon} = \frac{1}{N\;\varepsilon}\sum_{i \in I} \Pb{0<x_i<1}
\intertext{There are at most $N$ terms in the sum.}
& \leq \frac{\Pb{0<x_i<1}}{\varepsilon} 
\end{align}
To complete Eq.~\eqref{eq:proof:thm:exactness}, we upper bound $\Pb{0<x_i<1}$ to be less than $\varepsilon \, \kappa$.
We first condition on $s_i < B_i$ and $\sigma_i \in F_i$, double counting
those $i$ where $s_i \not < B_i$ and $\sigma_i \notin F_i$.
\begin{align}
  \Pb{0<x_i<1} \leq &
                      \Pb{0<x_i<1 \;\big\vert\; s_i < B_i, \sigma_i \in F_i }
                      \Pb{s_i < B_i, \sigma_i \in F_i}\\
                 & + \Pb{0<x_i<1 \;\big\vert\; s_i \not < B_i }\Pb{s_i \not < B_i}\\
                      & + \Pb{0<x_i<1 \;\big\vert\; \sigma_i \notin F_i }\Pb{\sigma_i \notin F_i}\\
\intertext{Drop $\leq 1$ terms.}
\label{eq:theorem2:proof:twoterms}  \leq & \Pb{0<x_i<1 \;\big\vert\; s_i < B_i, \sigma_i \in F_i} + \Pb{s_i \not < B_i} + \Pb{i \notin F_i}
\end{align}
To bound $\Pb{ 0 < x_i < 1 } \leq \varepsilon \, \kappa$, we choose parameters such that each term in Eq.~\eqref{eq:theorem2:proof:twoterms} is less than $\varepsilon \, \kappa/3$.
The first term vanishes by \autoref{lma:fractionprobability}.
The second term is satisfied by choosing $\delta = \varepsilon \, \kappa/3$ (\autoref{def:popular}).
For the third term, 
the probability that any cached object is unpopular vanishes as $h_i$ grows large.
\begin{align}
  \Pb{i \notin F_i}\leq & \frac{h_i}{b_i\log \log b_i}
                          = \frac{3}{\varepsilon \, \kappa \log \log \varepsilon \, \kappa \, h_i /3}
                          \rightarrow 0 \text{ as }h_i \rightarrow \infty
\end{align}
Finally, we choose $M^*$ large enough that the first and third terms in Eq.~\eqref{eq:theorem2:proof:twoterms} are each less than $\varepsilon \, \kappa / 3$.

\end{proof}

\vspace{-2mm}
This concludes our theoretical proof of FOO's optimality.

\section{Practical Flow-based Offline Optimal for Real-World Traces}
\label{sec:practical}

While FOO is asymptotically optimal and very accurate in practice,
as well as faster than prior approximation algorithms,
it is still not fast enough to process production traces with hundreds of millions of requests in a reasonable timeframe.
We now use the insights gained from FOO's graph-theoretic formulation
to design new upper and lower bounds on OPT,
which we call \emph{practical flow-based offline optimal (PFOO)}.
We provide the first practically useful lower bound, PFOO-L,
and an upper bound that is much tighter than prior practical offline upper bounds, PFOO-U:
\begin{align}
\label{eq:practical:ineq}  \fbox{$\large
    \text{PFOO-L} \leq \text{FOO-L} \leq \text{OPT} \leq \text{FOO-U} \leq \text{PFOO-U}
    $}
\end{align}



\subsection{Practical lower bound: PFOO-L}\label{sec:pfoo:l}

PFOO-L considers the \emph{total resources} consumed by OPT.
As \autoref{fig:pfoo-l:trace} illustrates, cache resources are limited in both space and time~\cite{beckmann:nsdi18:lhd}:
measured in resources, the cost to cache an object is the product of
\emph{(i)}~its size and \emph{(ii)}~its reuse distance (i.e., the number of accesses until it is next requested).
On a trace of length $N$, a cache of size $C$ has total resources $N \times C$.
The objects cached by OPT, or any other policy, cannot cost more total resources than this.

\begin{figure}[h]
\centering
\begin{subfigure}[b]{.46\linewidth}
  \centering
  \includegraphics[width=\linewidth]{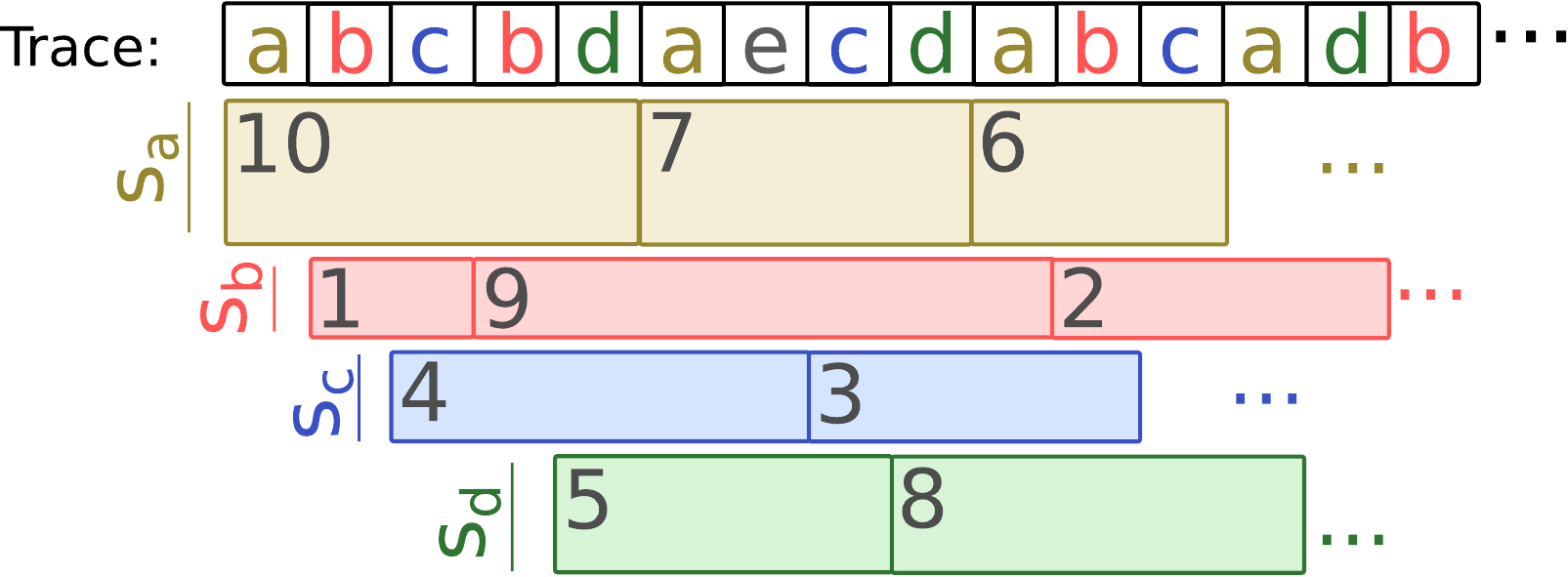}
  \caption{Intervals sorted by resource cost = reuse distance $\times$ object size.}
  \label{fig:pfoo-l:trace}
\end{subfigure}
\hspace{.025\linewidth}
\begin{subfigure}[b]{.48\linewidth}
  \centering
  \hspace{0.25in}\includegraphics[width=\linewidth]{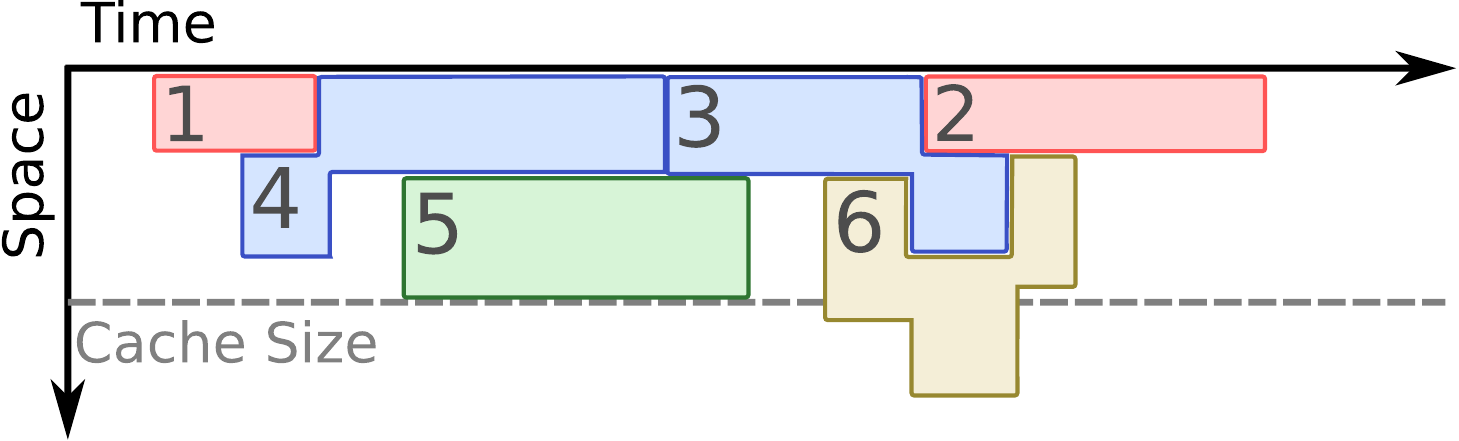}
  \vspace{.001mm}
  \caption{PFOO-L greedily claims the smallest intervals while not exceeding the average cache size.}
  \label{fig:pfoo-l:greedy}
\end{subfigure}
\caption{PFOO's lower bound, PFOO-L, constrains the total resources used over the full trace (i.e., size $\times$ time).
  PFOO-L claims the hits that require fewest resources, allowing cached objects to temporarily exceed the cache capacity.
}
\label{fig:pfoo-l}
\end{figure}

\paragraph{Definition of PFOO-L}
PFOO-L sorts all intervals by their resource cost, and caches the smallest-cost intervals up a total resource usage of $N \times C$.
\autoref{fig:pfoo-l:greedy} shows PFOO-L on a short request trace.
By considering only the total resource usage, PFOO-L ignores other constraints that are faced by caching policies.
In particular, PFOO-L does not guarantee that cached intervals take less than $C$ space at all times,
as shown by interval 6 for object \textcolor{darkyellow}{\bf a} in \autoref{fig:pfoo-l:greedy},
which exceeds the cache capacity during part of its interval.

\paragraph{Why PFOO-L works}
PFOO-L is a lower bound because no policy, including OPT or FOO-L, can get fewer misses using $N \times C$ total resources.
It gives a reasonably tight bound because, on large caches with many objects, the distribution of interval costs is similar throughout the request trace.
Hence, for a given cache capacity,
the ``marginal interval'' (i.e., the one barely does not fit in the cache under OPT) is also of similar cost throughout the trace.
Informally, PFOO-L caches intervals up to this marginal cost,
and so rarely exceeds cache capacity by very much.
The intuition behind PFOO-L is thus similar to the widely used Che approximation~\cite{che2002hierarchical},
which states that LRU caches keep objects up until some ``characteristic age''.
But unlike the Che approximation, which has unbounded error, PFOO-L uses this intuition to produce a robust lower bound.
This intuition holds particularly when requests are largely independent,
as in our proof assumptions and in traces from CDNs or other Internet services.
However, as we will see, PFOO-L introduces modest error even on other workloads where these assumptions do not hold.

PFOO-L uses a similar notion of ``cost'' as Belady-Size,
but provides two key advantages.
First, PFOO-L is closer to OPT.
Relaxing the capacity constraint lets PFOO-L avoid the pathologies discussed in \autoref{sec:bg:practice},
since PFOO-L can temporarily exceed the cache capacity to retain valuable objects that Belady-Size is forced to evict.
Second, relaxing the capacity constraint makes PFOO-L a \emph{lower} bound,
giving the first reasonably tight lower bound on long traces.

\subsection{Practical upper bound: PFOO-U}\label{sec:pfoo:u}

\pfooufigure

\paragraph{Definition of PFOO-U}
PFOO-U breaks FOO's min-cost flow graph into smaller segments of constant size,
and then solves each segment using min-cost flow incrementally.
By keeping track of the resource usage of already solved segments, PFOO-U yields a globally-feasible solution, which is an upper bound on OPT or FOO-U.

\paragraph{Example of PFOO-U}
\autoref{fig:pfoo-u} shows our approach on the trace from \autoref{fig:foo:trace} for a cache capacity of 3.
At the top is FOO's full min-cost flow problem; for large traces, this MCF problem is too expensive to solve directly.
Instead, PFOO-U breaks the trace into segments and constructs a min-cost flow problem for each segment.

PFOO-U begins by solving the min-cost flow for the first segment.
In this case, the solution is to cache both \textcolor{darkred}{\bf b}s, \textcolor{blue}{\bf c}, and one-third of \textcolor{darkyellow}{\bf a},
since these decisions incur the minimum cost of two-thirds, i.e., less than one cache miss.
As in FOO-U, PFOO-U rounds down the non-integer decision for \textcolor{darkyellow}{\bf a} and all following non-integer decisions.
Furthermore, PFOO-U only fixes decisions for objects in the first half of this segment.
This is done to capture interactions between intervals that cross segment boundaries.
Hence, PFOO-U ``forgets'' the decision to cache \textcolor{blue}{\bf c} and the second \textcolor{darkred}{\bf b},
and its final decision for this segment is only to cache the first \textcolor{darkred}{\bf b} interval.

PFOO-U then updates the second segment to account for its previous decisions.
That is, since \textcolor{darkred}{\bf b} is cached until the second request to \textcolor{darkred}{\bf b},
capacity must be removed from the min-cost flow to reflect this allocation.
Hence, the capacity along the inner edge \mbox{\textcolor{blue}{\bf c} $\rightarrow$ \textcolor{darkred}{\bf b}} is reduced from 3 to 2 in the second segment (\textcolor{darkred}{\bf b} is size 1).
Solving the second segment, PFOO-U decides to cache \textcolor{blue}{\bf c} and \textcolor{darkred}{\bf b}
(as well as half of \textcolor{darkgreen}{\bf d}, which is ignored).
Since these are in the first half of the segment,
we fix both decisions,
and move onto the third segment,
updating the capacity of edges to reflect these decisions as before.

PFOO-U continues to solve the following segments in this manner
until the full trace is processed.
On the trace from \autoref{sec:foo}, it decides to cache all requests to \textcolor{darkred}{\bf b} and \textcolor{blue}{\bf c},
yielding 5 misses on the requests to \textcolor{darkyellow}{\bf a} and \textcolor{darkgreen}{\bf d}.
These are the same decisions as taken by FOO-U and OPT.
We generally find that PFOO-U yields nearly identical miss ratios as FOO-U,
as we next demonstrate on real traces.



\subsection{Summary}\label{sec:pfoo:proof}

Putting it all together,
PFOO provides efficient lower and upper bounds on OPT with variable object sizes.
PFOO-L runs in $O(N \log N)$ time, as required to sort the intervals;
and PFOO-U runs in $O(N)$ because it divides the min-cost flow into segments of constant size.
In practice, PFOO-L is faster than PFOO-U at realistic trace lengths,
despite its worse asymptotic runtime,
due to the large constant factor in solving each segment in PFOO-U.

PFOO-L gives a lower bound ($\text{PFOO-L} \leq \text{FOO-L}$, Eq.~\eqref{eq:practical:ineq}) because
PFOO-L caches all of the smallest intervals,
so no policy can get more hits (i.e., cache more intervals) without exceeding $N \times C$ overall resources.
Resources are a hard constraint obeyed by all policies, including FOO-L,
whose resources are constrained by the capacity $C$ of all $N-1$ inner edges.


PFOO-U gives an upper bound ($\text{FOO-U} \leq \text{PFOO-U}$, Eq.~\eqref{eq:practical:ineq}) because FOO-U and PFOO-U both satisfy the capacity constraint and the integrality constraint of the ILP formulation of OPT (\autoref{def:ilp}).
By sequentially optimizing the MCF segment after segment, we induce additional constraints on PFOO-U, which can lead to suboptimal solutions.

\section{Experimental Methodology}
\label{sec:methodology}

We evaluate FOO and PFOO against prior offline bounds and online caching policies on eight different production traces.

\subsection{Trace characterization}\label{sec:app:traces}

We use production traces from three global content-distribution networks (CDNs), two web-applications from different anonymous large Internet companies, and storage workloads from Microsoft~\cite{snia}.
We summarize the trace characteristics in \autoref{tbl:methodology:traces}.
The table shows that our traces typically span several tens to hundreds of million of requests and tens of millions of objects.
\autoref{fig:methodology:traces} shows four key distributions of these workloads.

\begin{table}[h]
  \centering
  \begin{tabular}{cccccc}
    \toprule
    \bf Trace & \bf Year & \bf \# Requests & \bf \# Objects & \multicolumn{2}{c}{\bf Object sizes} \\
    \midrule
    CDN 1 & 2016 & 500\,M & 18\,M & {10\,B} -- {616\,MB} \\
    CDN 2 & 2015 & 440\,M & 19\,M & {1\,B} -- {1.5\,GB}\\
    CDN 3 & 2015 & 420\,M & 43\,M & {1\,B} -- {2.3\,GB}\\
    WebApp 1 & 2017 & 104\,M & 10\,M & {3\,B} -- {1.9\,MB}\\
    WebApp 2 & 2016 & 100\,M & 14\,M & {5\,B} -- {977\,KB}\\
    Storage 1 & 2008 & {29}\,M & 16\,M & {501\,B} -- {780\,KB}\\
    Storage 2  & 2008 & {37}\,M & 6\,M & {501\,B} -- {78\,KB}\\
    Storage 3  & 2008 & {45}\,M & 14\,M & {501\,B} -- {489\,KB}\\
    \bottomrule
  \end{tabular}
  \vspace{2mm}
  \caption{Overview of key properties of the production traces used in our evaluation.}
  \label{tbl:methodology:traces}
\end{table}

The object size distribution (\autoref{fig:methodology:traces:sizes}) shows that object sizes are variable in all traces.
However, while they span almost ten orders of magnitude in CDNs, object sizes vary only by six orders of magnitude in web applications, and only by three orders of magnitude in storage systems.
WebApp 1 also has noticeably smaller object sizes throughout, as is representative for application-cache workloads.

The popularity distribution (\autoref{fig:methodology:traces:popularity}) shows that CDN workloads and WebApp workloads all follow approximately a Zipf distribution with $\alpha$ between 0.85 and 1.
In contrast, the popularity distribution of storage traces is much more irregular with a set of disproportionally popular objects, an approximately log-linear middle part, and an exponential cutoff for the least popular objects.

The reuse distance distribution (\autoref{fig:methodology:traces:reuse-distance}) --- i.e., the distribution of the number of requests between requests to the same object --- further distinguishes CDN and WebApp traces from storage workloads.
CDNs and WebApps serve millions of different customers and so exhibit largely independent requests with smoothly diminishing object popularities, which matches our proof assumptions.
Thus, the CDN and WebApp traces lead to a smooth reuse distance, as shown in the figure.
In contrast, storage workloads serve requests from one or a few applications,
and so often exhibit highly correlated requests (producing spikes in the reuse distance distribution).
For example, scans are common in storage (e.g., traces like: \textsf{ABCD\,ABCD\,...}),
but never seen in CDNs.
This is evident from the figure, where the storage traces exhibit several steps in their cumulative request probability, as correlated objects (e.g., due to scans) have the same reuse distance.

Finally, we measure the correlation across different objects (\autoref{fig:methodology:traces:correlation}).
Ideally, we could directly test our independence assumption (Assumption~\ref{ass:localirm}).
Unfortunately, quantifying independence on real traces is challenging.
For example, classical methods such as Hoeffding's independence test \cite{hoeffding1948non} only apply to continuous distributions, whereas we consider cache requests in discrete time.

\figTraceChars

We therefore turn to correlation coefficients.
Specifically, we use the Pearson correlation coefficient as it is computable in linear time (as opposed to Spearman's rank and Kendall's tau~\cite{daniel1978applied}).
We define the coefficient based on the number of requests each object receives in a time bucket that spans 4000 requests (we verified that the results do not change significantly for time bucket sizes in the range 400 - 40k requests).
In order to capture pair-wise correlations, we chose the top 10k objects in each trace, calculated the request counts for all time buckets, and then calculated the Pearson coefficient for all possible combinations of object pairs.

\autoref{fig:methodology:traces:correlation} shows the distribution of coefficients for all object pair combinations.
We find that both CDN and WebApps do not show a significant correlation; over 95\% of the object pairs have a coefficient coefficient between -0.25 and 0.25.
In contrast, we find strong positive correlations in the storage traces.
For the first storage trace, we measure a correlation coefficient greater than $0.5$ for all 10k-most popular objects.
For the second storage trace, we measure a correlation coefficient greater than $0.5$ for more than 20\% of the object pairs.
And, for the third storage trace, we measure a correlation coefficient greater than $0.5$ for more than 14\% of the object pairs.
We conclude that the simple Pearson correlation coefficient is sufficiently powerful to quantify the correlation inherent to storage traces (such as loops and scans).


\subsection{Caching policies}
We evaluate three classes of policies: theoretical bounds on OPT, practical offline heuristics, and online caching policies.
Besides FOO, there exist three other theoretical bounds on OPT with approximation guarantees: OFMA, LocalRatio, and LP (\autoref{sec:bg:offlinevarsize}).
Besides PFOO-U, we consider three other practical upper bounds: Belady, Belady-Size, and Freq/Size (\autoref{sec:bg:practice}).
Besides PFOO-L, there is only one other practical lower bound: a cache with infinite capacity (Infinite-Cap).
Finally, for online policies,
we evaluated
GDSF~\cite{cherkasova1998improving}, GD-Wheel~\cite{li2015gd},
AdaptSize~\cite{berger2017adaptsize}, 
Hyperbolic~\cite{blankstein2017hyperbolic},
and several other older policies which perform much worse on our traces (including LRU-K~\cite{o1993lru},
TLFU~\cite{einziger2014tinylfu},
SLRU~\cite{huang2013analysis},
and LRU).

Our implementations are in C++ and use the COIN-OR::LEMON library~\cite{lemon}, GNU parallel~\cite{Tange2011a}, OpenMP~\cite{dagum1998openmp}, and CPLEX 12.6.1.0.
OFMA runs in $O(N^2)$, LocalRatio runs in $O(N^3)$, Belady in $O(N\log C)$.
We rely on sampling~\cite{psounis:infocom01:sampling} to run Belady-Size on large traces, which gives us an $O(N)$ implementation.
We will publicly release our policy implementations and evaluation infrastructure upon publication of this work.
To the best of our knowledge, these include the first implementations of the prior theoretical offline bounds.

\subsection{Evaluation metrics}
We compare each policy's \emph{miss ratio}, which ranges from 0 to 1.
We present absolute error as unitless scalars
and relative error as percentages.
For example, if FOO-U's miss ratio is $0.20$ and FOO-L's is $0.15$,
then absolute error is $0.05$ and relative error is $33$\%.

We present a \emph{miss ratio curve} for each trace,
which shows the miss ratio achieved by different policies at different cache capacities.
We present these curves in log-linear scale to study a wide range of cache capacities.
Miss ratio curves allow us to compare miss ratios achieved by different policies at fixed cache capacities,
as well as compute cache capacities that achieve equivalent miss ratios.

\section{Evaluation}
\label{sec:evaluation}

We evaluate FOO and PFOO to demonstrate the following:
\emph{(i)}~PFOO is fast enough to process real traces, whereas FOO and prior theoretical bounds are not;
\emph{(ii)}~FOO yields nearly tight bounds on OPT, even when our proof assumptions do not hold;
\emph{(iii)}~PFOO is highly accurate on full production traces;
and \emph{(iv)}~PFOO reveals that there is significantly more room for improving current caching systems than implied by prior offline bounds.

\subsection{PFOO is necessary to process real traces}


\autoref{fig:eval:runtime} shows the execution time of FOO, PFOO, and prior theoretical offline bounds
at different trace lengths.
Specifically, we run each policy on the first $N$ requests of the CDN 1 trace,
and vary $N$ from a few thousand to over 30 million.
Each policy ran alone on a 2016 SuperMicro server with 44 Intel Xeon E5-2699 cores and 500\,GB of memory.

\begin{figure}[h]
  \centering
 \includegraphics[width=0.5\linewidth]{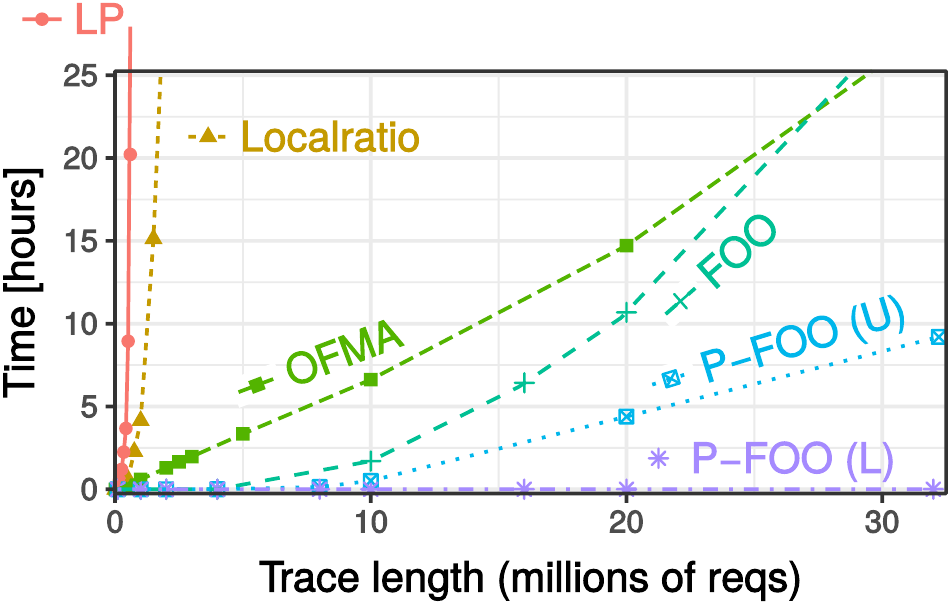}
\caption{Execution time of FOO, PFOO, and prior theoretical offline bounds at different trace lengths.
  Most prior bounds are unusable above 500\,K requests.
  Only PFOO can process real traces with many millions of requests.
}
\label{fig:eval:runtime}
\end{figure}

These results show that LP and LocalRatio are unusable:
they can process only a few hundred thousand requests in a 24-hour period,
and their execution time increases rapidly as traces lengthen.
While FOO and OFMA are faster, they both take more than 24 hours to process more than 30 million requests,
and their execution times increase super-linearly.

Finally, PFOO is much faster and scales well, allowing us to process traces with hundreds of millions of requests.
PFOO's lower bound completes in a few minutes,
and while PFOO's upper bound is slower,
it scales linearly with trace length.
\emph{PFOO is thus the only bound that completes in reasonable time on real traces.}


\subsection{FOO is nearly exact on short traces}

We compare FOO, PFOO, and prior theoretical upper bounds on the first 10 million requests of each trace.
Of the prior theoretical upper bounds, only OFMA runs in a reasonable time at this trace length,%
\footnote{While we have tried downsampling the traces to run LP and LocalRatio (as suggested in~\cite{kessler:tocs94:trace-sampling,beckmann:hpca15:talus,waldspurger:atc17:sampling} for objects with equal sizes), we were unable to achieve meaningful results. Under variable object sizes, scaling down the system (including the cache capacity), makes large objects disproportionately disruptive.}
so we compare FOO, PFOO, OFMA, the Belady variants, Freq/Size, and Infinite-Cap.


Our first finding is that \emph{FOO-U and FOO-L are nearly identical}, as predicted by our analysis.
The largest difference between FOO-U's and FOO-L's miss ratio on CDN and WebApp traces is 0.0005---a relative error of 0.15\%.
Even on the storage traces, where requests are highly correlated and hence our proof assumptions do not hold,
the largest difference is 0.0014---a relative error of 0.27\%.
Compared to the other offline bounds, FOO is at least an order of magnitude and often several orders of magnitude more accurate.

Given FOO's high accuracy, we use FOO to estimate the error of the other offline bounds.
Specifically, we assume that OPT lies in the middle between FOO-U and \mbox{FOO-L}.
Since the difference between FOO-U and FOO-L is so small, this adds negligible error (less than 0.14\%) to all other results.

\autoref{fig:eval:maxerror:cdn1} shows the maximum error from OPT across five cache sizes on our first CDN production trace.
All upper bounds are shown with a bar extending above OPT, and all lower bounds are shown with a bar extending below OPT.
Note that the practical offline upper bounds (e.g., Belady) do not have corresponding lower bounds.
Likewise, there is no upper bound corresponding to Infinite-Cap.
Also note that OFMA's bars are so large than they extend above and below the figure.
We have therefore annotated each bar with its absolute error from OPT.

\begin{figure}[h]
  \centering
  \begin{minipage}{0.54\linewidth}
    \vspace{-3mm}
    \caption{Comparison of the maximum approximation error of FOO, PFOO, and prior offline upper and lower bounds across five cache sizes on a CDN production trace.\\
      FOO's upper and lower bounds are nearly identical, with a gap of less than 0.0005.
      We therefore assume that OPT is halfway between FOO-U and FOO-L, which introduces negligible error due to FOO's high accuracy.\\
      PFOO likewise introduces small error, with PFOO-U having a similar accuracy to FOO-U.
      PFOO-L leads to higher errors than FOO-L but is still highly accurate, within 0.007.
      In contrast, all prior policies lead to errors several orders-of-magnitude larger.}
  \label{fig:eval:maxerror:cdn1}
  \end{minipage}
  \hfill
  \begin{minipage}{0.34\linewidth}
    \includegraphics[width=\linewidth]{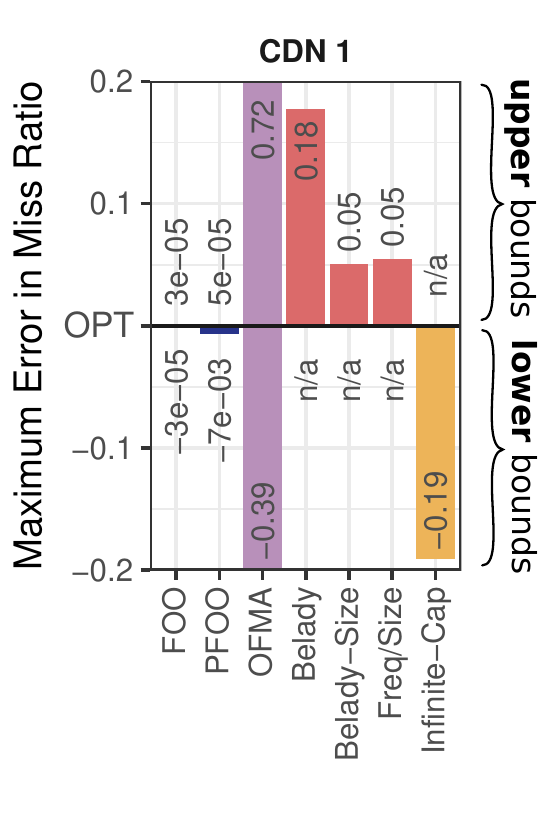}
  \end{minipage}
  \hspace{5mm}
  \vspace{-5mm}
\end{figure}

The figure shows that FOO-U and FOO-L nearly coincide, with error of $0.00003$ ($=$\num{3e-5}) on this trace.
PFOO-U is 0.00005 ($=$\num{5e-5}) above OPT, nearly matching FOO-U,
and PFOO-L is 0.007 below OPT, which is very accurate though worse than FOO-L.

All prior techniques yield error several orders of magnitude larger.
OFMA has very high error: its bounds are 0.72 above and 0.39 below OPT.
The practical upper bounds are more accurate than OFMA: Belady is 0.18 above OPT, Belady-Size 0.05, and Freq/Size 0.05.
Finally, Infinite-Cap is 0.19 below OPT.
Prior to FOO and PFOO, the best bounds for OPT give a broad range of up to 0.24.
\emph{PFOO and FOO reduce error by 34$\times$ and 4000$\times$, respectively.}

Figures~\ref{fig:eval:maxerror:all:split1} and~\ref{fig:eval:maxerror:all:split2} show the approximation error on all eight production traces.
The prior upper bounds are much worse than PFOO-U, except on one trace (Storage 1), where Belady-Size and Freq/Size are somewhat accurate.
Averaging across all traces, PFOO-U is 0.0014 above OPT.
PFOO-U reduces mean error by 37$\times$ over Belady-Size, the best prior upper bound.
Prior work gives even weaker lower bounds.
PFOO-L is on average 0.004 below OPT on the CDN traces, 0.02 below OPT on the WebApp traces, and 0.04 below OPT on the storage traces.
PFOO-L reduces mean error by 9.8$\times$ over Infinite-Cap and 27$\times$ over OFMA.
Hence, across a wide range of workloads, PFOO is by far the best practical bound on OPT.

\subsection{PFOO is accurate on real traces}

\begin{figure*}[t]
  \centering
  \includegraphics[width=0.98\textwidth]{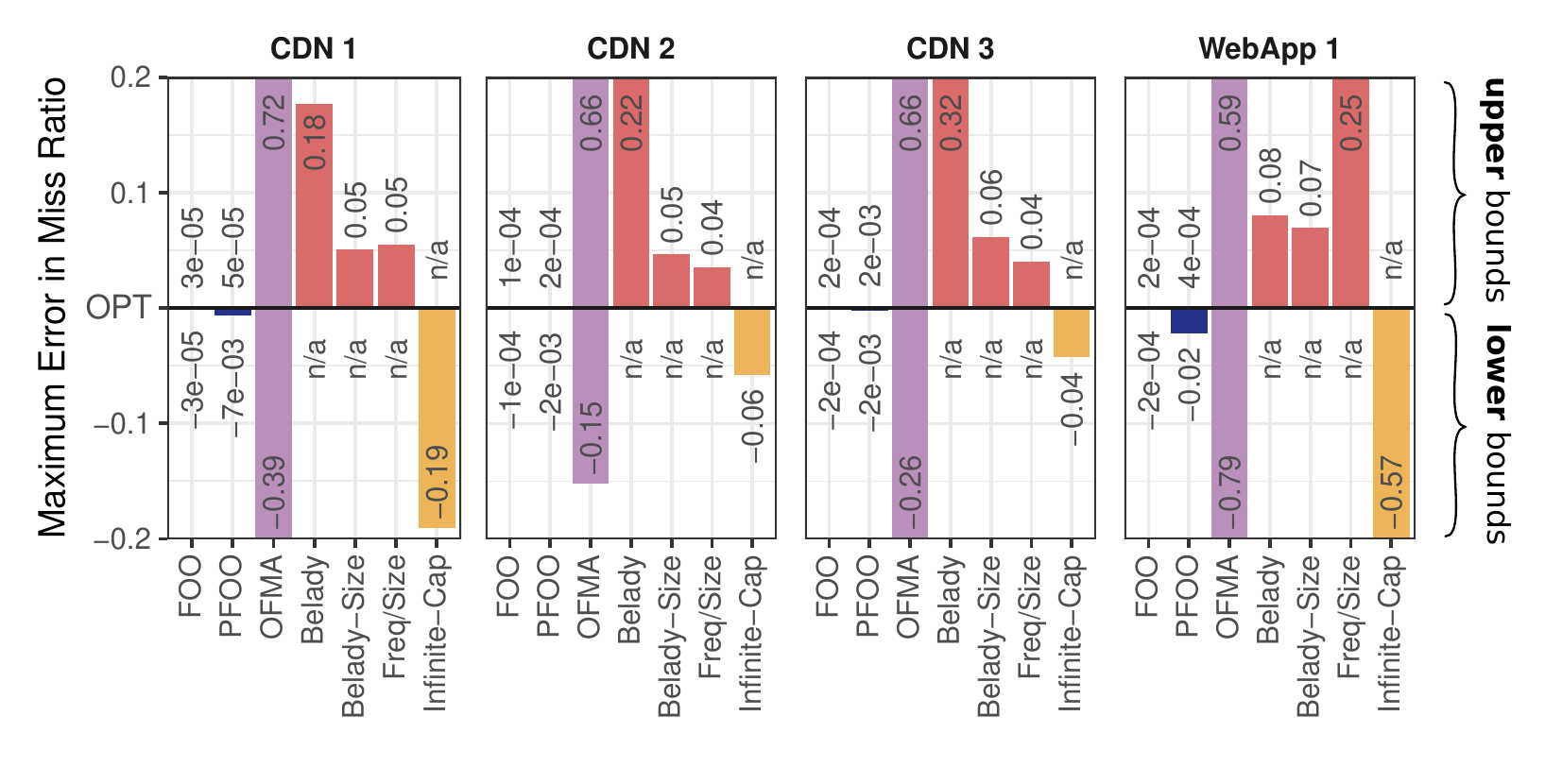}
  \caption{Approximation error of FOO, PFOO, and several prior offline bounds on four of our eight production traces (\autoref{fig:eval:maxerror:all:split2} shows the other four).
    FOO and PFOO's lower and upper bounds are orders of magnitude better than any other offline bound. (See \autoref{fig:eval:maxerror:cdn1}.)}
  \label{fig:eval:maxerror:all:split1}
\end{figure*}

Now that we have seen that FOO is accurate on short traces,
we next show that PFOO is accurate on long traces.
Figure~\ref{fig:eval:pfoo} 
shows the miss ratio over the full traces achieved by PFOO,
the best prior practical upper bounds (the best of Belady, Belady-Size, and Freq/Size),
the Infinite-Cap lower bound,
and the best online policy (see \autoref{sec:methodology}).

On average, PFOO-U and PFOO-L bound the optimal miss ratio within a narrow range
of 4.2\%.
PFOO's bounds are tighter on the CDN and WebApp traces than the storage traces:
PFOO gives an average bound of just 1.4\% on CDN 1-3 and 1.3\% on WebApp 1-2
but 5.7\% on Storage 1-3.
This is likely due to error in PFOO-L when requests are highly correlated, as they are in the storage traces.

Nevertheless, PFOO gives much tighter bounds than prior techniques on every trace.
The prior offline upper bounds are noticeably higher than PFOO-U.
On average, compared to PFOO-U, Belady-Size is 19\% higher,
Freq/Size is 22\% higher,
and Belady is fully 72\% higher.
These prior upper bounds are not, therefore, good proxies for the offline optimal.
Moreover, the best upper bound varies across traces: Freq/Size is lower on CDN 1 and CDN 3, but Belady-Size is lower on the others.
Unmodified Belady gives a very poor upper bound, showing that caching policies must account for object size.
The only lower bound in prior work is an infinitely large cache, whose miss ratio is much lower than PFOO-L.
\emph{PFOO thus gives the first reasonably tight bounds on the offline miss ratio for real traces.}

\begin{figure*}[t]
  \centering
  \includegraphics[width=0.98\textwidth]{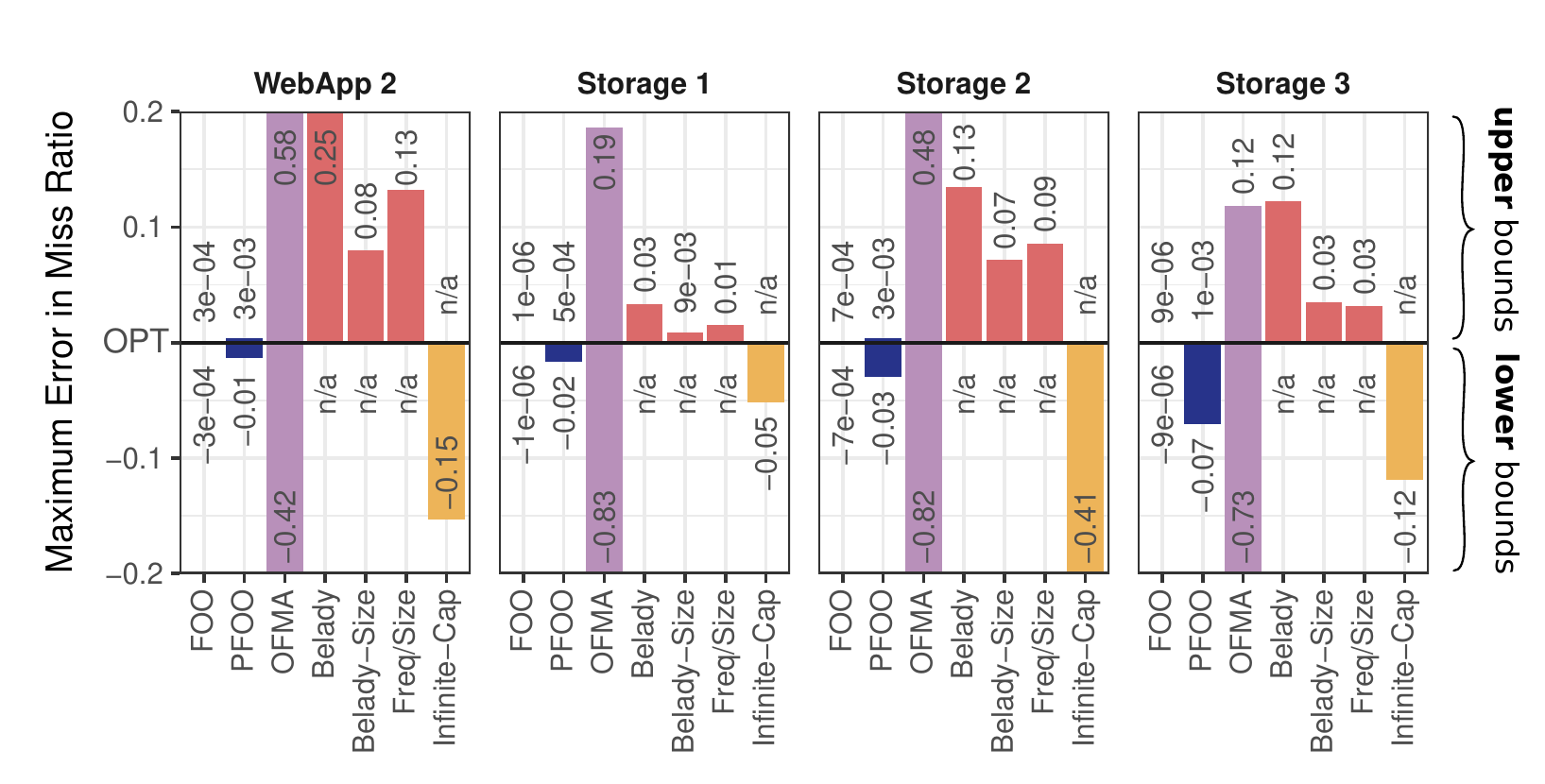}
  \caption{Approximation error of FOO, PFOO, and several prior offline bounds on four of our eight production traces (\autoref{fig:eval:maxerror:all:split1} shows the other four).
    FOO and PFOO's lower and upper bounds are orders of magnitude better than any other offline bound. (See \autoref{fig:eval:maxerror:cdn1}.)}
  \label{fig:eval:maxerror:all:split2}
\end{figure*}


\begin{figure*}[p]
  \centering
    \begin{subfigure}[b]{.35\textwidth}
      \centering
      \includegraphics[width=\linewidth]{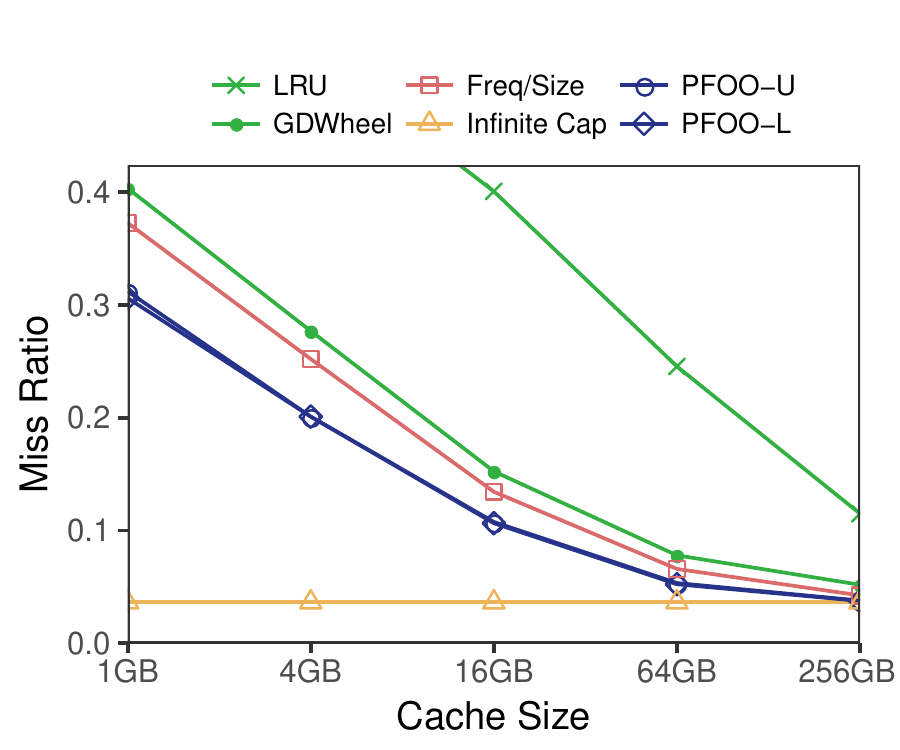} 
      \caption{CDN 1}
      \label{fig:eval:pfoo:cdn1}
    \end{subfigure}
    \hspace{4mm}
    \begin{subfigure}[b]{.35\textwidth}
      \centering
      \includegraphics[width=\linewidth]{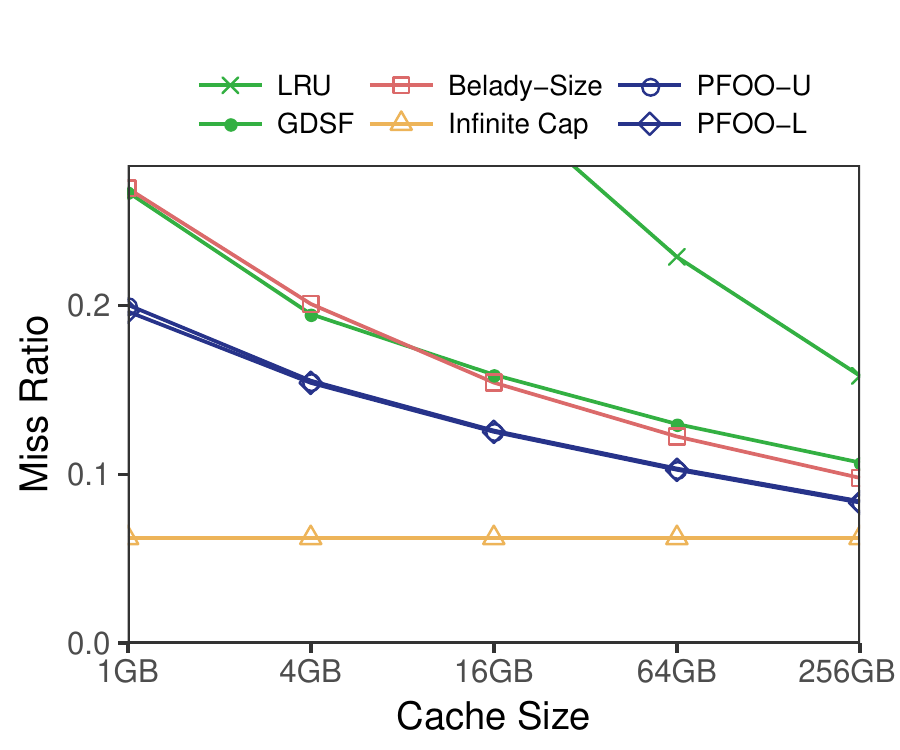} 
      \caption{CDN 2}
      \label{fig:eval:pfoo:cdn2}
    \end{subfigure}
    \begin{subfigure}[b]{.35\textwidth}
      \centering
      \includegraphics[width=\linewidth]{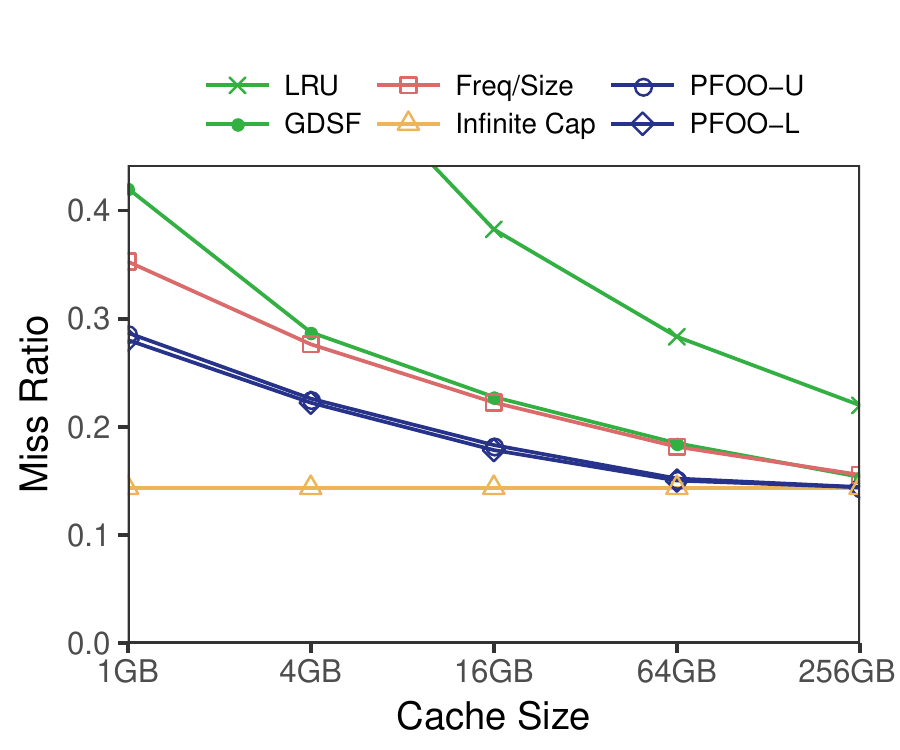}
      \caption{CDN 3}
      \label{fig:eval:pfoo:cdn3}
    \end{subfigure}
    \hspace{4mm}
    \begin{subfigure}[b]{.35\textwidth}
      \centering
      \includegraphics[width=\linewidth]{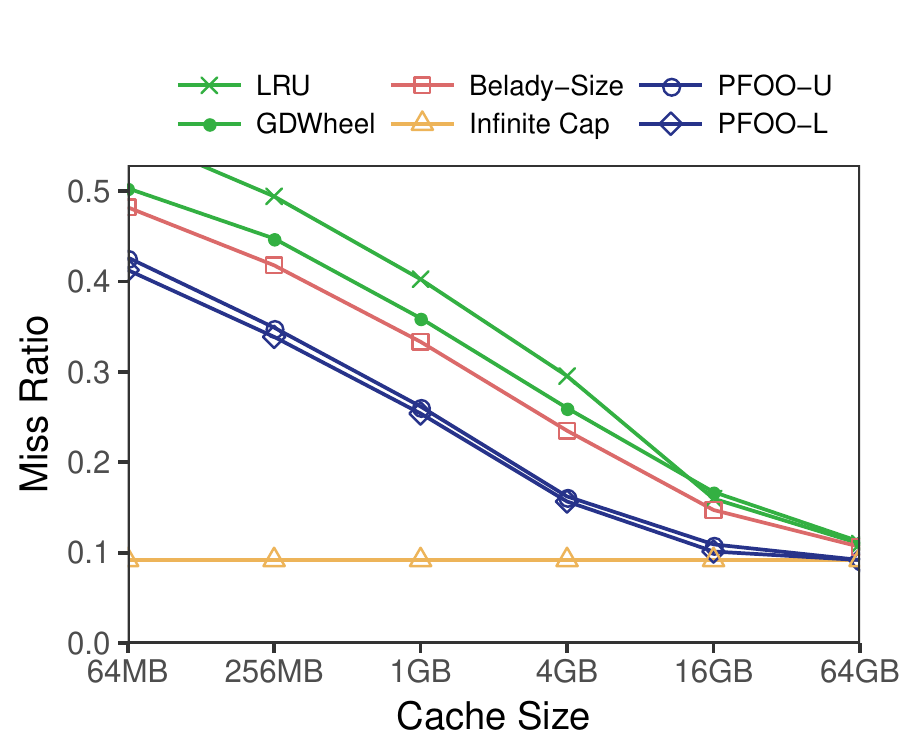}
      \caption{WebApp 1}
      \label{fig:eval:pfoo:webapp1}
    \end{subfigure}

  \centering
  \begin{subfigure}[b]{.35\textwidth}
    \centering
    \includegraphics[width=\linewidth]{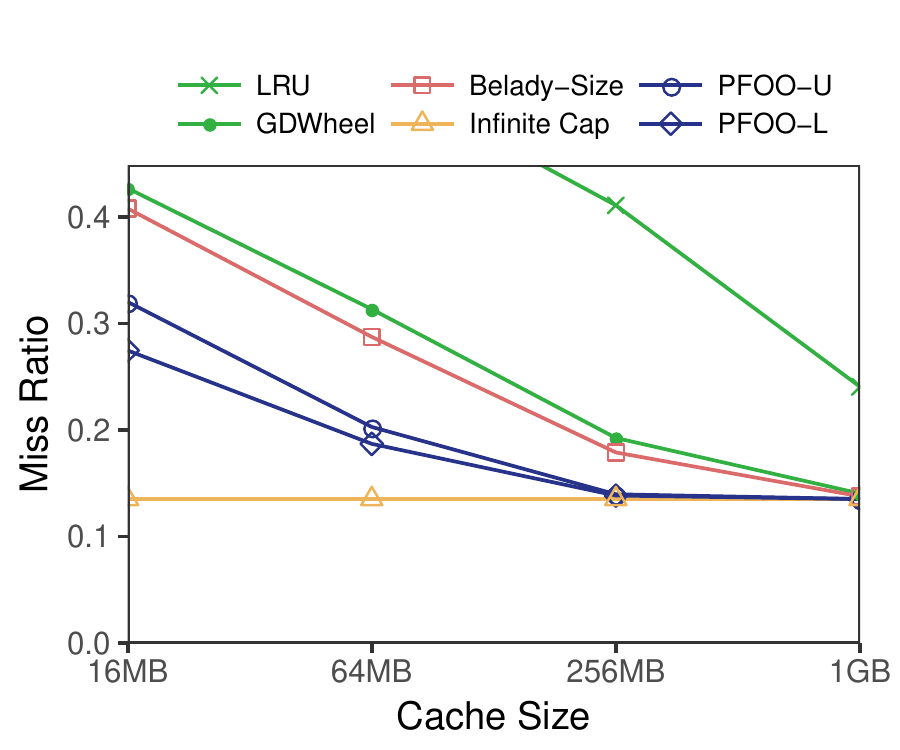}
    \caption{WebApp 2}
    \label{fig:eval:pfoo:webapp2}
  \end{subfigure}
  \hspace{4mm}
  \begin{subfigure}[b]{.35\textwidth}
    \centering
    \includegraphics[width=\linewidth]{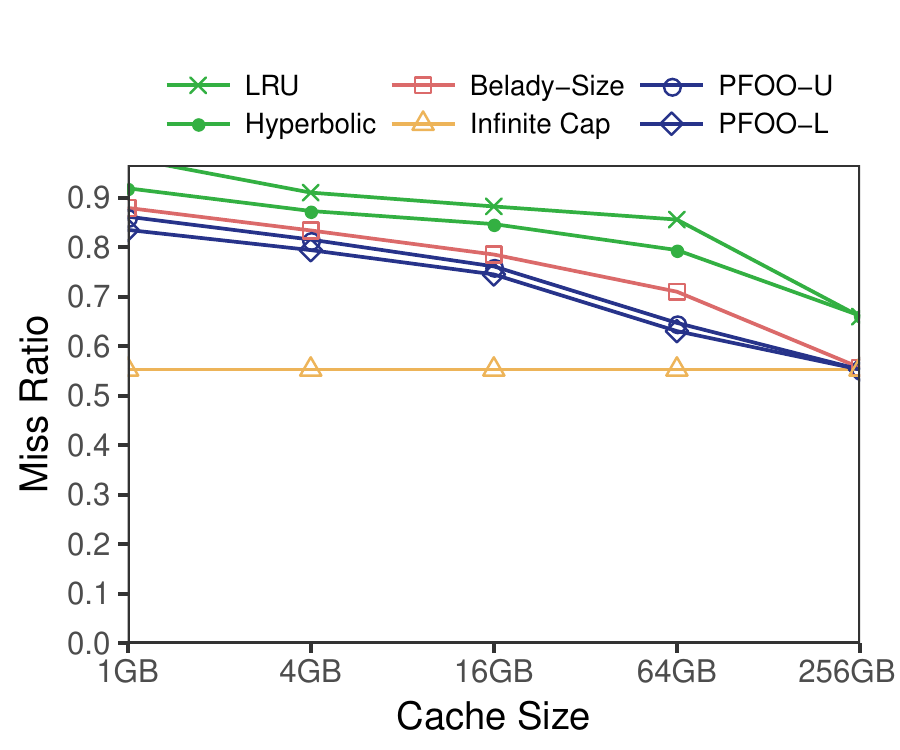}
    \caption{Storage 1}
    \label{fig:eval:pfoo:storage1}
  \end{subfigure}
  \begin{subfigure}[b]{.35\textwidth}
    \centering
    \includegraphics[width=\linewidth]{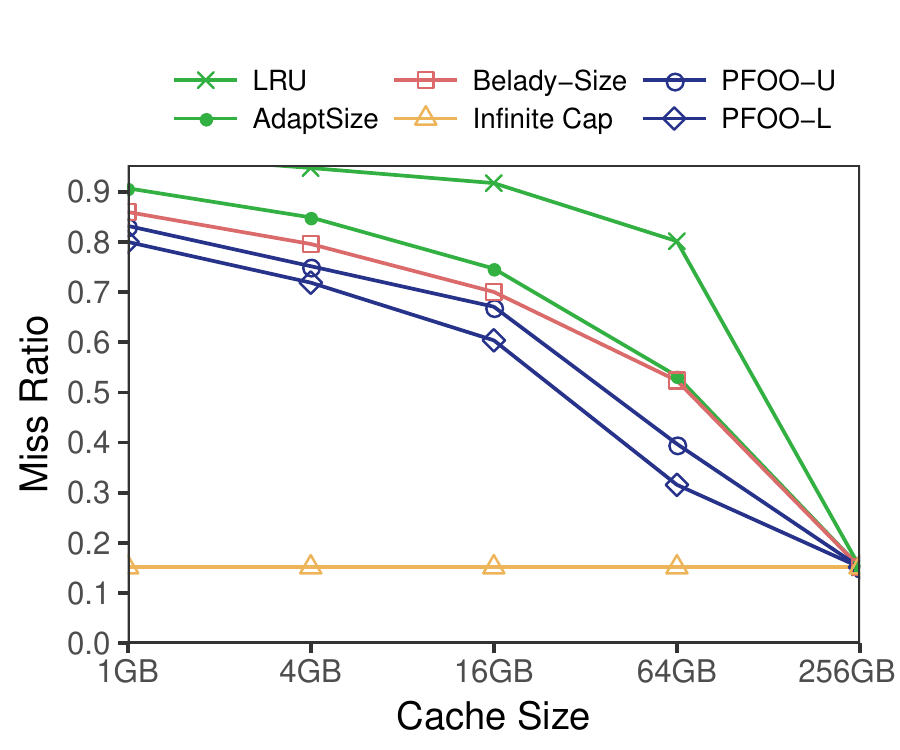}
    \caption{Storage 2}
    \label{fig:eval:pfoo:storage2}
  \end{subfigure}
  \hspace{4mm}
  \begin{subfigure}[b]{.35\textwidth}
    \centering
    \includegraphics[width=\linewidth]{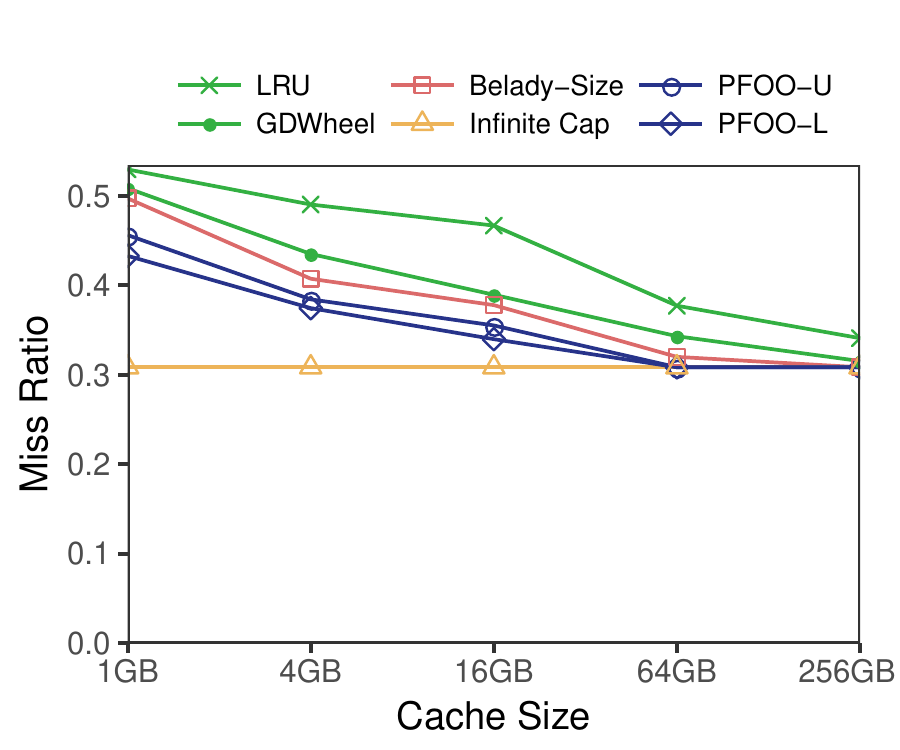}
    \caption{Storage 3}
    \label{fig:eval:pfoo:storage3}
  \end{subfigure}
  \caption{Miss ratio curves for PFOO vs.\ LRU, Infinite-Cap, the best prior offline upper bound, and the best online policy for our eight production traces.}
  \label{fig:eval:pfoo}
\end{figure*}

\subsection{PFOO shows that there is significant room for improvement in online policies}

Finally, we compare with online caching policies.
\autoref{fig:eval:pfoo}
show the best online policy (by average miss ratio) and the offline bounds for each trace.
We also show LRU for reference on all traces.

On all traces at most cache capacities, there is a large gap between the best online policy and PFOO-U,
showing that there remains significant room for improvement in online caching policies.
Moreover, \emph{this gap is much larger than prior offline bounds would suggest.}
On average, PFOO-U achieves 27\% fewer misses than the best online policy,
whereas the best prior offline policy achieves only 7.2\% fewer misses;
the miss ratio gap between online policies and offline optimal is thus 3.75$\times$ as large as implied by prior bounds.
The storage traces are the only ones where PFOO does not consistently increase this gap vs.\ prior offline bounds,
but even on these traces there is a large difference at some sizes (e.g., at 64\,GB in \autoref{fig:eval:pfoo:storage2}).
On CDN and WebApp traces, the gap is much larger.

For example, on CDN 2, GDSF (the best online policy) matches Belady-Size (the best prior offline upper bound) at most cache capacities.
One would therefore conclude that existing online policies are nearly optimal,
but PFOO-U reveals that there is in fact a large gap between GDSF and OPT on this trace,
as it is 21\% lower on average (refer back to \autoref{fig:intro}).

These miss ratio reductions make a large difference in real systems.
For example, on CDN 2, CDN 3, and WebApp 1, OPT requires just 16\,GB to match the miss ratio of the best prior offline bound at 64\,GB (the $x$-axis in these figures is shown in log-scale).
Prior bounds thus suggest that online policies require 4$\times$ as much cache space as is necessary.



\balance

\section{Conclusion and future work}
\label{sec:conclusion}

We began this paper by asking: \emph{Should the systems community continue trying to improve miss ratios, or have all achievable gains been exhausted?}
We have answered this question by developing new techniques, FOO and PFOO, to accurately and quickly estimate OPT with variable object sizes.
Our techniques reveal that prior bounds for OPT lead to qualitatively wrong conclusions about the potential for improving current caching systems.
Prior bounds indicate that current systems are nearly optimal,
whereas PFOO reveals that misses can be reduced by up to 43\%.

Moreover, since FOO and PFOO yield constructive solutions,
they offer a new path to improve caching systems.
While it is outside the scope of this work,
we plan to investigate adaptive caching systems that tune their parameters online by learning from PFOO, 
e.g., by recording a short window of past requests and running PFOO to estimate OPT's behavior in real time.
For example, AdaptSize, a caching system from 2017, assumes that admission probability should decay exponentially with object size.
This is unjustified, and we can learn an optimized admission function from PFOO's decisions.

This paper gives the first principled way to evaluate caching policies with variable object sizes:
FOO gives the first asymptotically exact, polynomial-time bounds on OPT,
and PFOO gives the first practical and accurate bounds for long traces.
Furthermore, our results are verified on eight production traces from several large Internet companies, including CDN, web application, and storage workloads,
where FOO reduces approximation error by 4000$\times$.
We anticipate that FOO and PFOO will prove important tools
in the design of future caching systems.


\section{Acknowledgments}
We thank Jens Schmitt, as well as our ACM POMACS shepherd Niklas Carlsson and the anonymous reviewers for their valuable feedback.

Daniel S. Berger was supported by the Mark Stehlik Postdoctoral Teaching Fellowship of the School of Computer Science at Carnegie Mellon University.
Nathan Beckmann was supported by a Google Faculty Research Award.
Mor Harchol-Balter was supported by NSF-CMMI-1538204 and NSF-XPS-1629444.

{\small
\bibliographystyle{abbrv}
\bibliography{../refs,../daniel}
}

\appendix
\vspace{3mm}
\section{Proofs}

\subsection{Proof of equivalence of interval and classic ILP representations of OPT}\label{sec:proof:lma:opt-equivalence}

\paragraph{Classic representation of OPT}\label{sec:foo:ilp}

The literature uses an integer linear program (ILP) representation of OPT~\cite{albers1999page}.
\autoref{fig:foo:classic:ilp} shows this classic ILP representation on the example trace from \autoref{sec:foo}.
The ILP uses decision variables $x_{i,k}$ to track at each time $i$ whether object $k$ is cached or not.
The constraint on the cache capacity is naturally represented: the sum of the sizes for all cached objects must be less than the cache capacity for every time $i$.
Additional constraints enforce that OPT is not allowed to prefetch objects (decision variables must not increase if the corresponding object is not requested) and that the cache starts empty.

\begin{figure}[h]
  \centering
  \begin{tabular}{p{8mm} c}
  \toprule
  Object
  &
  \large
  \bf
  \makebox[0.22in][l]{\textcolor{darkyellow}{a}}
  \makebox[0.22in][l]{\textcolor{darkred}{b}}
  \makebox[0.22in][l]{\textcolor{blue}{c}}
  \makebox[0.22in][l]{\textcolor{darkred}{b}}
  \makebox[0.22in][l]{\textcolor{darkgreen}{d}}
  \makebox[0.22in][l]{\textcolor{darkyellow}{a}}
  \makebox[0.22in][l]{\textcolor{blue}{c}}
  \makebox[0.22in][l]{\textcolor{darkgreen}{d}}
  \makebox[0.22in][l]{\textcolor{darkyellow}{a}}
  \makebox[0.24in][l]{\textcolor{darkred}{b}}
  \makebox[0.05in][l]{\textcolor{darkyellow}{a}\dots}
    \\
    \parbox[t]{8mm}{\multirow{4}{*}{\rotatebox[origin=c]{90}{
    \begin{minipage}[t]{8mm}
      Decision\\ Variables
    \end{minipage}
    \hspace{1mm}
    }}}
  &
    \makebox[0.22in][l]{\textcolor{darkyellow}{$x_{1,a}$}}
    \makebox[0.22in][l]{\textcolor{darkyellow}{$x_{2,a}$}}
    \makebox[0.22in][l]{\textcolor{darkyellow}{$x_{3,a}$}}
    \makebox[0.22in][l]{\textcolor{darkyellow}{$x_{4,a}$}}
    \makebox[0.22in][l]{\textcolor{darkyellow}{$x_{5,a}$}}
    \makebox[0.22in][l]{\textcolor{darkyellow}{$x_{6,a}$}}
    \makebox[0.22in][l]{\textcolor{darkyellow}{$x_{7,a}$}}
    \makebox[0.22in][l]{\textcolor{darkyellow}{$x_{8,a}$}}
    \makebox[0.22in][l]{\textcolor{darkyellow}{$x_{9,a}$}}
    \makebox[0.24in][l]{\textcolor{darkyellow}{$x_{10,a}$}}
    \makebox[0.22in][l]{\textcolor{darkyellow}{$x_{11,a}$}}
  \\
    &
    \makebox[0.22in][l]{\textcolor{darkred}{$x_{1,b}$}}
    \makebox[0.22in][l]{\textcolor{darkred}{$x_{2,b}$}}
    \makebox[0.22in][l]{\textcolor{darkred}{$x_{3,b}$}}
    \makebox[0.22in][l]{\textcolor{darkred}{$x_{4,b}$}}
    \makebox[0.22in][l]{\textcolor{darkred}{$x_{5,b}$}}
    \makebox[0.22in][l]{\textcolor{darkred}{$x_{6,b}$}}
    \makebox[0.22in][l]{\textcolor{darkred}{$x_{7,b}$}}
    \makebox[0.22in][l]{\textcolor{darkred}{$x_{8,b}$}}
    \makebox[0.22in][l]{\textcolor{darkred}{$x_{9,b}$}}
    \makebox[0.24in][l]{\textcolor{darkred}{$x_{10,b}$}}
    \makebox[0.22in][l]{\textcolor{darkred}{$x_{11,b}$}}
    \\
        &
    \makebox[0.22in][l]{\textcolor{blue}{$x_{1,c}$}}
    \makebox[0.22in][l]{\textcolor{blue}{$x_{2,c}$}}
    \makebox[0.22in][l]{\textcolor{blue}{$x_{3,c}$}}
    \makebox[0.22in][l]{\textcolor{blue}{$x_{4,c}$}}
    \makebox[0.22in][l]{\textcolor{blue}{$x_{5,c}$}}
    \makebox[0.22in][l]{\textcolor{blue}{$x_{6,c}$}}
    \makebox[0.22in][l]{\textcolor{blue}{$x_{7,c}$}}
    \makebox[0.22in][l]{\textcolor{blue}{$x_{8,c}$}}
    \makebox[0.22in][l]{\textcolor{blue}{$x_{9,c}$}}
    \makebox[0.24in][l]{\textcolor{blue}{$x_{10,c}$}}
    \makebox[0.22in][l]{\textcolor{blue}{$x_{11,c}$}}
    \\
        &
    \makebox[0.22in][l]{\textcolor{darkgreen}{$x_{1,d}$}}
    \makebox[0.22in][l]{\textcolor{darkgreen}{$x_{2,d}$}}
    \makebox[0.22in][l]{\textcolor{darkgreen}{$x_{3,d}$}}
    \makebox[0.22in][l]{\textcolor{darkgreen}{$x_{4,d}$}}
    \makebox[0.22in][l]{\textcolor{darkgreen}{$x_{5,d}$}}
    \makebox[0.22in][l]{\textcolor{darkgreen}{$x_{6,d}$}}
    \makebox[0.22in][l]{\textcolor{darkgreen}{$x_{7,d}$}}
    \makebox[0.22in][l]{\textcolor{darkgreen}{$x_{8,d}$}}
    \makebox[0.22in][l]{\textcolor{darkgreen}{$x_{9,d}$}}
    \makebox[0.24in][l]{\textcolor{darkgreen}{$x_{10,d}$}}
    \makebox[0.22in][l]{\textcolor{darkgreen}{$x_{11,d}$}}
    \\
  \bottomrule
  \end{tabular}
      \vspace{-1em}
  \caption{Classic ILP representation of OPT.}
  \label{fig:foo:classic:ilp}
\end{figure}

\begin{lma}
  Under Assumption~\ref{ass:prefetching}, our ILP in Definition~\ref{def:ilp} is equivalent to the classical ILP from ~\cite{albers1999page}.
\end{lma}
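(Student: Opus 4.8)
The plan is to prove equivalence by exhibiting a correspondence between feasible solutions of the two programs that preserves the number of misses, and then arguing that this correspondence carries optima to optima. First I would fix notation for the classic program of \autoref{fig:foo:classic:ilp}: let $x_{i,k}\in\{0,1\}$ record whether object $k$ is resident just before time $i$, subject to the capacity constraint $\sum_k s_k x_{i,k}\le C$ at every time $i$, the empty-cache initialization, and the no-prefetching constraint from \autoref{ass:prefetching}, which forces $x_{i,k}\le x_{i-1,k}$ whenever $\sigma_i\neq k$ (a residency bit can rise only at a request to that object). Its objective counts a miss at each request $i$ where $\sigma_i$ is not resident just before $i$.

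The heart of the argument is a canonicalization lemma: any feasible classic solution can be transformed, without increasing misses or violating capacity, into one in which each object is resident exactly on a union of full request-to-next-request intervals $[i,\ell_i)$. The key observation is that residency on a strict prefix of an interval is wasteful. Suppose $\sigma_i=k$ is resident at time $i$ but evicted at some time strictly before $\ell_i$; by no-prefetching it cannot re-enter before $\ell_i$ (there is no request to $k$ in between), so the request at $\ell_i$ is a miss regardless. I would therefore evict $k$ immediately after time $i$: this leaves the miss at $\ell_i$ unchanged, only slackens the capacity constraint at intermediate times, and hence preserves both feasibility and the objective. Dually, if $k$ is resident just before $\ell_i$, then monotonicity of $x_{\cdot,k}$ on $(i,\ell_i)$ forces it to have entered at request $i$ and remained resident throughout $[i,\ell_i)$. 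After canonicalization the residency pattern of each object is fully described by, for each $i\in I$, a single bit indicating whether $[i,\ell_i)$ is held.

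With the canonical form in hand, I would define the map to \autoref{def:ilp} by setting $x_i=1$ iff $\sigma_i$ is resident throughout $[i,\ell_i)$. Under this map the classic capacity constraint at time $i$, namely $\sum_k s_k x_{i,k}\le C$, becomes exactly $\sum_{j:\,j<i<\ell_j} s_j x_j\le C$, which is Eq.~\eqref{eq:ilpcapacity}, because the held intervals covering time $i$ are precisely the $j$ with $j<i<\ell_j$. A request at $\ell_i$ is a hit iff $x_i=1$, so the reuse misses equal $\sum_{i\in I}(1-x_i)$, while the $M$ compulsory first-access misses are fixed across all feasible solutions and contribute only an additive constant. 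The map is invertible on canonical solutions, so the two programs have the same minimizers and agree in objective value up to the additive constant $M$.

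I expect the canonicalization step to be the main obstacle, since it must invoke \autoref{ass:prefetching} precisely on both sides: the removal of partial-interval residency rests on eviction being ``free'' in the objective (it never creates a miss beyond the one already incurred at $\ell_i$) while only relaxing capacity, and the converse rests on residency at the endpoint forcing residency over the whole interval. Once canonical form is established, matching the capacity constraints term by term and matching the objectives up to the constant $M$ are routine.
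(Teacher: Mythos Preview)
Your proposal is correct and follows essentially the same approach as the paper's proof sketch: both use \autoref{ass:prefetching} to argue that caching decisions may be taken constant on each interval $[i,\ell_i)$ without loss (your canonicalization lemma), and then exhibit the natural bijection between interval variables and per-time residency bits. Your version is in fact more thorough than the paper's sketch---you explicitly verify the capacity constraints match term-by-term and account for the additive constant $M$ from compulsory misses, points the paper leaves implicit.
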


\begin{proof}[Proof sketch]
Under Assumption~\ref{ass:prefetching}, OPT changes the caching decision of object $k$ only at times $i$ when $\sigma_i=k$.
To see why this is true, let us consider the two cases of changing a decision variable\ $x_{k,j}$ for $i<j<\ell_i$.
If $x_{k,i}=0$, then OPT cannot set $x_{k,j}=1$ because this would violate Assumption~\ref{ass:prefetching}.
Similarly, if $x_{k,j} = 0$, then setting $x_{k,i}=1$ does not yield any fewer misses, so we can safely assume that $x_{k, i} = 0$.
Hence, decisions do not change within an interval in the classic ILP formulation.

To obtain the decision variables $x'_{p,i}$ of the classical ILP formulation of OPT from a given solution $x_i$ for the interval ILP, set $x'_{\sigma_i,j} = x_i$ for all $i\leq j < \ell_i$, and for all $i$.
This leads to an equivalent solution 
because the capacity constraint is enforced at every time step.
\end{proof}

\subsection{Proof of \autoref{lma:couponinterpretation}}\label{sec:proof:lma:couponinterpretation}

This section proves \autoref{lma:couponinterpretation} from \autoref{sec:opt:coupons}, which bounds $T_{B}$, i.e., the first time after $i$ that all intervals in $B$ are completed.

Recall the definitions of the generalized CCP, where coupon probabilities follow a distribution $p$, and the definition of the classical CCP, where coupons have equal request probability.

We will make use of the following result, which immediately follows from~\cite[Theorem 4, p.\ 415]{anceaume2015new}.
\begin{theorem}\label{thm:uniformcoupon}
 For $b\geq 0$ coupons, any probability vector $p$, and the equal-probability vector $q=(1/b,\dots,1/b)$, it holds that 
 $\Pb{T_{b,p} < l} \leq \Pb{T_{b,q} < l}$ for any $l\geq 0$.
\end{theorem}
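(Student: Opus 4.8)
The plan is to read \autoref{thm:uniformcoupon} as a majorization statement: the uniform vector $q=(1/b,\dots,1/b)$ is majorized by \emph{every} probability vector $p$ on $b$ points, so it suffices to show that the completion probability $p\mapsto\Pb{T_{b,p}<l}$ is Schur-concave on the simplex. A Schur-concave function attains its maximum at the majorization-minimal point $q$, which is exactly the claimed inequality $\Pb{T_{b,p}<l}\leq\Pb{T_{b,q}<l}$. Equivalently, I would show the tail (``not yet collected'') probability $p\mapsto\Pb{T_{b,p}\geq l}$ is Schur-convex.

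The cleanest route to the ordering is Poissonization. I would embed the draws in a rate-$1$ Poisson process, so that arrivals of coupon type $i$ form independent Poisson processes of rate $p_i$ and the continuous completion time is $\tau_p=\max_i E_i$ with $E_i\sim\mathrm{Exp}(p_i)$ independent. Then
\begin{align}
  \Pb{\tau_p\leq t}=\prod_{i=1}^b\bigl(1-e^{-p_i t}\bigr).
\end{align}
Taking logarithms gives $\sum_i \log\bigl(1-e^{-p_i t}\bigr)$, and since $r\mapsto\log(1-e^{-rt})$ has negative second derivative in $r$, this is a symmetric sum of a concave function of each coordinate and is therefore Schur-concave. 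Hence $\Pb{\tau_p\leq t}$ is maximized at $q$ for every $t\geq 0$, i.e.\ $\tau_q\preceq_{\mathrm{st}}\tau_p$.

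It remains to de-Poissonize, i.e.\ to carry the ordering from the continuous time $\tau$ back to the discrete draw count $T_{b,\cdot}$, and this is where I expect the main obstacle to lie. Conditioning on the Poisson number of arrivals $N(t)$ yields the mixture identity
\begin{align}
  \Pb{\tau_p\leq t}=\sum_{n\geq 0} e^{-t}\,\tfrac{t^n}{n!}\,\Pb{T_{b,p}\leq n},
\end{align}
so the continuous inequality only asserts that the power series $\sum_n \tfrac{t^n}{n!}\bigl(\Pb{T_{b,q}\leq n}-\Pb{T_{b,p}\leq n}\bigr)$ is nonnegative for all $t\geq 0$, which does not force the per-$n$ differences to be nonnegative. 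Extracting $\Pb{T_{b,q}\leq n}\geq\Pb{T_{b,p}\leq n}$ directly is delicate: the discrete inclusion--exclusion $\Pb{T_{b,p}\leq n}=\sum_{S\subseteq[b]}(-1)^{|S|}(1-P_S)^n$ with $P_S=\sum_{i\in S}p_i$ defeats a naive Schur--Ostrowski check because the alternating signs prevent pairing the affected terms monotonically. I would therefore finish with a careful pairwise ``Robin Hood'' transfer on the discrete process---balancing two coordinates $p_i>p_j$ while holding $p_i+p_j$ fixed and coupling the two chains so that balancing never delays completion---which is precisely the non-trivial argument of Anceaume et al.\ \cite{anceaume2015new}; invoking their Theorem~4 discharges this step outright.
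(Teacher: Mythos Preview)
The paper does not prove this theorem at all: it states \autoref{thm:uniformcoupon} as a direct consequence of \cite[Theorem~4, p.~415]{anceaume2015new} and moves on. Your proposal is therefore strictly more detailed than the paper's own treatment. The Poissonization/Schur-concavity argument you give for the continuous completion time $\tau_p=\max_i E_i$ is correct and well known, and you are right that de-Poissonization is the genuine obstacle---the mixture identity you wrote down does not let you read off the per-$n$ inequality, and the discrete inclusion--exclusion formula resists a naive Schur--Ostrowski check. Since you close that gap by invoking Anceaume et al.\ anyway, your proposal and the paper ultimately rest on the same external result; you have simply added a (correct) continuous-time warm-up that the paper omits.
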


We will use this result to prove a bound on $T_{B}$, which is defined as $T_{B} = \max_{j \in B}\ell_j-i$.
The proof bounds $T_{B}$ first using a generalized CCP and then a classical CCP.

\begin{proof}[Proof of \autoref{lma:couponinterpretation}]

We first bound $T_{B}$ via a generalized CCP with stopping time $T_{b,p}$ and $p=(p_1,\dots,p_{b})$ with
  \begin{align}
    p_i = \Pb{\text{object }k\text{ is requested }\vert \;k\in B}\quad\text{ under }\mathcal{P}^M~~.
  \end{align}
  As $T_{B}$ contains requests to other objects $j\notin B$, it always holds that $T_{B}\geq T_{b,p}$.
  \autoref{fig:coupon:mapping} shows such a case, where $T_{B}$ is extended because of requests to uncached objects and large cached objects.
  $T_{b,p}$, on the other hand, does not include these other requests, and is thus always shorter or equal to $T_{B}$.
  This inequality bounds the probabilities for all $l\geq 0$.
  \begin{align} 
    \Pb{T_{B} < l} & \leq \Pb{T_{b,p} < l} \\
    \intertext{We then apply \autoref{thm:uniformcoupon}.}
    & \leq \Pb{T_{b,q} < l}
  \end{align}
\end{proof}

  \begin{figure}[h]
    \centering
    \includegraphics[width=.6\linewidth]{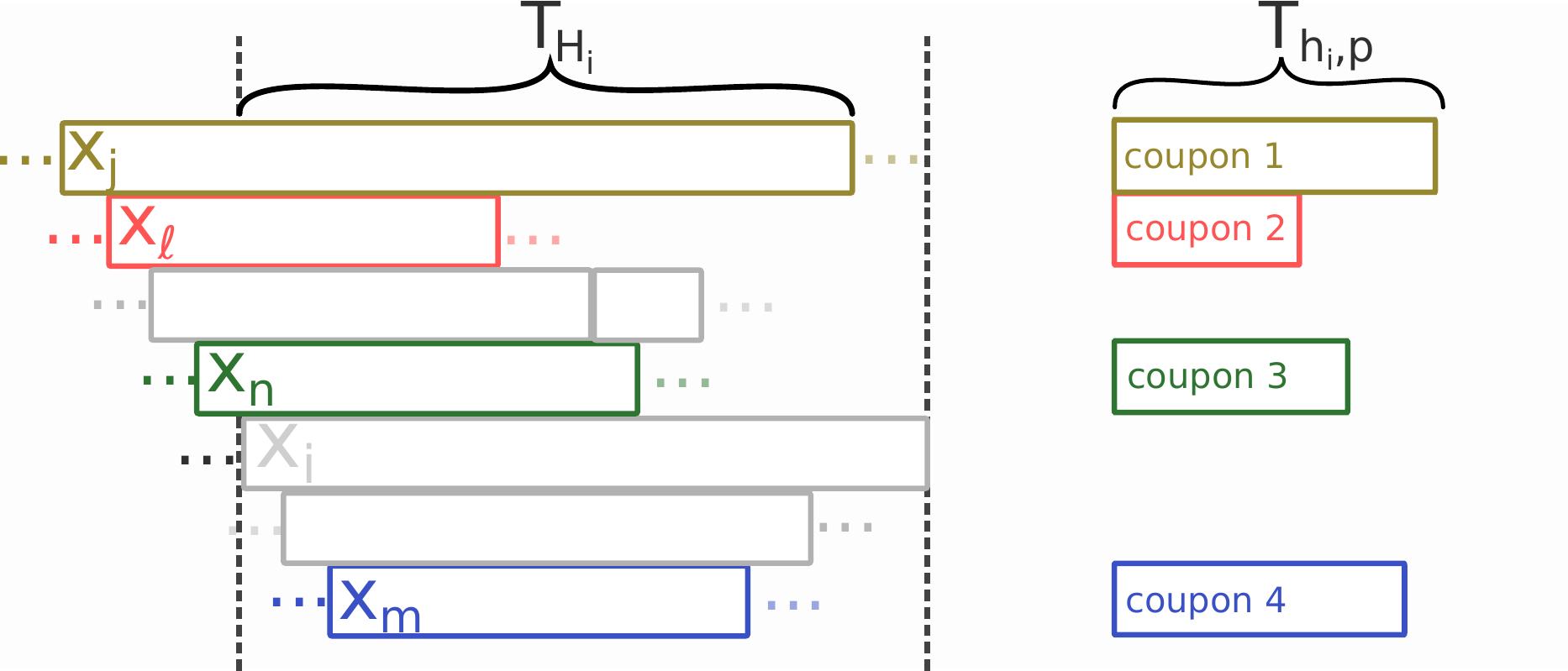}
    \caption{Translation of the time until all $B$ objects are requested once, $T_{B}$ into a coupon-collector problem (CCP), $T_{b,p}$. As the CCP is based on fewer coupons (only objects $\in B$), the CCP serves as a lower bound on $T_{B}$.}
    \label{fig:coupon:mapping}
  \end{figure}

\subsection{Proof of \autoref{lma:fractionprobability}}\label{sec:proof:lma:fractionprobability}

This section proves \autoref{lma:fractionprobability} from \autoref{sec:opt:stochastic}.
Recall the definition of large and unpopular objects (\autoref{def:popular}).
At a high level, the proof shows that the remaining, ``typical'' objects are almost always part of a precedence relation.
As a result, the probability of non-integer decision variables for typical objects vanishes as the number of objects $M$ grows large.

Before we state the proof of \autoref{lma:fractionprobability}, we introduce two auxiliary results, which are proven in Appendices~\ref{sec:ap:proof:lma:hi:infinite} and~\ref{sec:proof:lemma:ccpbound}).

The first auxiliary result shows that the number of cached objects $h_i$ goes to infinity as the cache capacity $C$ and the number of objects $M$ go to infinity.

\begin{lma}[Proof in \autoref{sec:ap:proof:lma:hi:infinite}]\label{lma:hi:infinite}
  For $i>N^*$ from Definition~\ref{def:popular}, $\Pb{h_i \rightarrow \infty}=1$ as $M\rightarrow \infty$.
\end{lma}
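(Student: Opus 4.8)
The plan is to prove the equivalent statement that for every fixed constant $x$, $\Pb{h_i < x} \to 0$ as $M \to \infty$; since $x$ is arbitrary this gives $h_i \to \infty$. Recall that by Assumption~\ref{ass:localirm} the size distribution has a finite maximum $\bar s := \max_k s_k$ that does not depend on $M$, while by Assumption~\ref{ass:inf} together with the scaling of \autoref{thm:exactness} the capacity $C$ grows without bound; hence $h^* = C/\bar s \to \infty$.

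First I would translate ``$h_i$ is small'' into a deterministic statement about the occupancy curve. Let $o(t)$ denote the cached bytes straddling the gap after request $t$, i.e.\ the flow on the inner edge $(t,t+1)$ in FOO-L's solution. Two elementary facts drive the argument. \emph{(a)}~Optimality of min-cost flow: if any flow travels on an outer edge $(j,\ell_j)$ (so $x_j<1$), then some inner edge inside $[j,\ell_j)$ must be saturated at capacity $C$; otherwise we could reroute a little flow from the outer edge onto the inner path, strictly lowering cost, contradicting optimality~\cite{ahuja1993network}. Equivalently, if $o(t)<C$ throughout the span of interval $j$, then $x_j=1$. \emph{(b)}~Lipschitz continuity: each request changes $o$ by at most $2\bar s$, since only the (at most two) outer edges incident to one node, each carrying flow at most $\bar s$, and that node's supply/demand can change. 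Combining these, on the event $\{h_i<x\}$ we have $o(i)\le h_i\,\bar s < x\bar s$, so by \emph{(b)}, $o(t) < x\bar s + 2\bar s\,|t-i| < C$ for all $|t-i| < W$, where $W := (h^*-x)/2 \to \infty$. Thus the cache is never full inside the window $(i-W,i+W)$, and by \emph{(a)} every interval whose entire span lies in this window is fully cached.

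The next step reduces $h_i$ to a simple occupancy count. Let $A$ be the set of objects requested in $(i-W,i)$ and $A'$ the set requested in $(i,i+W)$. Any object other than $\sigma_i$ lying in both $A$ and $A'$ has its straddling interval contained in $(i-W,i+W)$, hence fully cached on the event above, hence counted in $H_i$; distinct objects give distinct intervals. Therefore $h_i \ge |A\cap A'| - 1$ on $\{h_i<x\}$, which forces $|A\cap A'| < x+1$, so $\Pb{h_i<x} \le \Pb{|A\cap A'| < x+1}$. It then remains to show that the number of objects appearing in both of two disjoint windows of $W$ i.i.d.\ requests diverges in probability.

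For this last probabilistic step I would use a second-moment argument. Writing $\rho_k$ for the request probability of object $k$ and $p_k = 1-(1-\rho_k)^W$, independence of the two windows gives $\Ex{|A\cap A'|} = \sum_k p_k^2 \ge (1-e^{-1})^2\,\big|\{k:\rho_k\ge 1/W\}\big|$, since $\rho_k\ge 1/W$ forces $p_k\ge 1-e^{-1}$. A standard concentration bound — cleanest after Poissonizing the window lengths, which makes the per-object appearance events independent — yields $\Var(|A\cap A'|)=O\!\left(\Ex{|A\cap A'|}\right)$, whence Chebyshev gives $\Pb{|A\cap A'|<x+1}\to 0$ provided $\Ex{|A\cap A'|}\to\infty$. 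The main obstacle is establishing exactly this divergence, equivalently that $\big|\{k:\rho_k\ge 1/W\}\big|\to\infty$: here the divergence of the popularity sequence $\psi_k$ (Assumption~\ref{ass:inf}) and the constant-miss-ratio scaling must be invoked, since intuitively, if only finitely many objects had probability of order $1/W\sim 1/h^*$, then a cache holding $h^*$ objects would already capture essentially all reuse and the miss ratio could not stay constant as $M\to\infty$. Making this quantitative, and handling the boundary times $i$ within $W$ of the trace's ends (a vanishing fraction), is the delicate part of the full argument in \autoref{sec:ap:proof:lma:hi:infinite}.
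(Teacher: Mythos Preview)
Your approach is sound but takes a genuinely different route from the paper's. Both arguments identify a window around $i$ in which the cache cannot be full, so that any interval lying entirely inside it must be cached by optimality; the difference is in how that window is obtained and how the tail bound is closed.

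The paper uses a \emph{random} window $(u,v)$, defined as the maximal stretch containing $i$ on which $h_j<h^*$, and proceeds by induction on $z=0,\dots,x$: on $\{h_i\le z\}$, at least $h^*-z$ distinct objects must be requested on each side of $i$ inside $(u,v)$ (since that many cached intervals begin/end there), and the probability that none of the left-side objects recurs on the right is bounded, via a uniform-popularity worst-case argument, by $\bigl(1-(h^*-z)/M\bigr)^{h^*-z}$. No Lipschitz estimate and no second-moment computation appear.

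You instead fix a \emph{deterministic} window of width $W\sim h^*$ from the Lipschitz property of the inner-edge flow, reducing everything to the occupancy count $|A\cap A'|$ over two independent blocks of $W$ i.i.d.\ requests. This sidesteps both the induction and the conditioning on the random endpoints $u,v$, which is arguably cleaner. Two minor corrections: the Lipschitz constant is actually $\bar s$, not $2\bar s$ (at each node only one of supply/demand and one incoming and one outgoing outer edge are present, and they share the same size); and $o(i)\le (h_i{+}1)\bar s$ rather than $h_i\bar s$, since the interval starting at $i$ itself also contributes. Neither matters. More substantively, your displayed lower bound $\Ex{|A\cap A'|}\ge (1-e^{-1})^2\,|\{k:\rho_k\ge 1/W\}|$ can be zero even when $\sum_k p_k^2$ diverges (e.g.\ uniform popularities with $W<M$), so the ``delicate part'' you flag will require a direct estimate of $\sum_k p_k^2$ under the constant-miss-ratio scaling rather than this cardinality proxy. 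Finally, Poissonization is unnecessary: the appearance indicators $\{k\in A\}$ in a fixed-length i.i.d.\ block are negatively correlated, and since $A$ and $A'$ are independent this already yields $\Var(|A\cap A'|)\le \Ex{|A\cap A'|}$.
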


The second auxiliary result derives an exponential bound on the lower tail of the distribution of the coupon collector stopping time
as it gets further from its mean (roughly $b_i \log b_i$).

\begin{lma}[Proof in \autoref{sec:proof:lemma:ccpbound}]\label{lma:ccpbound}
  The time $T_{b,q}$ to collect $b>1$ coupons, which have equal probabilities $q=(1/b,\dots,1/b)$, is lower bounded by
  \begin{align}
    \Pb{T_{b,q} \leq b \log b - c\,b} < e^{-c}\quad\text{ for all }c>0~~.
  \end{align}
\end{lma}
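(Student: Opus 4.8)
The plan is to recognize the claim as a left-tail (lower-deviation) bound for the coupon collector stopping time, whose mean is the harmonic sum $b H_b \approx b\log b$, so that $t := b\log b - cb$ sits a distance $\approx cb$ below the mean. If $t<0$ the event is empty and the bound holds trivially, so I would assume $t\ge 0$. The key reformulation is that $\{T_{b,q}\le t\}$ is exactly the event that every one of the $b$ coupon types appears among the first $t$ draws. Writing $B_j$ for the event ``type $j$ appears in the first $t$ draws,'' we have $\Pb{T_{b,q}\le t}=\Pb{\bigcap_{j=1}^{b}B_j}$, and for each $j$, $\Pb{B_j}=1-(1-1/b)^t$ by a direct computation on $t$ independent uniform draws.

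The central inequality I would establish is the product bound
\[
\Pb{T_{b,q}\le t}=\Pb{\textstyle\bigcap_{j}B_j}\;\le\;\prod_{j=1}^{b}\Pb{B_j}=\bigl(1-(1-1/b)^t\bigr)^{b}\;\le\;\exp\!\bigl(-b(1-1/b)^t\bigr),
\]
where the last step uses $1-x\le e^{-x}$. The inequality $\Pb{\bigcap_j B_j}\le\prod_j\Pb{B_j}$ follows from the negative dependence of the occupancy counts: the vector of per-type counts after $t$ draws is multinomial and hence negatively associated, and each indicator $\mathbbm{1}_{B_j}$ is an increasing function of its own count, so the events are negatively correlated in the required multivariate sense. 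If one prefers a fully self-contained derivation, the same bound $\exp(-b(1-1/b)^t)$ can instead be obtained from a Chernoff estimate $\Pb{T_{b,q}\le t}\le e^{\theta t}\,\Ex{e^{-\theta T_{b,q}}}$, using that $T_{b,q}$ is a sum of independent geometrics whose negative moment generating function telescopes into a ratio of falling factorials.

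It then remains to show $b(1-1/b)^t\ge c$. Using the elementary inequality $\log(1-1/b)\ge -1/(b-1)$ gives $(1-1/b)^t\ge e^{-t/(b-1)}$, and substituting $t=b\log b-cb$ yields, after simplification, $b(1-1/b)^t\ge b^{-1/(b-1)}e^{cb/(b-1)}\ge\tfrac12 e^{c}$, since $b^{-1/(b-1)}\ge\tfrac12$ for every integer $b\ge 2$ and $cb/(b-1)\ge c$. Because $e^{c}>2c$ for all $c>0$, this gives $b(1-1/b)^t>c$, whence $\Pb{T_{b,q}\le t}\le\exp\!\bigl(-b(1-1/b)^t\bigr)<e^{-c}$, as claimed (and in fact the argument delivers the stronger $e^{-\frac12 e^{c}}$).

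The main obstacle is the estimate of the final paragraph: the approximation $(1-1/b)^t\approx e^{-t/b}\approx e^{c}/b$ is morally immediate but must be controlled uniformly over all integers $b>1$ and all $c>0$ with $t\ge 0$, which is precisely where the slightly lossy $1/(b-1)$ bound and the scalar inequality $e^{c}>2c$ do the work. A secondary point requiring care is the justification of the product inequality: invoking negative association of the multinomial is clean but relies on an external structural fact, so I would either cite it or, to keep the argument elementary, fall back on the telescoping-MGF computation above, which avoids negative dependence entirely at the cost of a short optimization over $\theta$.
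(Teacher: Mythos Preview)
Your proof is correct and takes a genuinely different primary route from the paper. The paper proceeds exactly via what you describe as your fallback: a Chernoff bound $\Pb{T\le t}\le e^{st}\,\Ex{e^{-sT}}$, expanding $\Ex{e^{-sT}}$ as the product of the Laplace transforms of the independent geometric increments, choosing $s=1/b$ so that (via $e^{1/b}\ge 1+1/b$) the product collapses to $1/(b+1)$, and finishing with $\tfrac{1}{b+1}\,e^{t/b}=\tfrac{b}{b+1}\,e^{-c}<e^{-c}$. Your main argument instead works directly with the occupancy representation $\{T\le t\}=\bigcap_j B_j$, uses negative association of multinomial counts to obtain the product bound, and then controls the analytic quantity $b(1-1/b)^t$ by hand. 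The paper's approach is fully self-contained and algebraically very clean (no external NA fact, no chain of scalar inequalities), while yours delivers the much stronger bound $\exp\!\bigl(-\tfrac12 e^{c}\bigr)$, doubly exponential in $c$; either suffices here since only $e^{-c}$ is used downstream in the proof of Theorem~4. One minor remark: your parenthetical claim that Chernoff yields ``the same bound $\exp(-b(1-1/b)^t)$'' is not quite literal---the MGF route gives $\tfrac{b}{b+1}e^{-c}$ directly rather than passing through that expression---but this does not affect the validity of your main line.
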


With these results in place, we are ready to prove \autoref{lma:fractionprobability}.
This proof proceeds by using elementary probability theory and exploits our previous definitions of $L_i$, $T_{B_i}$, and $T_{b_i,q}$.

\begin{proof}[Proof of \autoref{lma:fractionprobability}]

  We know from \autoref{lma:stochastic:bound} that the probability of non-integer decision variables can be upper bounded using the random variables $L_i$ and $T_{B_i}$.
  \begin{align}
    \Pb{0 < x_i < 1} & \leq \Pb{L_i > T_{B_i}}\\
    \intertext{We expand this expression by conditioning on $L_i=l$.}
                     & = \sum_{l=1}^\infty \Pb{T_{B_i} < l | L_i = l} \Pb{L_i = l}\\
    \intertext{We observe that $\Pb{T_{B_i} < l} = 0$ for $l\leq b_i$ because requesting $b_i$ distinct objects takes at least $b_i$ time steps.}
                     & = \sum_{l=b_i+1}^\infty \Pb{T_{B_i} < l | L_i = l} \Pb{L_i = l}\\
    \intertext{We use the fact that conditioned on $L_i=l$, events $\{L_i=l\}$ and $\{T_{B_i}<l\}$ are stochastically independent.}
                     & = \sum_{l=b_i+1}^\infty \Pb{T_{B_i} < l} \Pb{L_i = l}\\
    \intertext{We split this sum into two parts, $l\leq \Lambda$ and $l>\Lambda$, where $\Lambda=\frac{1}{2}b_i \log b_i$ is chosen such that $\Lambda$ scales slower than the expectation of the underlying coupon collector problem with $b_i = \delta h_i$ coupons. (Recall that $\delta = \vert B_i \vert / \vert H_i \vert$ is the largest fraction of objects in $H_i$, defined in \autoref{def:popular}.)}
                     & \leq \sum_{l=b_i}^{\Lambda} \Pb{T_{B_i} < l} \Pb{L_i = l} \\
                     & \quad + \sum_{l=\Lambda+1}^{\infty}  \Pb{T_{B_i} < l} \Pb{L_i = l}\\
\label{eq:sumpart} & \leq \sum_{l=b_i}^{\Lambda} \Pb{T_{B_i} < l} + \sum_{l=\Lambda+1}^{\infty} \Pb{L_i = l}
\end{align}

\vspace{3mm}
We now bound the two terms in Eq.~\eqref{eq:sumpart}, separately.
For the first term, we start by applying \autoref{lma:couponinterpretation}.
\begin{align}
    \sum_{l=b_i}^{\Lambda} \Pb{T_{B_i} < l} 
  & \leq  \sum_{l=b_i}^{\Lambda} \Pb{T_{b_i,q} \leq l} \\
  \intertext{We rearrange the sum (replacing $l$ by $c$).}
  & = \sum_{c=\frac{1}{2}\log b_i}^{1+\log b_i} \Pb{T_{b_i,q}\leq b_i\log b_i - c\,b_i} \\
  \intertext{We apply \autoref{lma:ccpbound}.}
  & < \sum_{c=\frac{1}{2}\log b_i}^{1+\log b_i} e^{-c}
  \intertext{We solve the finite exponential sum.}
  \label{eq:sumpart1:final} &
                              = \frac{e^2}{e^2-e}\,\frac{1}{\sqrt{b_i}}
\end{align}

\vspace{3mm}
For the second term in Eq.~\eqref{eq:sumpart}, we use the fact that $L_i$'s distribution is Geometric($\rho_{\sigma_i}$) due to Assumption~\ref{ass:localirm}.
\begin{align}
 \sum_{l=\Lambda+1}^{\infty} \Pb{L_i = l} &
                                            = \sum_{l=\Lambda+1}^{\infty} \left(1 -\rho_{\sigma_i} \right)^{l-1}\;\rho_{\sigma_i} \\
  \intertext{We solve the finite sum.}
                                          & = \left( 1-\rho_{\sigma_i}\right)^{\Lambda}\\
  \intertext{We apply Definition~\ref{def:popular}, i.e., $\rho_{\sigma_i} \geq \frac{1}{b_i \log \log b_i}$.}
\label{eq:sumpart2:final}& \leq \left( 1-\frac{1}{b_i \log \log b_i}\right)^{\frac{1}{2}b_i \log b_i}
\end{align}

Finally, combining Eqs.~\eqref{eq:sumpart1:final} and~\eqref{eq:sumpart2:final} yields the following.
\begin{align}
  \Pb{L_i \geq T_{b_i,q}} & < \frac{e^2}{e^2-e}\,\frac{1}{\sqrt{b_i}} + \left( 1-\frac{1}{b_{i} \log \log b_{i}}\right)^{b_i \log b_i}
\end{align}
As $b_i \log\, b_i$ grows faster than $b_i \log \log b_i$, this proves the statement $\Pb{L_i \geq T_{b_i,q}}\rightarrow 0$ as $h_i \rightarrow\infty$ (implying that $b_i = \delta h_i \rightarrow \infty$) due to \autoref{lma:hi:infinite}.
\end{proof}


  \subsection{Proof of \autoref{lma:hi:infinite}}\label{sec:ap:proof:lma:hi:infinite}

  This section proves that, for any time $i>N^*$, the number of cached objects $h_i$ grows infinitely large as the number of objects $M$ goes to infinity.
  Recall that $N^*$ denotes the time after which the cache needs to evict at least one object.
  Throughout the proof, let $\overline{E}$ denote the complementary event of an event $E$.

  \emph{Intuition of the proof.}
  The proof exploits the fact that at least $h^* = C/\max_k s_k$ distinct objects fit into the cache at any time, and that FOO finds an optimal solution (\autoref{lma:mcfcorrectness}).
  Due to Assumption~\ref{ass:inf}, $M \rightarrow \infty$ implies that $C \rightarrow \infty$, and thus $h^* \rightarrow \infty$.
  So, the case where FOO caches only a finite number of objects requires that $h_i < h^*$.
  Whenever $h_i < h^*$ occurs, there cannot exist intervals that FOO could put into the cache.
  If any intervals could be put into the cache, FOO would cache them, due to its optimality.

  Our proof is by induction. We first show that the case of no intervals that could be put into the cache has zero probability, and then prove this successively for larger thresholds.
  
  \begin{proof}[Proof of \autoref{lma:hi:infinite}]
    We assume that $0 < h^* < M$ because $h^*\in \{0,M\}$ leads to trivial hit ratios~$\in \{0,1\}$.

    For arbitrary $i>N^*$ and any $x$ that is constant in $M$, we consider the event $X = \{h_i \leq x\}$.
    Furthermore, let $Z = \{h_i \leq z\}$ for any $0 \leq z \leq x$.
    We prove that $\Pb{X}$ vanishes as $M$ grows by induction over $z$ and corresponding $\Pb{Z}$.

    \autoref{fig:hi:fluid:apx} sketches the number of objects over a time interval including $i$.
    Note that, for any $z \leq x$, we can take a large enough $M$ such that $z < h^*$, because $h^* \rightarrow \infty$.
    So the figure shows $h^* > z$.
    The figure also defines the time interval $[u,v]$, where $u$ is the last time before $i$ when FOO cached $h^*$ objects, and $v$ is the next time after $i$ when FOO caches $h^*$ objects.
    So, for times $j\in (u,v)$, it holds that $h_j < h^*$.

\begin{figure}[h]
  \centering
  \vspace{1mm}
  \includegraphics[width=.7\linewidth]{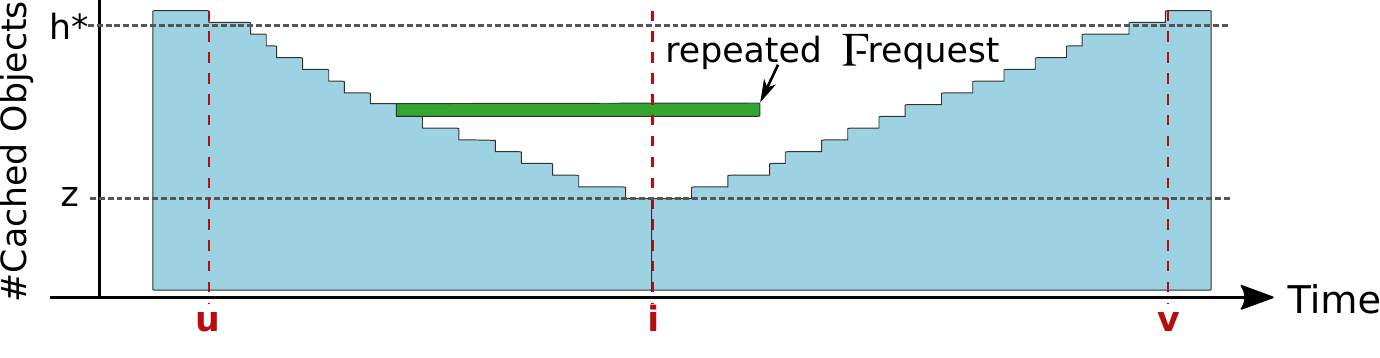}
  \caption{Sketch of the event $Z = \{h_i \leq z\}$, which happens with vanishing probability if $z$ is a constant with respect to $M$.
    The times $u$ and $v$ denote the beginning and the end of the current period where the number of cached objects is less than $h^*=C/\max_k s_k$, the fewest number of objects that fit in the cache.
    We define the set $\Gamma$ of objects that are requested in $(u,i]$.
    If any object in $\Gamma$ is requested in $[i,v)$, then FOO must cache this object (green interval).
    If such an interval exists, $h_i > z$ and thus $Z$ cannot happen.}
  \vspace{1mm}
\label{fig:hi:fluid:apx}
\end{figure}

\emph{Induction base: $z=0$ and event $Z = \{h_i \leq 0\}$.}
In other words, $Z$ means the cache is empty.
Let $\Gamma$ denote the set of distinct objects requested in the interval $(u,i]$.
Note that the event $Z$ requires that, during $(u,i]$, FOO stopped caching all $h^*$ objects.
Because FOO only changes caching decisions at interval boundaries, $\Gamma$ must at least contain $h^*$ objects.
Using the same argument, we observe that there happen at least $h^*$ requests to distinct objects in $[i,v)$.

A request to any $\Gamma$ object in $[i,v)$ makes $Z$ impossible.
Formally, let $A$ denote the event that any object in $\Gamma$ is requested again in $[i,v)$.
We observe that $A \Rightarrow \overline{Z}$ because any $\Gamma$-object that is requested in $[i,v)$ must be cached by FOO due to FOO-L's optimality (\autoref{lma:mcfcorrectness}).
By inverting the implication we obtain $Z \Rightarrow \overline{A}$ and thus $\Pb{Z} \leq \Pb{\overline{A}}$.

We next upper bound $\Pb{\overline{A}}$.
We start by observing that $\Pb{A}$ is minimized (and thus $\Pb{\overline{A}}$ is maximized) if all objects are requested with equal popularities.
This follows because, if popularities are not equal, popular objects are more likely to be in $\Gamma$ than unpopular objects due to the popular object's higher sampling probability (similar to the inspection paradox).
When $\Gamma$ contains more popular objects, it is more likely that we repeat a request in $[i,v)$, and thus $\Pb{A}$ increases ($\Pb{\overline{A}}$ decreases).

We upper bound $\Pb{\overline{A}}$ by assuming that objects are requested with equal probability $\rho_k = 1/M$ for $1\leq k \leq M$.
As the number of $\Gamma$-objects is at least $h^*$, the probability of requesting any $\Gamma$-object is at least $h^*/M$.
Further, we know that $v-i \geq h^*$ and so $$\Pb{\overline{A}} \leq \left(1-\frac{h^*}{M}\right)^{h^*}~~~.$$
We arrive at the following bound.
\begin{align}
  \Pb{\{h_i \leq z\}} = \Pb{Z} \leq  \Pb{\overline{A}} \leq \left(1-\frac{h^*}{M}\right)^{h^*} \longrightarrow 0 \quad \text{as } M \rightarrow \infty
\end{align}

\emph{Induction step: $z-1 \rightarrow z$ for $z \leq x$.}
We assume that the probability of caching only $z-1$ objects goes to zero as $M \rightarrow \infty$.
We prove the same statement for $z$ objects.

As for the induction base, let $\Gamma$ denote the set of distinct objects requested in the interval $(u,i]$, excluding objects in $H_i$.
We observe that $\vert \Gamma \vert \geq h^* -z$, following a similar argument.

We define $\Pb{A}$ as above and use the induction assumption.
As the probability of less than $z-1$ is vanishingly small, it must be that $h_i \geq z$.
Thus, a request to any $\Gamma$ object in $[i,v)$ makes $h_i = z$ impossible.
Consequently, $Z \Rightarrow \overline{A}$ and thus $\Pb{Z} \leq \Pb{\overline{A}}$.

To bound $\Pb{A}$, we focus on the requests in $[i,v)$ that do not go $H_i$-objects.
There are at least $h^*-x$ such requests.
$\Pb{A}$ is minimized if all objects, ignoring objects in $H_i$, are requested with equal popularities.
We thus upper bound $\Pb{\overline{A}}$ by assuming the condition requests happen to objects with equal probability $\rho_k = 1/(M-z)$ for $1\leq k \leq M-z$.
As before, we conclude that the probability of requesting any $\Gamma$-object is at least $(h^*-z)/(M-z)$ and we use the fact that there are at least $h^*-x$ to them in $[i,v)$.
\begin{align}
  \Pb{\{h_i \leq z\}} = \Pb{Z} & \leq  \Pb{\overline{A}}\\
                               & \leq \left(1-\frac{h^*-z}{M-z}\right)^{h^*-z}
\intertext{We then use that $z\leq x$ and that $x$ is constant in $M$.}
 & \leq \left(1-\frac{h^*-x}{M}\right)^{h^*-x} \longrightarrow 0 \quad \text{as } M \rightarrow \infty
\end{align}

In summary, the number of objects cached by FOO at an arbitrary time $i$ remains constant only with vanishingly small probability.
Consequently, this number grows to infinity with probability one.
\end{proof}

\subsection{Proof of Lemma~\ref{lma:ccpbound}}\label{sec:proof:lemma:ccpbound}

\begin{proof}[Proof of Lemma~\ref{lma:ccpbound}]
    
We consider the time $T_{b,q}$ to collect $b>1$ coupons, which have equal probabilities $q=(1/b,\dots,1/b)$.
To simplify notation, we set $T=T_{b,q}$ throughput this proof.

  We first transform our term using the exponential function, which is strictly monotonic.
\begin{align}
  \Pb{T \leq b \log b - c\,b} = \Pb{e^{-sT} \leq e^{-s(b \log b - c\,b)}}\quad \text{for all } s>0
\end{align}
We next apply the Chernoff bound.
\begin{align}
\Pb{e^{-sT} \leq e^{-s(b \log b - c\,b)}} \leq \Ex{e^{-sT}} \;e^{s(b \log b - c\,b)}
\end{align}

To derive $\Ex{e^{-sT}}$, we observe that $T=\sum_{i=1}^b T_i$, where $T_i$ is the time between collecting the $(i-1)$-th unique coupon and the $i$-th unique coupon.
As all $T_i$ are independent, we obtain a product of Laplace-Stieltjes transforms.
\begin{align}
\label{eq:prodLPtrans}   \Ex{e^{-sT}} = \prod_{i=1}^b \Ex{e^{-sT_i}}
\end{align}
We derive the individual transforms.
\begin{align}
  \Ex{e^{-sT_i}} & = \sum_{k=1}^\infty e^{-s\;k} p_i (1-p_i)^{k-1}\\
  & = \frac{p_i}{e^{s}+p_i-1}
\end{align}
We plug the coupon probabilities $p_i = 1-\frac{i-1}{b} = \frac{b-i+1}{b}$ into Eq.~\eqref{eq:prodLPtrans}, and simplify by reversing the product order.
\begin{align}
  \prod_{i=1}^b \Ex{e^{-sT_i}} = \prod_{i=1}^b \frac{(b-i+1)/b}{e^s+(b-i+1)/b-1} = \prod_{j=1}^b \frac{j/b}{e^s+j/b-1}
\end{align}

Finally, we choose $s=\frac{1}{b}$, which yields $e^s=e^{1/b}\geq 1+1/b$ and simplifies the product.
\begin{align}
  \prod_{j=1}^b \frac{j/b}{e^s+j/b-1} \leq  \prod_{j=1}^b \frac{j/b}{1/b+j/b} = \frac{1}{b+1}
\end{align}
This gives the statement of the lemma.
\begin{align}
  \Pb{T \leq b \log b - c\,b} \leq \frac{1}{b+1} \;e^{\frac{b \log b - c\,b}{b}} < e^{-c}
\end{align}

\end{proof}

\end{document}